\pdfoutput=1
\documentclass[a4paper,fleqn,longmktitle]{cas-sc}

\usepackage[authoryear,longnamesfirst]{natbib}

\usepackage{hyperref}
\usepackage{url}
\usepackage{amsmath,amssymb,amsfonts,amscd,amsthm}
\usepackage{mathrsfs}
\usepackage{algorithmic}
\usepackage{graphicx}
\usepackage{textcomp}
\usepackage{xcolor}
\usepackage{pifont}
\usepackage{verbatim}
\usepackage{threeparttable}
\usepackage{tabularx}
\usepackage{booktabs}
\usepackage{rotating}
\usepackage{wrapfig}
\usepackage{comment}
\usepackage{calc}
\usepackage{multirow}
\usepackage{array}
\usepackage{multicol}
\usepackage{ragged2e}
\usepackage{pgfplots}
\pgfplotsset{compat=1.18}
\usetikzlibrary{pgfplots.groupplots}
\newcommand{\sym}[1]{\ifmmode^{#1}\else\(^{#1}\)\fi}
\usepackage{caption}
\usepackage[english]{babel}
\usepackage{enumerate}
\usepackage{tikz}
\usepackage{bm}
\usepackage{csquotes}
\usepackage[useregional]{datetime2}
\usepackage{bbm}
\usetikzlibrary{arrows.meta, positioning, decorations.pathreplacing}
\usepackage[export]{adjustbox}
\usetikzlibrary{shapes, arrows.meta, positioning, fit, backgrounds, calc, shadows}
\usetikzlibrary{shapes.geometric}
\usetikzlibrary{intersections}
\usepackage{tcolorbox}
\usepackage{etoolbox}
\usepackage{eurosym}
\usepackage{longtable}
\usepackage{enumitem}
\ExplSyntaxOn
\cs_if_exist:NF \vbox_unpack_clear:N
  { \cs_new_eq:NN \vbox_unpack_clear:N \vbox_unpack_drop:N }
\ExplSyntaxOff

\definecolor{isoBlue}{RGB}{0, 114, 178}
\definecolor{costRed}{RGB}{213, 94, 0}
\definecolor{constrGreen}{RGB}{0, 158, 115}

\definecolor{valBlue}{RGB}{0, 114, 178}
\definecolor{profGold}{RGB}{230, 159, 0}
\definecolor{stateGray}{RGB}{100, 100, 100}

\definecolor{econBlue}{RGB}{31, 119, 180}
\definecolor{econRed}{RGB}{214, 39, 40}
\definecolor{econGold}{RGB}{188, 143, 46}

\definecolor{fossilRed}{RGB}{196, 69, 54}
\definecolor{fossilRedLight}{RGB}{248, 229, 227}
\definecolor{reGreen}{RGB}{46, 125, 50}
\definecolor{reGreenLight}{RGB}{232, 245, 233}
\definecolor{primaryBlue}{RGB}{26, 54, 93}
\definecolor{primaryBlueLight}{RGB}{232, 238, 245}
\definecolor{accentOrange}{RGB}{217, 119, 6}
\definecolor{mutedGray}{RGB}{73, 80, 87}
\definecolor{surfaceGray}{RGB}{248, 249, 250}

\tikzset{
    econAxis/.style={thick, ->, >=stealth},
    supply/.style={thick, econRed},
    demand/.style={thick, econBlue},
    priceLine/.style={thin, gray, densely dashed},
}

\theoremstyle{plain}
\newtheorem{proposition}{Proposition}

\theoremstyle{definition}
\newtheorem{definition}{Definition}
\newtheorem{assumption}{Assumption}

\newcommand{\E}{\mathbb{E}}
\newcommand{\Var}{\operatorname{Var}}
\newcommand{\Cov}{\operatorname{Cov}}
\newcommand{\pos}[1]{\left(#1\right)^{+}}
\newcommand{\REC}{\mathrm{REC}}
\newcommand{\PPA}{\mathrm{PPA}}
\newcommand{\BTM}{\mathrm{BTM}}

\begin{document}
\let\WriteBookmarks\relax
\def\floatpagepagefraction{1}
\def\textpagefraction{.001}

\shorttitle{Empirical Impacts from AI-Driven Power Demand}

\shortauthors{Golden, Balasubramanian, Balasubramanian}

\title[mode = title]{Certificates without Electrons? Theory and Evidence on Impacts from AI-Driven Power Demand}

\author[1]{Dana Golden}
\cormark[1]
\ead{dana.golden@stonybrook.edu}
\credit{Conceptualization, Methodology, Software, Data Curation, Formal Analysis, Writing -- Original Draft, Writing -- Review \& Editing, Visualization}

\affiliation[1]{organization={Department of Economics, Stony Brook University},
            city={Stony Brook},
            postcode={11794},
            state={New York},
            country={USA}}

\author[2]{Aruna Balasubramanian}
\ead{arunab@cs.stonybrook.edu}
\credit{Conceptualization, Writing -- Review \& Editing, Supervision}

\affiliation[2]{organization={Department of Computer Science, Stony Brook University},
            city={Stony Brook},
            postcode={11794},
            state={New York},
            country={USA}}

\author[2]{Niranjan Balasubramanian}
\ead{niranjan@cs.stonybrook.edu}
\credit{Conceptualization, Writing -- Review \& Editing, Supervision}

\cortext[1]{Corresponding author}

\begin{abstract}
Data centers now account for 4.4\% of United States electricity demand, yet the grid-level effectiveness of the renewable energy certificates (RECs) and power purchase agreements (PPAs) hyperscalers use to claim carbon neutrality remains unclear. We develop a game-theoretic model in which a data center operator chooses among RECs, PPAs, and behind-the-meter colocation while generators make entry decisions under endogenous financing costs. The model identifies a timing wedge—the mismatch between consumption and credited renewable generation—as a central mechanism through which AI demand degrades reliability, raises prices, and increases emissions even when RECs cover 100\% of annual consumption. Colocation with storage addresses this wedge directly and induces the greatest renewable entry by eliminating generator revenue risk.
We test these predictions by exploiting the staggered release of large language models as a natural experiment, using difference-in-differences on a novel dataset linking AI activity to local grid outcomes. AI demand significantly increases fossil generation, wholesale prices (up to 25\% in treated PJM zones), and outage frequency (0.5–1 additional outages per year) near data centers, with impacts scaling in model size. Data centers with on-site generation exhibit a sign reversal in power-quality effects, consistent with the model's prediction that behind-the-meter capacity absorbs demand spikes. Counterfactual analyses show that edge inference, spatial reallocation, and colocated storage each substantially mitigate grid impacts, while REC-only strategies do not. Together, our results demonstrate that the externalities of AI to the grid are tightly coupled to procurement design and the spatial organization of data center infrastructure.
\end{abstract}

\begin{highlights}
\item First causal estimates of AI model releases on U.S.\ power grid stability using staggered difference-in-differences
\item AI activity degrades power quality equivalent to 0.5--1 additional outages per year and increases nearby fossil generation by hundreds of GWh
\item Wholesale electricity prices increase up to 25\% in treated PJM zones during AI model activity periods
\item On-site colocated generation and distributed edge inference substantially mitigate grid impacts
\end{highlights}

\begin{keywords}
Artificial intelligence \sep Power grids \sep Electricity markets \sep Reliability \sep Power quality \sep Econometrics \sep Difference-in-differences
\end{keywords}

\maketitle

%
\section{Introduction}

\begin{figure}
    \centering
    \includegraphics[width=.8\linewidth]{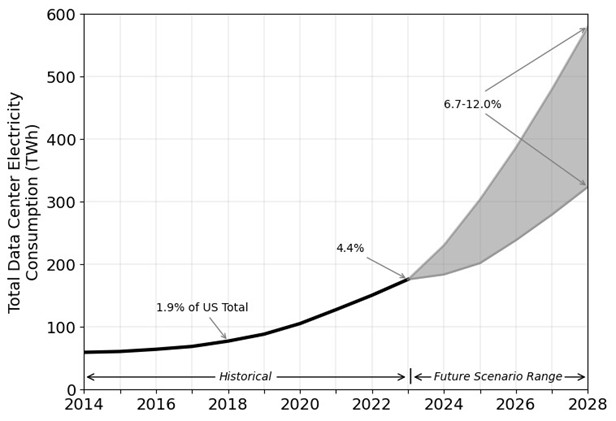}
    \caption{Projected data center growth over time. Source: DOE.}
    \label{fig:DataCenterGrowth}
    \vspace{-2pt}
\end{figure}

Over the past six years, data centers have expanded from consuming 1.9\% to 4.4\% of total U.S. electricity demand (Figure \ref{fig:DataCenterGrowth}, DOE). This rapid growth is driven largely by artificial intelligence and follows two decades of flat electricity consumption and coincides with nationwide efforts to electrify transport and heating. The combination poses a challenge: grid infrastructure was planned for stable demand, not simultaneous load growth from multiple sectors. Recent anecdotal evidence points to data centers reducing local power quality and raising prices \cite{Nicoletti_Malik_Tartar_2024}, yet systematic empirical evidence on these grid-level impacts remains scarce.

The hyperscalers driving this growth almost uniformly describe themselves as carbon-neutral, and they do so by procuring renewable attributes: unbundled renewable energy certificates (RECs) and power purchase agreements (PPAs) sized to match their annual consumption. Whether these instruments deliver the physical and grid-level outcomes they advertise is a separate question from whether they balance an annual ledger. A certificate is matched to consumption over a year; the electrons a data center actually draws are dispatched in real time, hour by hour, from whatever is on the margin. When renewable availability and around-the-clock compute load are imperfectly aligned, a firm can be ``100\% renewable'' on paper while its incremental load is served, in the hours that matter, by fossil generation. We call this gap between credited generation and consumed electrons the \emph{timing wedge}, and we argue it is the central mechanism through which AI demand raises emissions, raises prices, and degrades reliability for other grid users.

To make the wedge measurable and to organize the evidence, we develop a compact game-theoretic model (Section~\ref{sec:model}) in which a data center operator chooses among RECs, PPAs, and behind-the-meter colocation while renewable developers make entry decisions under endogenous financing costs. The model shows that the wedge survives full annual matching, that AI's spiky load amplifies it super-proportionally, and that the procurement instruments differ sharply in the renewable entry they induce: behind-the-meter colocation with storage closes the wedge and---by eliminating the generator's revenue risk---induces the most new capacity, whereas unbundled certificates induce essentially none. These results yield five testable predictions (Propositions~\ref{prop:wedge}--\ref{prop:spatial}) that we carry into the empirical analysis.

Existing work does not provide causal estimates of how AI deployments affect power systems at the {\em grid level}. Techno-economic assessments from international organizations broadly characterize rising AI workloads in the data center market \cite{RN1, RN3}. Micro-level studies examine energy performance of AI accelerators and identify workload management opportunities \cite{RN9, RN4}. A large computer science literature quantifies the compute, energy, and environmental costs of model development and deployment \cite{strubell19, schwartz2020green, RN7, RN5, morrison2025holistically}. These provide a foundation for understanding datacenter-level dynamics, but do not address the externalities experienced by other grid users—impacts on power quality, prices, and reliability that matter for energy security.

Quantifying and understanding these power grid level impacts is critical from the perspectives of energy reliability and security. Guided by the model, we seek to answer four specific empirical questions: 1) What are the effects of AI model releases on local power quality and power frequency? 2) How do AI training and inference affect electricity consumption around data centers owned by model developers? 3) How do these impacts translate into price effects in the wholesale and retail markets? Additional demand will likely increase prices, but understanding the magnitude of these price increases is important to determining the severity of the issue. 4) How will these impacts be different based on counterfactual scenarios articulating different technical evolutions of AI computing at scale? These scenarios include a shift away from cloud-based inference to edge-based computing, a geographic realignment of data centers away from population clusters, an evolution towards primarily inference-based or primarily training-based workloads, and a significant improvement in overall energy efficiency. Questions (1)--(3) test the wedge and amplification predictions (Propositions~\ref{prop:wedge} and~\ref{prop:ai-amplifies}); the counterfactuals in (4) quantify the mitigation channels of Propositions~\ref{prop:colocation}--\ref{prop:spatial}.

One way to answer the above questions is to extrapolate from energy usage measurements (micro measures) within a datacenter to system-level (macro measures) impacts on the grid. This is however, is near infeasible because it requires a great deal of specific internal information that isolates what is run within the datacenter and when. While this data with level of detail is not available (for proprietary and other practical reasons), there are data sources that provide high-level observational data about various aspects of power consumption and quality at the grid-level over the time periods when models were trained and released. \emph{How can we answer specific questions about the impact of AI models with this high-level observational data alone?}


We can draw upon econometric approaches for addressing this problem. In particular, we focus on the Difference-in-Differences (DiD) models which allows us to estimate the causal effects of AI model deployments on power systems, even when some contributing factors are unobservable. By comparing areas around data centers running AI models against control areas, and partitioning observations into pre- and post-release periods, we isolate AI-specific impacts while controlling for temporal and geographic factors. 


There are multiple technical challenges in applying this standard DiD approach in our setting: (i) There is a lack of specific information about when and where models are trained and released. Further model releases can be staggered, affecting the validity of the standard model which only uses one timepoint. (ii) The DiD method at its core relies on the parallel trends assumption i.e. we would know what would happen if the AI model were not run; in other words the estimates are valid only if we have a fairly accurate estimate of the trend of the variable of interest over time. (iii) There could be other confounding factors contributing to change in the variable of interest (e.g. extreme weather events coinciding with model release times could also cause increased power demand). (iv) No one clean dataset exists including all data useful for analysis. Often data are at different scales, lack linking tables, or require significant processing prior to use for analysis. In some cases, this requires significant efforts for manual mapping, combination, and labeling.

Our solution methodology addresses the above challenges:\\
\noindent \textbf{1) Addressing Inexact Information and Staggered Releases} 
We use a \textit{Staggered DiD} approach with multiple horizons, which has been used to overcome similar challenges in other applications ~\cite{cengiz2019effect, deshpande2019screened}. Staggered DiD with multiple horizons extends the standard two-period (pre/post), two-group setup to settings where treatment activates for different units at different times, leveraging this variation in rollout timing to assess different potential points of impact—analogous to how a staged feature rollout lets you compare early adopters against users still in the control group at each point in time.
For the question on fossil generation effects, we address confounds such as alternative fuel prices or generator efficiency issues using a two stage regression, where we first factor out the impact of these confounds, and explain the remaining demand using a second stage staggered DiD. Similarly, for estimating price effects, we need to isolate effects from other sources of power demand that can cause price changes.
We handle confounds such as weather-driven and industry production driven shifts in demand (and thus price) via a two-stage regression. 

\noindent\textbf{2) Addressing validity of the DiD method and other confounds} We conduct a wide-array of statistical robustness checks supported by integration of large amounts of data from a diverse array of sources. We run statistical tests to show that we cannot reject the null hypothesis of parallel trends. We also run placebo tests with random treatment dates and randomized locations to show that our treatment effects are not the result of randomness, and we vary treatment definitions geographically and temporally. We use time series testing for anomaly detection and structural breaks to show that the differences we find in our regressions show up in model free analysis. Together, these robustness checks provide evidence of an experiment that is valid across all relevant dimensions.

\noindent\textbf{3) Addressing data disparity and relevance issues} 
Training dates and facilities are generally unobserved, and for inference—which occurs continuously—these dates are not well-defined. We therefore rely primarily on model release dates. To strengthen identification, we supplement this with training locations, training dates, and API release dates gathered from press releases and news articles. We obtain verified training information for four models and confirmed API release dates for over half of the foundational models in our sample. We construct estimates using this most robust subset, then expand our model definitions to exploit exogenous variation for counterfactual analysis. Throughout, we use the most conservative feasible treated subset for each analysis. To construct suitable control groups, we employ propensity score matching as detailed in Appendix Subsection \ref{APP:PSM}, excluding observations without comparable pre-treatment trends.

We apply this methodology using publicly available data and proprietary on datacenter locations, power grid conditions, AI model release dates, and controls including weather and market prices. Our data creation process is described in detail in Section \ref{sectionData}. While all data pre-exist our analysis, the combination of the data creates a novel dataset that represents a significant contribution. Because the data are from government agencies, ISOs, or proprietary datasets, the data are highly trustworthy. We perform data validation to verify this. We estimate DiD models for outcomes measuring both power quality and electricity demand.

Our findings presented in Section \ref{Sec::Results} speak to the large magnitude of the impacts of AI data centers and the increasing effects of training ever-larger models over time. We find that large model releases (including the likes of GPT-3, Claude 2.1, Llama 2 documented in~\autoref{appendix:verified-models}) 
cause significant power quality deterioration in nearby areas -- \textbf{well above half the standard deviation of U.S. power-quality distributions}-- and increase nearby fossil generation on the order of terawatt-hours--- \textbf{equivalent to around 45,000 households annual consumption}-- in both training and inference phases. We also find \textbf{wholesale electricity prices increase in treated PJM zones with increases of up to 25\%} during treatment in the most affected PJM zone (Table~\ref{tab:combined_results}, Panel C).
Using these estimates, we show in Subsection \ref{sec:counterfactuals} how altering the trajectory of AI along disparate dimensions impacts the grid. We are able to provide evidence for the separate impacts of training and inference, the effects of a shift to edge computing or more concentrated training loads, and how differential efficiency improvements impact AI's effects on power usage. We also show compelling evidence that significantly increasing on-site power generation provides a promising avenue for future reductions in grid impact.

This work makes four contributions. First, we provide a tractable theoretical framework---a procurement-and-entry game built around the timing wedge---that formalizes why annual renewable matching can leave a data center's grid externalities intact and that ranks REC, PPA, and colocation strategies by their effect on renewable entry. Second, we develop a methodology for assessing macro-level grid impacts of AI models using publicly available data, even when fine-grained operational information is unavailable. Third, we provide the first causal estimates of these impacts for major frontier models, complementing the emerging literature on environmental and system-level effects of AI data centers \cite{murino2023sustainable, guidi2024environmental, thangam2024impact}; among these, the sign reversal in power-quality effects for facilities with on-site generation directly confirms the model's central mechanism. Finally, we provide policy-relevant counterfactual estimates for the impacts of the technical evolution of AI on the grid, mapping each to a mitigation channel the model identifies.

\section{Related Work}

This paper contributes to five interrelated literatures. First, techno-economic assessments document the scale of data center electricity demand: Lawrence Berkeley estimates U.S. data center consumption could reach 6.7--12\% of national electricity by 2028 \cite{shehabi2024}, while utility five-year peak demand forecasts jumped from 38 GW to 128 GW between 2023 and 2024, with approximately 90 GW attributable to data centers \cite{gridstrategies2025}. Recent simulation studies examine AI-driven load growth implications for grid planning \cite{lin2024exploding, chien2023adapting}, but these assessments characterize aggregate trends rather than isolating causal mechanisms.

Second, computer science research quantifies AI's computational and environmental costs, from training emissions \cite{strubell19, schwartz2020green} to debates over efficiency improvements \cite{RN2, RN8} and inference-time scaling \cite{kim2025cost}, including data center carbon emissions \cite{datacenter-carbon}. Related work has developed carbon intensity forecasting methods \cite{maji2022dacf, li2023gnn} and examined power forecasting for grids \cite{optimizing-grid, carbon-intensity-forecasting}, but not in the context of AI workload impacts on market outcomes. Recent empirical projections analyze the tension between AI energy growth and grid decarbonization \cite{maji2024crossroads}, while others consider grid planning for AI demand \cite{ai-grid-planning}, yet the causal effect of AI deployment on wholesale markets remains unstudied.

Third, emerging evidence points to grid-level externalities from data center concentration. Sensor data reveals spatial correlation between data center proximity and power quality degradation \cite{nicoletti2024}, PJM capacity prices increased from \$30 to \$270/MW-day between 2023--2024 amid data center load growth \cite{cmu2025}, and transmission costs are increasingly socialized across ratepayers \cite{jacobs2025}. The systems literature has examined carbon-aware geographical load shifting using locational marginal prices \cite{lindberg2021guide} and data center demand response participation \cite{chen2019datacenter, liu2014pricing, klingert2018mapping}. These studies establish correlational patterns or propose optimization frameworks but lack causal identification.

Fourth, the econometric literature on electricity markets provides methodological foundations, including difference-in-differences for demand responses \cite{roth2024did} and event studies for policy interventions \cite{pnnl2022}, but has not applied quasi-experimental methods to AI deployment impacts.

Fifth, a growing economics and policy literature questions whether voluntary clean-energy procurement delivers the physical outcomes it claims. The central concerns are \emph{additionality}---whether purchasing renewable attributes induces new generation or merely reallocates the credit for capacity that would have existed anyway---and \emph{temporal matching}---whether annual volumetric accounting, as opposed to hourly matching, reflects the emissions actually displaced. We connect these debates to the AI grid-impact question by embedding procurement choice in a model where the same annual carbon ledger can be reached by instruments with opposite consequences for entry and for the residual fossil generation the grid dispatches. To our knowledge this is the first treatment to tie the timing wedge directly to measured grid-level externalities of AI demand.

We address this gap by treating AI model releases as plausibly exogenous demand shocks, using difference-in-differences to estimate causal effects on locational marginal prices and power quality. Unlike prior work building from computational requirements or facility audits, we analyze grid-level impacts using wholesale market data, capturing realized demand effects without requiring proprietary operational information.

\section{A Model of Procurement, Entry, and the Timing Wedge}\label{sec:model}

A renewable energy certificate is a claim about an electron that has already gone wherever physics sent it. When a hyperscaler retires certificates equal to its annual consumption, it has balanced a ledger denominated in megawatt-hours per year; it has not necessarily changed which generator spun to serve its load on a windless evening. The distance between the ledger and the load is the object of this section. We formalize it as a \emph{timing wedge}, show how it survives even ``100\% renewable'' procurement, and trace how it converts AI demand into higher emissions, higher prices, and degraded reliability for other grid users.

The model serves three purposes. First, it defines the wedge precisely enough to be measured, so that the reduced-form estimates in Sections~\ref{Sec::Results}--\ref{sec:counterfactuals} acquire structural meaning rather than standing as isolated correlations. Second, it ranks the procurement instruments hyperscalers actually use---unbundled certificates, power purchase agreements, and behind-the-meter colocation---by their effect on new renewable entry, clarifying why two strategies that look identical on an annual carbon ledger can have opposite consequences for the grid. Third, it generates a small set of comparative statics (Propositions~\ref{prop:wedge}--\ref{prop:spatial}) that map one-to-one onto the empirical tests and counterfactual exercises that follow. We keep the model deliberately spare: the empirical contribution is the paper's center of gravity, and the theory earns its place by disciplining what the data are asked to show.

\subsection{Environment}\label{subsec:model-environment}

Let $(\Omega,\mathcal{F},\mathbb{P})$ be a probability space whose states $\omega$ index operating conditions (``hours'') over a representative procurement cycle. Two primitives vary across states:
\begin{itemize}
    \item \textbf{Renewable availability} $a:\Omega\to[0,1]$, the capacity factor of the region's renewable resource (e.g., the fraction of nameplate wind or solar capacity producing in state $\omega$), with mean $\bar a=\E[a]$ and $\Var(a)=\sigma_a^2>0$.
    \item \textbf{Data center load} $\ell:\Omega\to\mathbb{R}_{+}$, decomposed as $\ell=\ell_0+\Delta$, where $\ell_0$ is pre-AI baseline load and $\Delta\ge 0$ is incremental AI load. Write $\mu_\ell=\E[\ell]$ and $\sigma_\ell^2=\Var(\ell)$.
\end{itemize}
Intermittency means $\sigma_a^2>0$; the engineering of AI workloads---episodic training runs and bursty inference---means $\Delta$ is more volatile than $\ell_0$ and need not move with $a$. We summarize the alignment of supply and demand by the load--availability covariance $\Cov(\ell,a)$, which is typically weak or negative for solar-heavy regions serving round-the-clock compute.

A renewable plant of nameplate capacity $K$ (MW) produces $g(\omega)=K\,a(\omega)$. Storage with usable energy $S$ can shift generation across states within the cycle subject to an energy-balance constraint, charging when generation exceeds load and discharging otherwise; we treat storage as lossless for the baseline results and note the round-trip efficiency adjustment where it matters.

\paragraph{The residual and its three externalities.} Let $m(\omega)\ge 0$ denote clean energy \emph{delivered to and consumed by} the data center in state $\omega$ under a given procurement strategy. The load not met by clean delivery is served by the grid's marginal supplier:
\begin{equation}
    r(\omega)=\pos{\ell(\omega)-m(\omega)}.
\end{equation}
In the U.S. markets we study, the marginal unit in high net-load states is overwhelmingly fossil-fired \citep{cmu2025,PJM_DataMiner2_2025}, so the residual $r$ is what couples the data center to the grid's emissions, price, and reliability. We model three channels:
\begin{enumerate}
    \item \textbf{Emissions.} Marginal generation carries emission rate $e>0$, so expected emissions attributable to the residual are $\mathcal{E}=e\,\E[r]$.
    \item \textbf{Price.} Fossil marginal cost is increasing in dispatched quantity; over the operating range PJM's fossil supply is approximately affine \citep{PJM_DataMiner2_2025}, $P(Q)=c_0+c_1 Q$ with $c_1>0$. The residual $r$ adds to system net load $Q$, raising the market-clearing price. With supply elasticity $\varepsilon^s$, a demand shift $\Delta Q^d$ produces a percentage price change $\%\Delta p=\%\Delta Q^d/\varepsilon^s$---the relation we take to the data in Section~\ref{PriceEffects}.
    \item \textbf{Reliability.} Power-quality degradation rises in both the level and the volatility of the residual the local network must balance:
    \begin{equation}
        \Phi=\phi_1\,\E[r]+\phi_2\,\Var(r),\qquad \phi_1,\phi_2\ge 0.
        \label{eq:reliability}
    \end{equation}
    The level term captures thermal and voltage stress from sustained high net load; the variance term captures the ramps and switching transients that drive frequency excursions and harmonic distortion---the CPQI and THD outcomes of Section~\ref{subsec:power-quality-results}.
\end{enumerate}

\subsection{The Timing Wedge}\label{subsec:wedge}

\begin{definition}[Timing wedge]\label{def:wedge}
Under a procurement strategy delivering clean energy $m(\cdot)$, the \emph{timing wedge} is the expected unmatched consumption
\begin{equation}
    W \;=\; \E\!\left[\pos{\ell-m}\right]\;=\;\E[r].
\end{equation}
\emph{Volumetric (annual) matching} is the constraint $\E[m]=\mu_\ell$: certificates retired equal total consumption. \emph{Physical (hourly) matching} is the stronger requirement $m(\omega)\ge\ell(\omega)$ almost surely, under which $W=0$.
\end{definition}

The wedge makes precise the sense in which a firm can be ``100\% renewable'' on paper and fossil-dependent in fact. Decompose the residual using $\pos{x}=x+\pos{-x}$:
\begin{equation}
    W=\E\!\left[\pos{\ell-m}\right]=\underbrace{\big(\mu_\ell-\E[m]\big)}_{\text{certificate deficit}}+\underbrace{\E\!\left[\pos{m-\ell}\right]}_{\text{over-delivery}}.
    \label{eq:wedge-identity}
\end{equation}

\begin{proposition}[Certificates without electrons]\label{prop:wedge}
Suppose a strategy achieves volumetric matching, $\E[m]=\mu_\ell$. Then the physical wedge equals the expected over-delivery,
\begin{equation}
    W^{\mathrm{annual}}=\E\!\left[\pos{m-\ell}\right]\;\ge\;0,
\end{equation}
with $W^{\mathrm{annual}}>0$ whenever $\Pr(m<\ell)>0$. Consequently the residual fossil generation $\E[r]=W$, expected emissions $e\,W$, and the reliability cost $\Phi\ge\phi_1 W$ are all strictly positive even under full annual coverage, and are increasing in the misalignment between clean delivery and load.
\end{proposition}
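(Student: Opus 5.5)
The plan is to obtain everything from the pointwise identity $\pos{x}=x+\pos{-x}$, applied with $x=\ell(\omega)-m(\omega)$ and integrated, which is exactly the decomposition \eqref{eq:wedge-identity}. First I will check integrability: $\ell$ and $m$ have finite means, since $\mu_\ell$ is finite and volumetric matching sets $\E[m]=\mu_\ell$. Hence $\pos{\ell-m}\le \ell$ and $\pos{m-\ell}\le m$ are integrable, and linearity of expectation is legitimate. Substituting $\E[m]=\mu_\ell$ kills the certificate-deficit term and leaves $W^{\mathrm{annual}}=\E[\pos{m-\ell}]$. Nonnegativity is immediate because the integrand is nonnegative.

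For strict positivity I will argue directly from the definition rather than from the over-delivery form. On the event $A=\{m<\ell\}$ the integrand $\pos{\ell-m}=\ell-m$ is strictly positive. If $\Pr(A)>0$, then $\E[\pos{\ell-m}]\ge\E[(\ell-m)\mathbf 1_A]>0$, because a nonnegative function that is strictly positive on a set of positive measure has a strictly positive integral. By the identity this equals $\E[\pos{m-\ell}]$. Read the other way, volumetric matching means any shortfall hour must be offset by an over-delivery hour of equal expected mass.

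The consequences then follow in order. We have $\E[r]=W$ by Definition~\ref{def:wedge}, and $\mathcal E=e\,W>0$ since $e>0$. For reliability, \eqref{eq:reliability} gives $\Phi=\phi_1 W+\phi_2\Var(r)\ge\phi_1 W$, because $\phi_2\ge0$ and the variance is nonnegative. Strict positivity of $\Phi$ requires $\phi_1>0$, or else $\phi_2>0$ together with a nondegenerate residual. I will state that caveat explicitly, since the proposition's wording implicitly assumes it.

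The main obstacle is the final clause, that these quantities are ``increasing in the misalignment.'' This needs a precise notion of misalignment, and I will fix one. The first option is the $L^1$ mismatch $\E|\ell-m|$. Under $\E[m]=\mu_\ell$ we have $\E|\ell-m|=\E[\pos{\ell-m}]+\E[\pos{m-\ell}]=2W$, so $W=\tfrac12\E|\ell-m|$ is exactly linear and increasing in this measure, and $e\,W$ and the bound $\phi_1 W$ inherit the monotonicity. If the paper intends a covariance notion instead, I will treat the case $m=\kappa a$ with $\kappa=\mu_\ell/\bar a$. Holding the marginals fixed, $\E[\pos{\ell-\kappa a}]$ is a convex function of $\ell-\kappa a$, and $\Var(\ell-\kappa a)=\sigma_\ell^2+\kappa^2\sigma_a^2-2\kappa\Cov(\ell,a)$ decreases in $\Cov(\ell,a)$. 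Monotonicity in $\Cov(\ell,a)$ then holds in the convex (concordance) order sense, or in general under a Gaussian approximation where $W=\sigma\,\varphi(0)$ grows in $\sigma=\mathrm{sd}(\ell-m)$. I will present the $L^1$ statement as the rigorous claim and the covariance version as the interpretable special case.
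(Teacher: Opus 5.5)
Your argument is correct, and its core is the paper's: you integrate $\pos{x}=x+\pos{-x}$ at $x=\ell-m$ so that matching kills the certificate-deficit term. Your integrability check is a detail the paper skips. The routes differ on the two secondary claims.

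For strictness, the paper works through the over-delivery form: $\E[m-\ell]=0$, together with $m-\ell<0$ on a non-null set, forces $\Pr(m>\ell)>0$. You instead read positivity directly off the definition, since $\pos{\ell-m}>0$ on $\{m<\ell\}$. Your route makes plain that $W>0$ needs no matching at all; matching only identifies $W$ with over-delivery. The paper's route yields the extra, economically meaningful fact that surplus hours $\{m>\ell\}$ must occur with positive probability.

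For monotonicity, the paper formalizes misalignment as a mean-preserving spread of $m-\ell$ at fixed $\E[m]$ and invokes convexity of the positive part. This is the same Rothschild--Stiglitz step it reuses in Proposition~\ref{prop:ai-amplifies}. Your identity $\E|\ell-m|=2W$ under matching is an exact scalar characterization. It in fact implies the paper's claim, because a mean-preserving spread weakly raises $\E|\ell-m|$ by convexity of $|\cdot|$.

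In your covariance special case, the variance formula by itself does not order $W$. What does is the following: with marginals fixed, a concordance (supermodular-order) increase in $(\ell,a)$ makes $\ell-\kappa a$ smaller in convex order.

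Two sharpenings of your caveats:
\begin{enumerate}
\item Using the paper's observation that $\Pr(m>\ell)>0$, the residual $r$ is zero on a non-null set and positive on another. So $\Var(r)>0$ automatically, and $\Phi>0$ whenever $\phi_1+\phi_2>0$.
\item You are right to claim monotonicity only for the bound $\phi_1 W$. $\Var(r)$ can fall under a mean-preserving spread that raises $W$, and the paper's proof does not address this point either.
\end{enumerate}
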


The mechanism is conservation, not accounting fraud. A certificate balances a sum over states; it does not balance the integrand state by state. Renewable energy generated when the data center needs less than it produces (the over-delivery term) is exported or curtailed and cannot retroactively offset the fossil generation incurred in the deficit states. Annual matching zeroes the \emph{net} balance while leaving the \emph{gross} physical mismatch $\E[\pos{m-\ell}]$ untouched---and it is the gross mismatch that the grid feels. This is the formal content of ``certificates without electrons,'' and it is why the rest of the paper measures realized grid outcomes rather than contracted megawatt-hours.

\subsection{Procurement Instruments}\label{subsec:instruments}

A data center operator must serve $\ell(\omega)$ in every state and may wish to satisfy a neutrality target $\E[m]\ge\mu_\ell$. It chooses among three instruments, which differ in how they shape delivery and in how they allocate the renewable plant's revenue risk.

\paragraph{(i) Unbundled certificates (REC-only).} The operator draws grid power in real time and retires unbundled certificates equal to $\mu_\ell$. Certificates do not shape delivery: the operator's incremental load arrives on the system unshaped, so its incremental residual is $r^{\REC}=\ell$ in net-load states and
\begin{equation}
    W^{\REC}=\mu_\ell.
\end{equation}
This is the maximal wedge in Proposition~\ref{prop:wedge}: the certificate balance is zero, the physical wedge is the entire load. Unbundled certificate prices typically clear well below the levelized entry cost of new capacity and accrue largely to inframarginal, already-built renewables; we encode this stylized fact as weak additionality.

\paragraph{(ii) Power purchase agreement (PPA).} The operator contracts the entire output of a \emph{new} plant of capacity $K_{\PPA}$ at strike price $s$, settled as a contract-for-differences. The contract firms the plant's revenue and so induces entry, but the plant's output $K_{\PPA}\,a(\omega)$ is intermittent and unshaped; the operator still draws grid power in real time and the PPA settles financially. Crediting contracted generation against load state by state, delivered clean energy is $m^{\PPA}=\min(K_{\PPA}a,\ell)$ and, sizing the PPA to annual consumption ($K_{\PPA}=\mu_\ell/\bar a$ so that $\E[m^{\PPA}]\to\mu_\ell$),
\begin{equation}
    W^{\PPA}=\E\!\left[\pos{\ell-K_{\PPA}a}\right]\in\big(0,\,W^{\REC}\big).
\end{equation}
The PPA closes part of the wedge---generation that happens to coincide with load---but leaves a residual equal to the intermittency-driven shortfall, strictly positive because $\Pr(K_{\PPA}a<\ell)>0$.

\paragraph{(iii) Behind-the-meter colocation with storage (BTM).} The operator builds capacity $K_{\BTM}$ on-site, paired with storage $S$, and consumes the output behind the meter. Storage reallocates generation across states to track load. If the plant is energy-adequate ($K_{\BTM}\bar a\ge\mu_\ell$) and storage is sufficient to buffer the intra-cycle mismatch, delivery can satisfy $m^{\BTM}(\omega)\ge\ell(\omega)$ up to the storage limit, so
\begin{equation}
    W^{\BTM}\to 0.
\end{equation}
Because the load sits behind the meter, the operator's withdrawal from the grid, $\pos{\ell-m^{\BTM}}$, falls toward zero: the residual is removed from the \emph{grid's} balancing problem, and on-site capacity providing local voltage support can absorb the spikes $\Delta$ that would otherwise propagate downstream.

The instruments are ordered by the wedge they leave:
\begin{equation}
    W^{\BTM}<W^{\PPA}<W^{\REC}=\mu_\ell.
    \label{eq:wedge-order}
\end{equation}

\subsection{Entry under Endogenous Financing}\label{subsec:entry}

Renewable projects are capital-intensive and largely pre-paid; their economics turn on the cost of capital, which in project finance is governed by the certainty of contracted cash flows. We make this explicit. A competitive developer builds capacity $K$ at unit cost $\kappa$ and earns risky revenue $R_\theta(K)$ under procurement regime $\theta\in\{\REC,\PPA,\BTM\}$. Investors require a return that rises with cash-flow risk:
\begin{equation}
    \rho(\theta)=\rho_0+\lambda\,\mathrm{CV}\!\left(R_\theta\right),\qquad \lambda>0,
    \label{eq:cost-of-capital}
\end{equation}
where $\mathrm{CV}(\cdot)$ is the coefficient of variation of revenue and $\lambda$ is the price of risk---a reduced form for the standard project-finance result that merchant exposure raises spreads and lowers achievable leverage relative to contracted revenue. Per-MW expected revenue declines in aggregate entry, $\bar R_\theta(K)$ with $\bar R_\theta'(K)<0$ (value deflation as penetration rises). Free entry pins down capacity by the zero-profit condition
\begin{equation}
    \frac{\bar R_\theta(K_\theta)}{\rho(\theta)}=\kappa
    \quad\Longrightarrow\quad
    K_\theta=\bar R_\theta^{-1}\!\big(\rho(\theta)\,\kappa\big),
    \label{eq:entry}
\end{equation}
so that, holding the revenue schedule fixed, a lower required return supports more entry.

The regimes differ in how revenue risk is allocated, which we state as the model's one substantive contractual assumption.

\begin{assumption}[Risk allocation across instruments]\label{ass:risk}
Revenue risk is ordered $\mathrm{CV}(R^{\BTM})<\mathrm{CV}(R^{\PPA})<\mathrm{CV}(R^{m})$, where $R^{m}$ denotes merchant (uncontracted) revenue. Unbundled certificates do not underwrite a specific project's cash flow, so new entry built ``for'' certificates faces merchant risk; a pay-for-output PPA hedges price risk but leaves the generator bearing quantity (intermittency) risk; a behind-the-meter colocation with firm offtake transfers both price and quantity risk to the operator, which values firmed on-site power for uptime.
\end{assumption}

This ranking is the realistic one: a merchant plant lives on volatile wholesale prices times volatile output; a PPA fixes the price but not the megawatt-hours; an on-site availability or tolling arrangement fixes the payment regardless of weather. Combining \eqref{eq:cost-of-capital}--\eqref{eq:entry} with Assumption~\ref{ass:risk} yields the entry ranking.

\begin{proposition}[Procurement determines entry]\label{prop:entry}
Under Assumption~\ref{ass:risk}, induced renewable capacity is ordered
\begin{equation}
    K^{\BTM}>K^{\PPA}>K^{\REC}\approx 0.
\end{equation}
Behind-the-meter colocation induces the most entry per dollar of procurement because it minimizes the financing risk premium; unbundled certificates induce essentially none because their price accrues to inframarginal capacity and any new entry built against them carries full merchant risk.
\end{proposition}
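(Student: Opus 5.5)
The plan is a monotone comparative statics argument on the free-entry condition \eqref{eq:entry}. First I will show that $\rho(\theta)$ inherits the ordering in Assumption~\ref{ass:risk}. Since $\lambda>0$ and \eqref{eq:cost-of-capital} is strictly increasing in $\mathrm{CV}(R_\theta)$, the ranking $\mathrm{CV}(R^{\BTM})<\mathrm{CV}(R^{\PPA})<\mathrm{CV}(R^{m})$ gives $\rho(\BTM)<\rho(\PPA)<\rho(m)$. For the REC regime I will invoke the second clause of Assumption~\ref{ass:risk}: certificates do not underwrite a specific project's cash flow, so any project built against them faces merchant revenue. I therefore set $\mathrm{CV}(R^{\REC})=\mathrm{CV}(R^{m})$ and hence $\rho(\REC)=\rho(m)$.

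Second, I will use that $\bar R_\theta$ is strictly decreasing, so $\bar R_\theta^{-1}$ is strictly decreasing on its range. Then the solution $K_\theta=\bar R_\theta^{-1}(\rho(\theta)\kappa)$ is strictly decreasing in the required return. This holds provided the comparison is made on a common revenue schedule, $\bar R_\theta\equiv\bar R$. That is the "holding the revenue schedule fixed" clause stated after \eqref{eq:entry}, and I will adopt it explicitly as the maintained comparison. The strict ordering of $\rho$ then delivers $K^{\BTM}>K^{\PPA}>K^{\REC}$.

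For $K^{\REC}\approx 0$ I will add the stylized weak-additionality fact from \S\ref{subsec:instruments}. Certificate revenue accrues to inframarginal capacity, so the incremental expected revenue a new entrant can attribute to the REC program is at most the merchant schedule. In the relevant regime, that schedule already satisfies $\bar R(0)\le\rho(m)\kappa$, meaning merchant entry is unprofitable at the margin absent contracting. The zero-profit condition then has a corner solution at (near) zero.

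The "per dollar of procurement" phrasing I will handle by normalising expected payments: comparing regimes at equal $\bar R$ per MW means equal procurement outlay, so the ranking is per dollar.

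The main obstacle is that the proposition is only as strong as the assumption that revenue schedules coincide across regimes. Under BTM, the operator's firm offtake may carry a different expected price than under a PPA, so $\bar R_\theta$ could differ. My first move is to state the result under the common-schedule normalization. I will then note a sufficient condition for robustness: $\bar R_{\BTM}\ge \bar R_{\PPA}\ge \bar R_{m}$ pointwise, which holds because firmed on-site power is valued for uptime. With that condition, the inverse-monotonicity argument goes through unchanged by a two-step inequality.
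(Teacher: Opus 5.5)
Your proposal is correct and is essentially the paper's proof. In both, $\rho(\theta)$ inherits the CV ranking with certificate-backed entry assigned merchant risk; a common schedule $\bar R$ with strictly decreasing inverse turns $\rho(\BTM)<\rho(\PPA)$ into $K^{\BTM}>K^{\PPA}$; and $K^{\REC}\approx 0$ is a corner that comes from the separate weak-additionality assumption that the certificate entrant's schedule lies below $\rho(\REC)\kappa$. (The paper imposes the common schedule only on PPA and BTM, which avoids your applying $\bar R^{-1}$ to REC and then declaring REC a corner; both arguments tacitly need $\bar R(0)>\rho(\PPA)\kappa$ so that $K^{\PPA}>K^{\REC}$ is strict.)

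One side remark needs fixing: a common schedule does not equalize outlays, but the ``per dollar'' clause needs no normalization anyway. At any interior zero-profit point $\bar R_\theta(K_\theta)=\rho(\theta)\kappa$, so capacity per dollar of expected revenue is $1/(\rho(\theta)\kappa)$, which is largest under BTM even without a common schedule.
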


\subsection{Equilibrium and Testable Predictions}\label{subsec:predictions}

Equilibrium is a procurement choice for the operator and an entry level for developers consistent with \eqref{eq:entry}, given availability, load, and the marginal-cost and reliability technologies. We do not require operators to internalize the residual's externalities; indeed the wedge persists precisely because private procurement (especially REC-only) and social cost diverge. The remaining three predictions follow; proofs are collected in Appendix~\ref{app:proofs}.

\begin{proposition}[AI load amplifies the wedge and its volatility]\label{prop:ai-amplifies}
Let incremental AI load $\Delta$ be added to baseline $\ell_0$ under a fixed delivery profile $m$. (i) The wedge is nondecreasing in the scale of $\Delta$. (ii) A mean-preserving spread of $\Delta$ (greater spikiness) weakly raises $W$ and strictly raises $\Var(r)$, hence the reliability cost \eqref{eq:reliability} through $\phi_2$ and the price externality through the variance of net load. (iii) Lower $\Cov(\Delta,a)$ raises $W$. Effects are therefore super-proportional in spikiness: doubling peak intensity raises grid impact by more than the increase in mean energy.
\end{proposition}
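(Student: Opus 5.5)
The plan is to write everything in terms of the convex, nondecreasing function $h(x)=\pos{x}$ applied to $x=\ell_0+\Delta-m$, with $m$ held fixed throughout.

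\emph{Part (i).} Interpret ``scale'' as replacing $\Delta$ by $t\Delta$ with $t\ge 0$. Since $\Delta\ge 0$, the map $t\mapsto \pos{\ell_0+t\Delta-m}$ is pointwise nondecreasing. Taking expectations, which preserves pointwise order, gives monotonicity of $W(t)$. The same pointwise argument also covers any first-order stochastic increase of $\Delta$ when it is coupled monotonically.

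\emph{Part (ii).} Model a mean-preserving spread as $\tilde\Delta=\Delta+\eta$ with $\E[\eta\mid \ell_0,\Delta,m,a]=0$ and $\eta\not\equiv 0$. Conditional Jensen applied to the convex $h$ gives
\[
\E\bigl[h(\ell_0+\tilde\Delta-m)\mid\cdot\bigr]\ge h(\ell_0+\Delta-m),
\]
so $W$ weakly rises. For the variance of $r$, use the law of total variance: $\Var(\tilde r)=\E[\Var(\tilde r\mid\cdot)]+\Var(\E[\tilde r\mid\cdot])$. The first term is strictly positive wherever the noise moves $x$ across positive values. The second term must be compared with $\Var(r)$, and that comparison is the main obstacle. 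Variance is not monotone under convex transforms in general, and if $\eta$ lives only where $x\ll 0$, then $r$ is unchanged. I would therefore add a nondegeneracy condition: the spread has positive mass on $\{\ell>m\}$, which is exactly the net-load states the paper cares about. The cleanest route is to restrict to spreads supported on the region $x>0$, where $h$ is the identity. There $\tilde r=r+\eta$, and $\Var(\tilde r)=\Var(r)+\E[\eta^2]>0$ directly, because $\Cov(r,\eta)=0$ by the conditional mean-zero property. I would state this as the operative reading of ``spikiness'' (peaks occurring in deficit states) and note that the weak inequality on $W$ holds in general. The consequences for $\Phi$ through $\phi_2$, and for price through the variance of $Q$ under the affine $P$, then follow immediately, since $r$ enters $Q$ additively.

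\emph{Part (iii).} Under PPA-type delivery $m=\min(Ka,\ell)$, or any $m$ that is nondecreasing in $a$, lowering $\Cov(\Delta,a)$ with fixed marginals means moving the coupling of $(\Delta,a)$ toward countermonotone. Since $\pos{\ell_0+\Delta-Ka}$ is supermodular in $(\Delta,-a)$, the standard supermodular-ordering (Lorentz/Tchen) result gives that expectation rises as the dependence of $\Delta$ on $-a$ increases in concordance order. I will state that ``lower covariance'' is understood in this concordance sense, because covariance alone does not order expectations of nonlinear functions.

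\emph{Super-proportionality.} Combine (i) and (ii). Doubling peak intensity while holding mean energy fixed is a mean-preserving spread plus at most a scale effect. The wedge's scale component grows at least proportionally in deficit states, and the added variance term $\phi_2\Var(r)$ is quadratic in the spike amplitude. Hence $\Phi$ grows faster than $\E[\Delta]$.
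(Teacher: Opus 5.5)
Your argument is correct. For (i) and the weak inequality in (ii) it follows the paper's proof: pointwise monotonicity of $\pos{\cdot}$, then convexity of $\ell\mapsto\pos{\ell-m}$. Where you depart, the changes are substantive rather than cosmetic.

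First, the paper applies Rothschild--Stiglitz to an unconditional spread of $\ell$, even though $m$ itself varies by state. That step needs your conditional formulation $\E[\eta\mid\ell_0,\Delta,m,a]=0$. An unconditional spread re-coupled with $m$ can lower $W$: take $\ell_0=0$, $\Delta\equiv 5$, $m\in\{0,10\}$ equally likely, and $\Delta'=m$; then $W$ falls from $2.5$ to $0$.

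Second, the paper asserts that $\Var(r)$ rises strictly whenever $\Pr(\ell>m)\in(0,1)$, because $r$ moves one-for-one with $\ell$ on the deficit region. That condition is not sufficient. With $x=\ell-m\in\{0,10\}$ equally likely, adding $\pm 1$ to $\Delta$ on the $x=0$ states lowers $\Var(r)$ from $25$ to about $22.7$. So your restriction of the spread to the region where $\pos{\cdot}$ is linear is the hypothesis the claim actually needs.

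Third, for (iii) the paper only says that lower covariance ``shifts the mass of $\Delta$ toward low-availability states.'' You replace this with supermodularity of $\pos{\ell_0+\delta-m(a)}$ in $(\delta,-a)$ plus the Tchen/Lorentz ordering. The cost is reading ``lower $\Cov(\Delta,a)$'' as a decrease in concordance order with fixed marginals, which, as you note, is what can actually be proved.

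In short, the paper states the result under weaker-looking hypotheses with heuristic steps. Your route proves a valid version under the hypotheses the argument requires.

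Three things to tighten.
\begin{itemize}
\item In (ii), ``supported on $\{x>0\}$'' must also require $x+\eta\ge 0$. A spread that starts in the deficit region but crosses zero is truncated, so $\tilde r=r+\eta$ fails and $\Var(r)$ can again fall. For example, with $x\in\{1,10\}$ equally likely and $x=1$ spread to $\{-1,3\}$, $\Var(r)$ goes from $20.25$ to about $19.2$. Also, your display should end in $>\Var(r)$, not $>0$.
\item In (iii), $\ell_0$ is random. Either assume it is independent of $(\Delta,a)$ and integrate it out (mixtures of supermodular functions stay supermodular), or apply the ordering conditionally on $\ell_0$.
\item In the super-proportionality step, the level channel cannot carry the claim. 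Since $\pos{x+d}-\pos{x}\le d$ for $d\ge 0$, the map $t\mapsto W(t)$ is convex but has slope at most $\E[\Delta]$. So the wedge never grows faster than mean energy in absolute terms. The super-proportional effect comes entirely from the variance term, which under your restriction equals $\Var(r)+s^2\E[\eta^2]$ for spike amplitude $s$. Your last sentence in effect says this, but the preceding ``at least proportionally'' claim about the wedge should be dropped.
\end{itemize}
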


\begin{proposition}[Colocation closes the wedge and reverses the local externality]\label{prop:colocation}
With $K^{\BTM}\bar a\ge\mu_\ell$ and storage sufficient to buffer intra-cycle mismatch, $W^{\BTM}\to 0$ and the operator's grid residual $\to 0$. The local reliability contribution falls to zero and, if on-site capacity provides voltage support that absorbs $\Delta$, the sign of the local power-quality effect flips from degradation ($\Phi>0$) to neutral-or-improving ($\Phi\le 0$).
\end{proposition}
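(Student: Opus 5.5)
The proof is constructive. First exhibit a storage dispatch that achieves physical matching; then read off the consequences for $W$, the grid residual, and $\Phi$ from Definition~\ref{def:wedge} and \eqref{eq:reliability}. The sign flip is handled last, as a separate modelling step.

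\textbf{Step 1 (a feasible dispatch).} Let $g=K^{\BTM}a$ be on-site generation and let $S_t$ be the storage state of charge along the cycle, with lossless dynamics. The dispatch rule is: charge with $(g-\ell)^+$ when generation exceeds load, and discharge $(\ell-g)^+$ otherwise. Delivered clean energy is then $m^{\BTM}=\min(g,\ell)+\text{discharge}$.

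Energy adequacy, $K^{\BTM}\bar a\ge\mu_\ell$, gives $\E[g-\ell]\ge 0$. So over the cycle, cumulative surplus covers cumulative deficit. The hypothesis that storage is ``sufficient to buffer intra-cycle mismatch'' I will formalize as
\[
S\ge \sup_t\Big(\max_{s\le t}C_s - C_t\Big),\qquad C_t=\int_0^t(g-\ell)\,ds,
\]
together with an initial charge chosen so that $S_t$ stays in $[0,S]$. This is the standard running-maximum (Skorokhod reflection) bound: under it the state of charge never hits zero, so $m^{\BTM}(\omega)\ge\ell(\omega)$ almost surely. By Definition~\ref{def:wedge} this is physical matching, so $W^{\BTM}=\E[(\ell-m^{\BTM})^+]=0$.

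I read ``$\to 0$'' as the limit along $S\uparrow S^*$, the threshold above. For $S<S^*$, the unmet energy is bounded by $(S^*-S)$ times the number of depletion episodes, which goes to $0$ monotonically. If round-trip efficiency $\eta<1$, the adequacy condition is replaced by one that scales surplus by $\eta$; I would note this adjustment but leave it out of the argument.

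\textbf{Step 2 (grid residual and reliability).} The load is behind the meter, so the operator's grid withdrawal is exactly $r=(\ell-m^{\BTM})^+$. It is therefore $0$ almost surely, and hence $\E[r]=0$ and $\Var(r)=0$. Substituting into \eqref{eq:reliability} gives $\Phi=\phi_1\cdot 0+\phi_2\cdot 0=0$. This is the ``local reliability contribution falls to zero'' claim, and it contrasts with Proposition~\ref{prop:wedge}, where $\Phi\ge\phi_1W>0$. Combined with Step 1, this also justifies the strict left inequality in \eqref{eq:wedge-order}.

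\textbf{Step 3 (sign flip; the main obstacle).} As written, $\Phi$ in \eqref{eq:reliability} is nonnegative, so $\Phi\le 0$ cannot come from that formula alone. The model has to be extended to include a voltage-support term. I would define the local measure
\[
\Phi^{\mathrm{loc}}=\phi_1\E[r]+\phi_2\Var(r)-\psi\,\E[v],\qquad \psi\ge 0,
\]
where $v\ge 0$ is reactive or spare capacity that the on-site plant injects to absorb the spikes $\Delta$ and to stabilize neighbouring load. Without colocation, $v=0$, and Propositions~\ref{prop:wedge} and \ref{prop:ai-amplifies} give $\Phi^{\mathrm{loc}}>0$. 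With colocation, Step 2 gives $\Phi^{\mathrm{loc}}=-\psi\E[v]\le 0$, with strict inequality whenever support is provided with positive probability.

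So the effect goes from degradation to neutral-or-improving, which is the sign reversal the empirics test. The delicate part is justifying this extension honestly: the flip is a consequence of the added support term, not of \eqref{eq:reliability}. I would present it explicitly as the conditional clause (``if on-site capacity provides voltage support'') of the statement.
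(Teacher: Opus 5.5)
Your proposal is correct and follows the paper's proof: energy adequacy plus sufficient storage lets behind-the-meter delivery be scheduled so that $m^{\BTM}\ge\ell$ almost surely, which gives $\E[r]=\Var(r)=0$ and hence $\Phi=0$, and the sign flip comes only from adding a voltage-support term to the reliability index ($-\psi\,\E[\Delta]$ in the paper, $-\psi\,\E[v]$ in yours). Your maximum-drawdown formalization of ``sufficient storage'' sharpens what the paper leaves informal; note that it relies on curtailing surplus when the store is full, i.e.\ the reflected dynamics you cite, rather than on your stated rule of always charging the full surplus.
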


\begin{proposition}[Spatial structure of the externality]\label{prop:spatial}
(a) Distributing a given aggregate AI load $\Delta$ across $N$ nodes (an edge-inference architecture) reduces per-node spike magnitude and per-node $\Var(r)$; when $\Phi$ is convex in the local residual, total reliability cost falls. (b) Relocating load to regions with higher load--availability alignment (higher $\Cov(\ell,a)$ or greater headroom, i.e.\ a flatter $P(\cdot)$ and larger $\varepsilon^s$) lowers both the wedge $W$ and the price impact $\%\Delta p=\%\Delta Q^d/\varepsilon^s$.
\end{proposition}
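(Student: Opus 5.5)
For part (a) I would work with a fixed aggregate AI load $\Delta$ split evenly across $N$ nodes, so node $i$ carries $\Delta_i=\Delta/N$ on top of its own baseline $\ell_{0,i}$ and delivery $m_i$. Each node then has a local residual $r_i=\pos{\ell_{0,i}+\Delta/N-m_i}$.

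\textbf{Part (a), per-node spike and variance.} The per-node spike magnitude is $\sup\Delta/N$, which falls like $1/N$. For the variance I would use the fact that $x\mapsto\pos{x}$ is $1$-Lipschitz and nondecreasing. Together with the coupling along the common state $\omega$, this lets me compare $r_i$ with the single-node residual $r^{(1)}=\pos{\ell_0+\Delta-m}$ when baselines and delivery are scaled comparably. The cleanest route is the homogeneous benchmark $\ell_{0,i}=\ell_0/N$, $m_i=m/N$. Because $\pos{\cdot}$ is positively homogeneous, this gives $r_i=r^{(1)}/N$, so
\[
\Var(r_i)=\Var\big(r^{(1)}\bigr)/N^2 .
\]
I would state this benchmark explicitly as the reading of ``distributing a given load.''

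\textbf{Part (a), total reliability cost.} Here I would invoke convexity of the local cost $\Phi_i=\varphi(r_i)$ with $\varphi(0)=0$. Convexity makes $\varphi$ superadditive on $\mathbb{R}_+$, so
\[
\sum_i\varphi\big(r^{(1)}/N\big)\le\varphi\big(r^{(1)}\bigr),
\]
with strict inequality under strict convexity. For the paper's specific form \eqref{eq:reliability}, the level terms sum to $\phi_1\E[r^{(1)}]$ and are unchanged. The variance terms sum to $N\cdot\phi_2\Var(r^{(1)})/N^2=\phi_2\Var(r^{(1)})/N$, which is strictly smaller whenever $\phi_2>0$ and $N>1$.

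\textbf{Part (a), main obstacle.} The hard step is heterogeneous baselines, since there $r_i$ need not scale exactly. My first attempt would be to bound the general case by the homogeneous one using Jensen's inequality on the convex map $\Delta_i\mapsto\E[\varphi(\pos{\ell_{0,i}+\Delta_i-m_i})]$. I would also state the result conditionally on comparable node profiles, which I expect the paper effectively assumes.

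\textbf{Part (b).} For the wedge I would reuse Proposition~\ref{prop:ai-amplifies}(iii). Relocating to a region with higher $\Cov(\ell,a)$ means clean delivery $m$, an increasing function of $a$ under PPA or BTM, co-moves more with load. Then $\ell-m$ has lower variance at the same mean, and $W=\E[\pos{\ell-m}]$ falls because $\pos{\cdot}$ is convex. I would formalize this with a Gaussian or two-point example, or simply cite the proof of Proposition~\ref{prop:ai-amplifies}(iii). Greater headroom enters directly: a smaller $c_1$ or larger $\varepsilon^s$ lowers $\%\Delta p=\%\Delta Q^d/\varepsilon^s$ for a given demand shift. A lower $W$ also shrinks $\Delta Q^d$ itself, since the residual is what adds to net load, so both factors move the price impact in the same direction.
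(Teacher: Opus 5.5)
Your proposal is essentially the paper's proof. In (a) the paper also splits $\Delta$ symmetrically and uses the subhomogeneity $N\Phi(R/N)\le\Phi(R)$ that follows from convexity with $\Phi(0)=0$; your benchmark $\ell_{0,i}=\ell_0/N$, $m_i=m/N$ makes explicit the equal-residual assumption behind its step $\sum_i\Phi(r_i)=N\Phi(R/N)$, and your direct computation for $\phi_1\E[r]+\phi_2\Var(r)$ is sharper than its appeal to strict convexity. In (b) the paper treats the wedge and elasticity margins separately, exactly as you do.

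One caution in (b): a smaller variance at a fixed mean does not by itself lower $\E[\pos{\cdot}]$, because convexity only orders mean-preserving spreads. The Gaussian restriction you mention is therefore genuinely needed. Apply it to the uncapped mismatch $\ell-K_{\PPA}a$, with the marginals of $\ell$ and $a$ held fixed, rather than to $\ell-m$: under capped delivery $m\le\ell$ the positive part is inactive, so $W=\E[\ell-m]$ is itself the mean you propose to hold constant.
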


\paragraph{From predictions to tests.} The five propositions partition cleanly into the paper's empirical objects. Table~\ref{tab:predictions} states the mapping; we refer back to it as each result is presented.

\begin{table}[t]
\centering
\caption{Model predictions and where they are tested.}
\label{tab:predictions}
\footnotesize
\begin{tabularx}{\linewidth}{@{}l>{\RaggedRight\arraybackslash}X l@{}}
\toprule
\textbf{Prediction} & \textbf{Empirical content} & \textbf{Evidence} \\
\midrule
P\ref{prop:wedge}: wedge persists under annual matching & Realized fossil generation, prices, and reliability respond to AI load despite procurement claims & \S\ref{subsec:fossil-fuel-demand-results}, \S\ref{subsec:price-effects-results}, \S\ref{subsec:power-quality-results} \\
P\ref{prop:entry}: procurement determines entry & Firm offtake (on-site) supports capacity that relieves the grid; certificates do not & \S\ref{sec:counterfactuals} (on-site); future direct test \\
P\ref{prop:ai-amplifies}: AI load amplifies, super-proportionally in spikiness & Demand, price, and power-quality effects scale in model size/compute & \S\ref{sec:counterfactuals} (scaling) \\
P\ref{prop:colocation}: colocation reverses the local externality & Sign reversal in power-quality effect for data centers with on-site generation & \S\ref{sec:counterfactuals} (Panel D, Table~\ref{tab:combined_results}) \\
P\ref{prop:spatial}: spatial structure & Edge inference and high-supply relocation reduce local impact & \S\ref{sec:counterfactuals} (geographic) \\
\bottomrule
\end{tabularx}
\end{table}

Two features of the mapping deserve emphasis. The marquee structural prediction, P\ref{prop:colocation}, is the one the data confirm most sharply: the model says behind-the-meter capacity should flip the sign of the local power-quality effect, and the on-site-generation results in Table~\ref{tab:combined_results} (Panel D) do exactly that. The one prediction we test only \emph{indirectly} is P\ref{prop:entry}: we observe that on-site (firmly contracted) capacity is associated with grid relief, consistent with the financing channel, but we do not separately estimate the renewable entry induced by unbundled certificates versus PPAs versus colocation. A direct additionality test---regressing new renewable and gas capacity additions on procurement type and data-center siting---is the natural next step and is, in our view, the most valuable extension of this framework. We are careful throughout not to read the entry ranking of Proposition~\ref{prop:entry} as established by the present estimates.

\section{Problem and Challenges}\label{Sec::Methods}

Our goal is to investigate three primary effects of AI data centers on power markets. We also conduct counterfactual analyses that could inform policy: assessing different scenarios of parameter size, training versus inference proportions, edge compute preference, relocation to low-population areas etc. These exercises operationalize the predictions of Section~\ref{sec:model}: the demand, price, and power-quality estimates test whether the timing wedge produces the externalities of Propositions~\ref{prop:wedge} and~\ref{prop:ai-amplifies}, and the counterfactuals quantify the mitigation channels of Propositions~\ref{prop:colocation}--\ref{prop:spatial}. The remainder of this section sets out the identification problem and the econometric design; readers interested only in the structural logic may proceed to the results, reading each estimate against Table~\ref{tab:predictions}.

\subsection{Problem introduced by AI workloads }


First, we quantify the effect of frontier AI models on power quality. 
AI workloads affect power quality through three main mechanisms. First, data centers and AI computing facilities add substantial baseload demand that can strain grid capacity and compromise voltage regulation, especially during peak periods. Second, AI workloads fluctuate rapidly with computational needs, creating unpredictable load variations that challenge stability and frequency regulation. Third, the switching power supplies and electronics essential to AI hardware generate harmonic distortion, introducing waveform distortions that spread through distribution networks and degrade power quality. See Figure \ref{fig:THD} in Appendix for an illustration of this harmonic distortion.

Second, we analyze how training and inference workloads differentially affect local power demand given that training produces sharp, episodic spikes while inference generates steadier ongoing consumption. 

Third, we estimate how wholesale electricity prices have responded to LLM development. 
AI workloads  affect wholesale electricity prices through the demand channel, with potentially indeterminate long-run retail market impacts. 
Wholesale prices are determined hourly in day-ahead markets via sealed-bid, uniform-price auctions subject to network constraints. Locational price variation reflects transmission limitations within the network. Increased demand raises wholesale prices by necessitating dispatch of higher-cost marginal generators. The wholesale price increases are passed through to retail consumers with substantial lag. Understanding wholesale demand and price shifts thus provides useful intuition regarding the direction and magnitude of retail price changes.

\subsection{The Difference-in-Differences Estimator}\label{EconometricModel}


We want to quantify the effects of running AI models on the power grid but we only have overall observational data about power demand, power quality, prices, exogenous controls, locations of data centers, and dates of model activity. We do not have access to data that specifically isolates the effects with respect to running the AI models because these require proprietary knowledge. 

To address this challenge, we can turn to Difference-in-differences (DiD), a widely-used statistical estimation technique in econometrics for causal inference with observational data. This approach has been used to answer a wide range of empirical economics questions from observational data such as telecom demand from price and quantity variation, \citet{fowlie2012industrial} evaluate environmental regulation in electricity using emissions and price data, and \citet{card1994minimum} study minimum wage effects via cross-state employment variation.

    
\begin{figure}            
    \centering
    \includegraphics[width=0.6\linewidth]{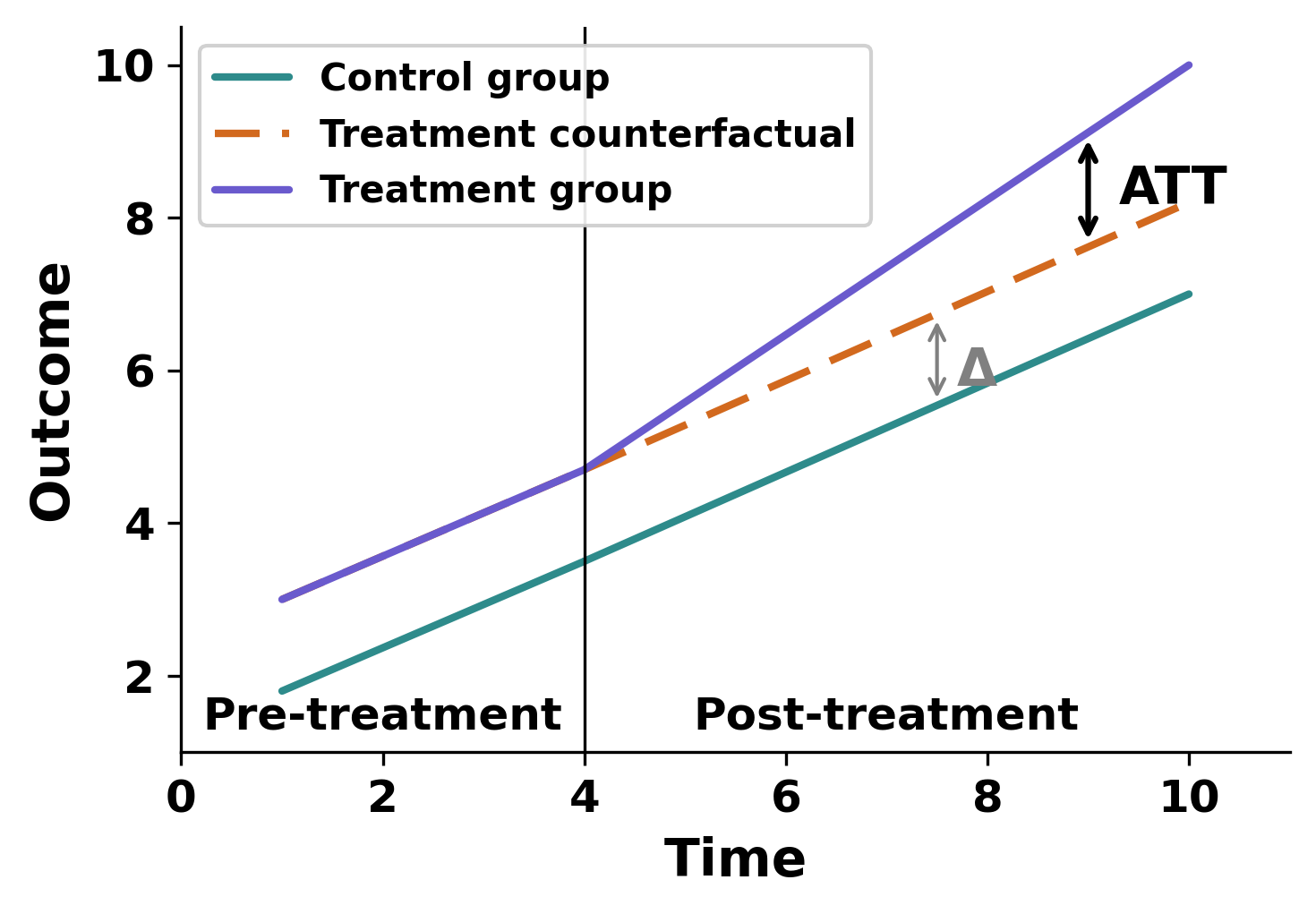}
    \caption{Difference-in-differences estimation. Under the parallel trends assumption, the counterfactual (dashed) projects treatment group outcomes absent intervention. The ATT is identified as the divergence between observed and counterfactual outcomes post-treatment.}
    \label{fig:DIDFigure}
    \vspace{-2em}
\end{figure}



The key idea behind this technique is to estimate treatment effects by comparing the change in outcomes over time for a treated group to the change for a control group, as illustrated in Figure~\ref{fig:DIDFigure}. 

For power quality, as an example, this translates to estimating the change in utility-level power quality within the geographically-defined utility area before and after the release of a major AI model. The treated group in this case would be utility areas containing data centers engaged in model training or inference during the time period of activity, and the control group would be utilities without AI activity in the same time period. The outcome variable, power quality in our example, is modeled as a function of the treatment (i.e. the running of the AI model), and other contributing factors. Formally, we can specify the power quality effect as:
\begin{equation}
    Y_{it} = \alpha + \beta \,(\text{Treatment}_i \times \text{Post}_t) + \gamma_i + \delta_t + \varepsilon_{it},
    \label{eq:DiD}
\end{equation}
where $Y_{it}$ is power quality for utility $i$ at time $t$, $\text{Treatment}_i$ indicates proximity to a data center, and $\text{Post}_t$ captures periods after the release of a major AI model. The coefficient of interest, $\beta$, measures the causal impact of model releases on local power quality. $\gamma_i$ captures unobserved factors common across all utilities, $\delta_t$ captures unobserved factors that remain fixed across time, and $\varepsilon_{it}$ is an error term that captures random disturbances or shocks. 

We can learn the co-efficients, given appropriate data about the outcome variables over time --- in this case with data about the power-quality values over time across a range of utilities containing AI data centers, and the knowledge of when and where the AI models of interest are run.

\subsection{Technical Challenges and Validating Model Assumptions}\label{DCASSUmptions}
The validity of this basic DiD approach for our problem is affected by the following issues:
\begin{enumerate}
    \item \textit{Lack of specific information about when and where models are trained and released} The model as described requires knowing specifics of the training windows, e.g., when training began, and when it was released. It is not always possible to obtain such specific information. 
    \item \textit{Staggered Events} The model also assumes a single temporal event of interest to capture pre and post treatment effects. However, models are not always released at a single time point and are often staggered.    
    \item \textit{Parallel Trends Assumption} The effects of running a model on an AI data center is estimated by the difference from what would have been in the model were not run on that data center. This assumes that we have a fair estimate of the trend that the variable of interest would have over time but for the training and inference of the AI models. 
   
    \item \textit{External Confounds} The effects we observe and attribute to AI models could be due to other confounds such as random weather events, pre-existing trends in load, or variation in fuel price.
    \item \textit{Data Engineering Challenge} No single clean dataset contains all variables needed for analysis. Data often exist at different scales, lack linking tables, or require substantial preprocessing—sometimes necessitating manual mapping, combination, and labeling. An example is the zonal mapping effort used to assign generators to zones with the zonal map shown in Figure \ref{fig:MapofZonesRegular} in the Appendix.
\end{enumerate}
    
We address the first two challenges with a stacked modification of the DiD model (\autoref{subsec:eventstudy}). We address the next two through appropriate statistical tests and robustness checks (\autoref{subsec:robustness}) and the final issue through our data creation (section \ref{sectionData}). 


\section{Methodology} 

To address the inexact information about training windows and release times, we estimate dynamic treatment effects at multiple horizons relative to model release dates. This approach, referred to as an event study or \textbf{stacked difference-in-differences} in econometrics, offers three advantages over the standard pre/post comparison discussed earlier: (1) it does not require observing the true training start date, instead using release timing as the anchor; (2) it also provides a direct test of the parallel trends assumption through examination of pre-release coefficients; and (3) it allows the data to reveal the temporal profile of effects rather than imposing an arbitrary discontinuous treatment structure.

\subsection{Stacked DiD with Multiple Horizons}
\label{subsec:eventstudy}
Formally, we define treatment exposure relative to each model's release date $r_m$ for both training and release: \textbf{1) Training windows:} The period $[r_m - \bar{\tau}_{\text{train}}, r_m)$ preceding release, during which computational resources are devoted to model training. We remain agnostic about the exact training start date; our identifying assumption is that training activity is elevated in the months before release and that any pre-existing differential trends would manifest in the earlier leads. \textbf{2) Inference window:} The period $[r_m, r_m + \bar{\tau}_{\text{infer}}]$ following release, during which the model is deployed for inference via API access or public availability.

\subsubsection{Stacked Difference-in-Difference}

Model releases occur at different calendar dates, creating a staggered adoption setting. Recent econometric literature has demonstrated that conventional two-way fixed effects (TWFE) estimators as is the case in the estimates from Equation \ref{eq:DiD}. 
can produce biased estimates under treatment effect heterogeneity 
\cite{goodman2021difference, sun2021treatment, callaway2021difference, borusyak2024revisiting}. Treatment effect heterogeneity means the policy works differently for different places or periods, which can distort standard difference-in-differences estimates. The bias arises because TWFE implicitly uses already-treated units as controls for later-treated units, and differences in treatment effects across cohorts or over time contaminate the estimated average. 

In our setting, consider regions as units and AI model releases as treatments inducing increased data center power demand. A "cohort" comprises regions whose data centers began significant AI workloads at the same time. Suppose Region A adopted AI workloads in 2022 while Region B adopted in 2023. When estimating treatment effects for Region B, TWFE implicitly uses Region A as a control—but Region A in 2023 is not untreated; its data centers have been running AI workloads for a year. If treatment effects evolve over time (e.g., power demand grows as AI infrastructure scales), comparing Region B's outcomes against Region A's already-treated outcomes introduces bias. Staggered design provides heterogeneity-robust estimators that solve for this exact issue.


To address this, we use a stacked DiD model~\cite{cengiz2019effect, deshpande2019screened}, where we construct a stacked dataset where each model release defines a separate ``sub-experiment.'' The idea here is to construct, for each treatment cohort, a separate sub-experiment using only units that are genuinely untreated at the time of that cohort's adoption. By stacking these sub-experiments—each with its own clean control group of not-yet-treated regions—we ensure that no already-treated unit ever serves as a control, thereby eliminating the bias that arises when treatment effects are heterogeneous across cohorts or time.

Formally, for each release event $m$, we create a dataset containing: (i) Treated units -- Regions linked to the releasing company's AI data centers. (ii) Control units -- Regions with no AI data center exposure during the event window, or regions whose treatment occurs sufficiently far from event $m$ that they serve as clean controls.
\begin{itemize}
    \item Time window: A symmetric window around $r_m$, e.g., $[r_m - K, r_m + K]$ for some horizon $K$.
\end{itemize}

Following \, we then stack these sub-experiments and estimate:
\begin{equation}
Y_{i,t,m} = \sum_{k=-K}^{K} \beta_k \cdot D_{ut}^k  + \delta_t + \mathbf{X}_{ut}'\theta + \varepsilon_{ut}
\label{eq:stacked}
\end{equation}
where $\gamma_{u}$ are unit-by-event fixed effects, $\delta_{u}$ are time-by-event fixed effects, $\mathbf{X}_{ut}'\theta$ represents controls, and the coefficients $\{\beta_k\}$ trace out the dynamic treatment effect at each event-time horizon $k$. Standard errors are clustered at the unit level to account for correlation across events within the same region.

\subsection{Estimating Power Quality}

Our main power quality estimates come from a two-way fixed effects (``strict'') DiD specification at the utility-month level, in which treatment enters as a single post-release indicator; given the modest cross-section of utilities, we treat the stacked event-study design below as a robustness and dynamics check rather than as the headline estimator (see Section~\ref{subsec:power-quality-results}). The event-study form of the specification is:
\begin{equation}
\text{PowerQuality}_{ut} = \sum_{k=-K}^{K} \beta_k^{PQ} \cdot D_{ut}^k + \gamma_u + \delta_t + \mathbf{X}_{ut}'\theta + \varepsilon_{ut}
\label{eq:pq_eventstudy}
\end{equation}
where:
\begin{itemize}
    \item $\text{PowerQuality}_{ut}$ is the Whisker Labs Consumer Power Quality Index for utility $u$ in month $t$;
    \item $D_{ut}^k = \mathbf{1}[\text{event-time} = k] \times \text{Treated}_u$ indicates that utility $u$ is $k$ periods from the release date of a model linked to an AI data center in its service territory;
    \item $\gamma_u$ are utility fixed effects absorbing time-invariant characteristics;
    \item $\delta_t$ are calendar month fixed effects absorbing common temporal shocks;
    \item $\mathbf{X}_{ut}$ denotes optional time-varying controls; in our baseline power-quality specification we omit these and rely on the time and geographic fixed effects alone (Section~\ref{sectionData}).
\end{itemize}

We estimate an analogous specification for total harmonic distortion (THD):
\begin{equation}
\text{THD}_{ut} = \sum_{k=-K}^{K} \beta_k^{THD} \cdot D_{ut}^k + \gamma_u + \delta_t + \mathbf{X}_{ut}'\theta + \varepsilon_{ut}
\label{eq:thd_eventstudy}
\end{equation}


\subsection{Estimating Fossil Generation}

The fossil generation analysis proceeds at the generator-hour level, exploiting finer geographic resolution. The outcome is hourly gross load at fossil-fired units from CAMPD, so this analysis identifies the fossil \emph{generation} response near AI data centers rather than total electricity demand. We maintain the stacked DiD structure while incorporating instrumental variables to address the price endogeneity concern. Price endogeneity arises because electricity prices and generation are simultaneously determined—high demand pushes prices up, while high prices call forth more generation—making it impossible to identify the causal effect of price on output from their correlation alone.

Two-stage least squares resolves this by isolating variation in price that is plausibly exogenous: in the first stage, we predict prices using an instrument that affects prices but has no direct effect on the outcome; in the second stage, we regress generator output on these predicted prices rather than observed prices.


\paragraph{First Stage.} We instrument for electricity prices using supply-side cost shifters:
\begin{equation}
p_{it} = \pi_0 + \pi_1 \text{HeatRate}_{it} + \pi_2 \text{GasPrice}_{t} + \gamma_i + \delta_t + \mathbf{X}_{it}'\pi_3 + \nu_{it}
\label{eq:firststage}
\end{equation}
where $\text{HeatRate}_{it}$ captures the thermal efficiency of generator $i$ and $\text{GasPrice}_t$ is the contemporaneous natural gas spot price. These instruments satisfy relevance (they are primary determinants of marginal generation costs) and exclusion (they affect demand only through their impact on prices, not through direct effects on AI workload scheduling or other consumption decisions).

\paragraph{Second Stage.} We estimate the stacked DiD specification using predicted prices:
\begin{equation}
\text{FossilGen}_{it} = \sum_{k=-K}^{K} \beta_k^{FD} \cdot D_{it}^k + \eta \hat{p}_{it} + \gamma_i + \delta_t + \mathbf{X}_{it}'\theta + \varepsilon_{it}
\label{eq:demand_eventstudy}
\end{equation}
where $D_{it}^k$ indicates event-time relative to model releases linked to AI data centers in generator $i$'s service area.

The coefficients $\{\beta_k^{FD}\}$ capture the dynamic effect of AI activity on fossil generation. Negative values of $k$ (pre-release) correspond to the training phase; positive values correspond to inference. The price coefficient $\eta$ controls for supply-driven price movements and provides a scaling factor for welfare calculations.



\subsection{Estimating Price Effects}\label{PriceEffects}
Beyond quantity effects, we estimate how AI-driven demand increases affect electricity prices. The key economic insight is that price impacts depend on supply flexibility. In other words, when generators can easily expand output, increases in demand (e.g. AI-driven demand) can be easily absorbed with minimal increases in price. When the demand suddenly spikes (i.e. a demand shock), the prices will increase sharply as well. This is characterized using \textit{supply elasticity} $\varepsilon^s$—the percentage change in quantity supplied per one percent price change—quantifies this responsiveness.


Our estimation proceeds in two steps. First, we use our IV estimates to quantify how AI activity shifts demand. Simply put, the average change in total fossil generation from AI demand is the DiD coefficient times the treatment indicator:
\begin{equation}
\Delta \widehat{\text{FossilDemand}}{it} = \hat{\beta} \cdot \Delta \text{AI}{it}
\label{eq:demand-change}
\end{equation}
Second, we estimate a linear supply relationship using demand controls as instruments and supply shifters as controls. While power markets involve optimal dispatch, the fossil-fuel portion of PJM’s marginal cost curve is approximately linear over the operating range we study \cite{PJM_DataMiner2_2025}.
Let $p_i$ denote the market-clearing price in market $i$, and let $Q^s$ denote
the supplied quantity. We model supply locally as linear in price,
$Q^s = a + b \cdot p_i$, where the slope parameter $b$ captures the
responsiveness of supplied quantity to price changes (i.e., how flexibly
supply can adjust to incentives). Rearranging, the implied price response to a
demand shift $\Delta AI_{it}$ is given by
$\Delta p = \hat{\beta} \cdot \frac{\Delta AI_{it}}{b}$. Expressed in percentage
terms, the price impact depends on the flexibility of supply with respect to
price:
\begin{equation}
\% \Delta p = \frac{\% \Delta Q^d}{\varepsilon^s},
\label{eq:price-impact-elasticity}
\end{equation}
where $\varepsilon^s$ denotes the percentage change in supplied quantity per
percentage change in price.


We estimate $\varepsilon^s$ via IV regression of quantity on price in log-log form, using demand-side instruments (weather-driven load variation and industrial production indices).

We focus on the PJM operator at the zonal level for this analysis. We restrict to a single ISO (operator) because differing market structures make prices incomparable across ISOs, and choose PJM because it is the largest (65 million people, 13 states), includes key data center states (Virginia, Pennsylvania), and provides 20 diverse pricing zones. Because observed training locations are outside PJM, this analysis uses the broader model set only.

\subsection{Validating Assumptions and Robustness}
\label{subsec:robustness}

We implement several diagnostic and robustness tests to assess the credibility of our estimates.

\subsubsection{Sample Selection}

Our empirical strategy balances internal validity with external relevance. Our main sample comprises four models for which we observe exact training and inference dates as well as training location, and we employ propensity score matching (described in the appendix) to support the parallel trends assumption. We then report results for progressively larger samples of models and data centers, accepting weaker identification in exchange for broader coverage. This approach enables precise, well-identified estimates from our core sample while exploring heterogeneity and counterfactuals in extended samples, with appropriate caveats regarding the latter.


\subsubsection{Pre-Trends and Parallel Trends Testing}

Our primary test of the identifying assumption examines whether pre-release coefficients $\{\beta_k : k < 0\}$ are jointly indistinguishable from zero. We report:
\begin{itemize}
    \item Point estimates and confidence intervals for each lead coefficient.
    \item An F-test of the joint null hypothesis $H_0: \beta_{-K} = \beta_{-K+1} = \cdots = \beta_{-1} = 0$.
\end{itemize}
Significant pre-trends would undermine the causal interpretation of post-release estimates.

\subsubsection{Placebo Tests with Randomized Treatment Assignment}

To verify that our estimates reflect genuine treatment effects rather than spurious correlations or specification artifacts, we conduct placebo tests with randomly assigned treatment:
\begin{enumerate}
    \item \textbf{Random treatment timing:} We hold the set of treated units fixed but randomly reassign release dates drawn from the empirical distribution of actual release dates. Under the null of no effect, placebo estimates should center on zero.
    \item \textbf{Random treatment assignment:} We hold release timing fixed but randomly reassign treatment status across units. This tests whether results are driven by the specific geographic pattern of AI data center locations.
\end{enumerate}

\subsubsection{Sensitivity to Treatment Definition}

Given uncertainty in linking models to specific training locations, we assess robustness to alternative treatment definitions in our less conservative analysis:
\begin{enumerate}
    \item \textbf{Intensive margin:} Weight treatment by the number or capacity of AI data centers in the region, rather than using a binary indicator.
    \item \textbf{Dropping ambiguous cases:} Exclude model releases where the company operates data centers in multiple balancing authorities, restricting to cases with cleaner geographic attribution.
    \item \textbf{Verified timing subset:} Estimate separately on the subset of models with externally verified training windows.
    \item \textbf{Geographic Resolution and Robustness.} The generator-level data permit more granular treatment definitions than the utility-level power quality data. We exploit this flexibility by estimating specifications under alternative treatment definitions in terms of radius.
\end{enumerate}

\subsubsection{Heterogeneity by Model Characteristics}

We examine whether treatment effects vary with observable model characteristics:
\begin{itemize}
    \item Model scale (parameter count, reported compute).
    \item Company (to assess whether effects are driven by specific firms).
    \item Release year (to assess whether effects change over time).
\end{itemize}
Heterogeneity analysis both informs mechanisms and serves as a specification check: if effects are concentrated among larger models or models from companies with substantial AI infrastructure, this lends credibility to the AI-driven interpretation.

\subsection{Counterfactual Analysis}\label{counterfactuals}

We conduct three primary sets of counterfactual exercises to investigate how alternative evolutionary paths for AI compute might affect grid-level outcomes.

\textbf{Model Scaling.} First, we examine the grid implications of continued growth in AI model size. Using the Epoch database to characterize trends in model parameters, we provide descriptive analysis of how increasing model scale translates into grid-level impacts. We then relate our estimated power demand coefficients to power quality coefficients, enabling analysis of how changes in training and inference energy efficiency propagate to power quality outcomes.

\textbf{Geographic Redistribution.} Second, we leverage the geographic locations of data centers in conjunction with our econometric estimates to conduct counterfactuals focused on spatial reallocation of AI compute. We consider two scenarios. In the first, we simulate a shift from cloud-based to edge-based inference—reallocating inference demand from concentrated hyperscaler facilities to smaller data centers distributed closer to population centers. We implement this by redistributing our estimated inference demand impacts across data centers not currently classified as AI-focused. In the second scenario, we simulate relocating training workloads from population-dense regions on the East Coast to energy-abundant regions in the center of the country. For both geographic counterfactuals, we provide sensitivity analyses in the appendix to account for potential increases in communication costs and efficiency losses.

\textbf{Colocated on-site Power Generation.} Finally, we exploit a new feature in the Aterio data, namely a flag on data centers with on-site power generation. We re-estimate both our main models separately for generators with and generators without onsite power generation. We then report the impacts to power demand and power quality of either removing all onsite generation or having onsite generation for all data centers.

Together, these counterfactuals illuminate both how the grid might be affected by different trajectories for AI compute and which levers—geographic placement, workload composition, efficiency improvements—may be available to computer scientists and policymakers seeking to mitigate grid impacts.

\section{Data}\label{sectionData}
Table \ref{tab:DataSources} summarizes the datasets we use to address each of our three research questions, distinguishing between variables of interest and the control variables described above. For variables of interest, we rely on \textbf{Aterio}, which provides comprehensive information on the location and history of data centers. The Aterio dataset documents when and where centers have been established, expanded, or retired, along with details on ownership and operational capacity (in MW). This allows us to identify data centers owned by specific hyperscalers and to quantify their scale over time. We then combine these data with external sources that provide the necessary controls. 
The data centers we use include ones that we could verifiably trace as having been the source of training of well-known large scale models such as GPT-3, PaLM, and Llama 2, among others.
\begin{table}[t!]
\scriptsize 
\centering
\caption{Data sources for each analysis question.}
\label{tab:DataSources}
\begin{threeparttable}
\setlength{\tabcolsep}{3pt} 
\renewcommand{\arraystretch}{1.05} 
\begin{tabularx}{\linewidth}{>{\raggedright\arraybackslash}p{1.1cm} X X p{2.2cm}}
\toprule
\textbf{Source} & \textbf{Content} & \textbf{Use in Analysis} & \textbf{\# Instances} \\
\midrule
\multicolumn{4}{c}{\textbf{Q1: Difference-in-Differences (Power Quality)}} \\
\midrule
Aterio & Data center locations, history, ownership, capacity (MW) & Define treatment regions; link AI model releases & 3862 data centers\\
Whisker Labs & CPQI (consumer reliability); THD (waveform distortion) & Outcome variables for DiD regressions & 2684 utility-months\\
EIA & Retail service territory maps & Define treatment/control boundaries & 72 utilities\\
\midrule
\multicolumn{4}{c}{\textbf{Q2: Instrumental Variables (Fossil Generation)}} \\
\midrule
EPA CAMPD & Hourly generator output (gross load); heat rates & Fossil generation outcome; heat rate as instrument & 22 million generator-hours\\
S\&P Capital IQ & Natural gas prices & Instrument and fuel-cost control &12 Million location-day prices \\
Aterio & Data center proximity to generators & Treatment near releasing-company centers & 3862 data centers\\
Meteostat & Temperature, precipitation & Demand and renewable controls & 22 million location-hours\\
ISOs & Zonal wholesale prices & Market controls & 54 zones\\
\midrule
\multicolumn{4}{c}{\textbf{Q3: Counterfactual Analyses (Technical Evolution of AI)}} \\
\midrule
Whisker Labs & CPQI; THD & Baseline power quality impacts for scaling projections & \\
Epoch & Model parameters, FLOPs, release dates & Scaling regressions; efficiency counterfactuals & 75 models\\
\bottomrule
\end{tabularx}
\end{threeparttable}
\end{table}


\subsection{Data and Empirical Specification}
We analyze four primary outcomes. \textbf{The Consumer Power Quality Index (CPQI)} from \cite{WhiskerLabs_THD_2024} provides a composite measure of consumer-facing reliability events (surges, sags, brownouts, interruptions), summarizing frequency and severity of power deviations at the household level. \textbf{Total Harmonic Distortion (THD)} data from the same source offer a technical measure of waveform distortion, quantifying voltage deviation from a clean 60-Hz sine wave; elevated THD indicates grid stress, reduces motor efficiency, and shortens equipment lifespan (Figure \ref{fig:THD}). Since THD data are more recent than CPQI, fewer model releases are available for THD analysis. We incorporate \textbf{generator output (gross load)} from EPA's CAMPD database, providing hourly plant-level generation spanning 2021–2023. Because CAMPD reports output at fossil-fired units, this outcome measures the local fossil generation response to data-center load rather than total electricity demand. Finally, we combine \textbf{wholesale zonal price data} from Independent System Operators (ISOs) to assess AI impacts on prices.

Our power quality regressions use a parsimonious specification with time and geographic fixed effects. Generator output regressions employ a richer specification: we instrument for endogeneity using generator heat rates and natural gas prices (S\&P Capital IQ Global) in a two-stage least squares framework, and control for weather conditions (Meteostat) to capture demand and renewable variability. EIA retail service territory files reconcile generator-, zonal-, and retail-level observations.
\subsection{Combining Data Sources}

To integrate datasets, we harmonize spatial and temporal units across sources. Figure \ref{fig:DataDiagram} demonstrates the integration at each step for our modeling purposes. Data center locations from Aterio are mapped to ISO zones, EIA territories, and counties, enabling linkage with Whisker Labs reliability indexes (CPQI and THD) and CAMPD generator output. We are the first study to integrate Aterio’s new field indicating which data centers are utilized in processing AI workloads. CPQI has 2,683 observations (mean 0.52, s.d. 0.42), while THD is more dispersed (mean 1.81, s.d. 6.77). CAMPD generator output averages 218 MW (s.d. 158), and wholesale prices average \$51/MWh (s.d. 148). Weather data capture local conditions (mean temperature 17°C, precipitation 0.12 mm), and time series are aligned at hourly or monthly resolution. External controls—natural gas prices, weather, and ISO market data—are merged by geography and time, yielding a unified panel suitable for difference-in-differences and IV regressions.

Our demand model focuses on deregulated markets in the Eastern Interconnection and ERCOT, where data on zones and prices are readily available, while our power quality model draws on Whisker Labs data from 72 retail utilities nationwide (monthly, 2022–2025). The demand regressions use hourly data for several thousand generators (2021–2023). We concentrate on hyperscaler AI companies—including Meta, Microsoft, Amazon, and Google—with Anthropic linked to Google due to its cloud partnership. Appendix materials include maps, sample descriptions, and data tables. In order to determine which data centers should be considered AI data centers and should be included in our sample, we perform a linkage procedure described in appendix Subsection \ref{subsec:treatment}.

\section{Results}\label{Sec::Results}

We report results using our stacked DiD estimator (\autoref{subsec:eventstudy}) which is designed to handle staggered AI model releases; for power quality, the two-way fixed effects specification serves as the main estimator given the smaller sample (Section~\ref{subsec:power-quality-results}). Throughout, we read each estimate against the corresponding prediction of Section~\ref{sec:model} as summarized in Table~\ref{tab:predictions}: the power-quality and demand estimates speak to the wedge and amplification predictions (Propositions~\ref{prop:wedge} and~\ref{prop:ai-amplifies}), while the on-site-generation results bear on the colocation prediction (Proposition~\ref{prop:colocation}). For all power quality and fossil generation effects, we first construct cohort-specific datasets following \citet{callaway2021difference}, for each of the treatment events corresponding to major AI model releases through 2023 (\autoref{appendix:verified-models}).
Table \ref{tab:combined_results} provides the full set of results from our main regressions including analysis of the impacts of on-site colocated generation for our counterfactuals.

\begin{table*}[t]
\centering
\caption{Summary of Estimation Results Across Specifications}
\label{tab:combined_results}
\begin{tabular}{llcp{10cm}}
\toprule
Analysis & Outcome & Coefficient & Coefficient (Effect) Interpretation \\
\midrule
\multicolumn{4}{l}{\textit{Panel A: Utility-Level Power Quality (Two-Way FE)}} \\
\addlinespace[0.5em]
Strict DiD & CPQI & 0.3795*** & Corresponds to moving from typical U.S. power quality to the bottom quartile; increases annual outages for the average customer from 1 to $\sim$1.5–2. \\
Strict DiD & THD ($>$0.8) & 0.0625*** & Causes more outages and equipment failure. \\
\addlinespace[0.5em]
\midrule
\multicolumn{4}{l}{\textit{Panel B: Generator-Level Fossil Generation (Stacked DiD)}} \\
\addlinespace[0.5em]
IV 2SLS & Pre-treatment avg & +6.33*** & \multirow{4}{7.5cm}{$>500$ additional GWh; $\sim$47,000 household-years} \\
IV 2SLS & Post-treatment avg & +5.20*** & \\
\addlinespace[0.5em]
\midrule
\multicolumn{4}{l}{\textit{Panel C: PJM AI Price Analysis}} \\
\addlinespace[0.5em]
ComEd & Supply Elasticity & 31.64 & \multirow{2}{7.5cm}{Zone-level supply elasticities and implied price impacts from the PJM supply-side analysis; the mapping from demand shifts and elasticities to price impacts follows Section~\ref{PriceEffects}} \\
ComEd & AI Price Impact & 25.39 & \\
AEP & Supply Elasticity & 8.13 & \\
AEP & AI Price Impact & 1.89 & \\
\addlinespace[0.5em]
\midrule
\multicolumn{4}{l}{\textit{Panel D: Power Quality by On-Site Generation Status}} \\
\addlinespace[0.5em]
With on-site & CPQI & $-0.122$*** & \multirow{3}{7.5cm}{On-site generation mitigates grid stress from AI load; utilities without on-site experience reliability deterioration while those with on-site see improvements} \\
Without on-site & CPQI & +0.228*** & \\
Difference & CPQI & $-0.350$*** & \\
\bottomrule
\end{tabular}
\begin{flushleft}
\footnotesize\textit{Notes:} *** $p < 0.001$, ** $p < 0.05$. Panel A: Utility and time fixed effects with cluster-robust standard errors; treatment defined as presence of AI-affiliated data center in utility service territory. Panel B: 35.6 million observations from 19 treatment events; standard errors clustered by unit-event; IV instruments: natural gas price $\times$ zone-average heat rate. OLS = Ordinary Least Squares; IV 2SLS= Instrumental Variables Two Stage Least Squares; CPQI = Consumer Power Quality Index; THD = Total Harmonic Distortion. 
\end{flushleft}
\end{table*}

\subsection{Power Quality Effects: Utility-Level Strict DiD}\label{subsec:power-quality-results}

We report the power quality effects of AI data centers in seventy one retail electric utility service territories estimated from power quality observations over 1136 utility-months. We use Consumer Power Quality Index (CPQI) and Total Harmonic Distortion (THD) as outcome variables. Our main estimates come from the two-way fixed effects (strict) DiD specification; the stacked event-study design (\autoref{eq:pq_eventstudy}) serves as a robustness check given the modest sample size.

Panel A in Table~\ref{tab:combined_results} reports results from the utility-level strict DiD specification. We use strict here to distinguish that due to the sample size, we prefer the traditional DiD as our main specification rather than the stacked alternative.
The estimated CPQI coefficient speaks to the reliability of the power supply to the consumers. The CPQI coefficient of 0.3795 indicates a power quality deterioration level that corresponds to moving from a typical U.S. power area toward the bottom quartile of power quality—roughly \textbf{the difference between $\sim$ 1 outage/year and $\sim$ 1.5–2 outages/year for the average customer}. Other reliability events such as brownouts and surges are predicted to increase by ~1-2 per year. Importantly, power surges are a known ignition source for electrical fires, contributing to the approximately 51,000 annual residential electrical fires in the U.S., with electrical arcing accounting for 74\% of ignition sources \cite{usfa2019electrical}. 

The post-treatment coefficient of 0.0625 for THD indicates that utility territories containing AI data centers experienced a 0.0625 percentage point increase in readings above the 0.8 distortion threshold following AI model releases. 
While a 0.0625 percentage point increase may appear small in isolation, it is still problematic as sustained exposure to power distortions compounds to over the multi-year lifespans of household appliances, accelerating degradation of refrigerators, air conditioners, and other motor-driven equipment \cite{NEMA_MG1,Laughner2024THD}.

\subsection{Fossil Generation Effects: Stacked DiD}
\label{subsec:fossil-fuel-demand-results}
 Fossil generation effects are estimated using hourly gross-load data from a total of 35.6 million generator-hour observations\footnote{These data are from the Clean Air Markets Program (CAMPD) dataset and include hourly generation by all US utility-scale fossil generators \cite{EPA_CAPMD_PowerSector2022}}, which includes 900,956 treated observations from generators located within 20 miles of AI-affiliated data centers and the other 34.7 million from other generators as control observations.


Average fossil generation effects are shown in the IV 2SLS rows in Panel B of Table~\ref{tab:combined_results}.\footnote{Recall that we use two stage instrumental variables (IV) regression here, where the second stage isolates the AI model caused demand-side effects from supply shocks (i.e., generator heat rate and alternate fuel source prices) modeled in the first stage. Details on the specifics of the instrumental variables specification can be found in Appendix Subsection \ref{IVDetails} and results from the first stage regression are presented in Appendix.}  Figure \ref{fig:event_study_main} shows the coefficients spread two months before (training) and after (inference) model release.


The training-phase fossil generation response corresponds to the $t=-2$ pre-treatment coefficient of $6.33$ MWh, which we find to also be statistically significant result. Note that this is the hourly generation at a single generator.  When the aggregate impact is calculated, they represent over 500 additional GWh generated nearby to a data center over the inference period in some cases. \textbf{The additional power is roughly equivalent to that of 47,000 household-years in the United States}. For inference, the $t=+2$ coefficient shows a significant generation increase of 5.20 MWh, representing the output shift at generators near data centers two months after AI model release, isolated from supply-side price variation through our instrumentation strategy.

\begin{figure}[t]
    \centering
    \includegraphics[width=0.7\linewidth]{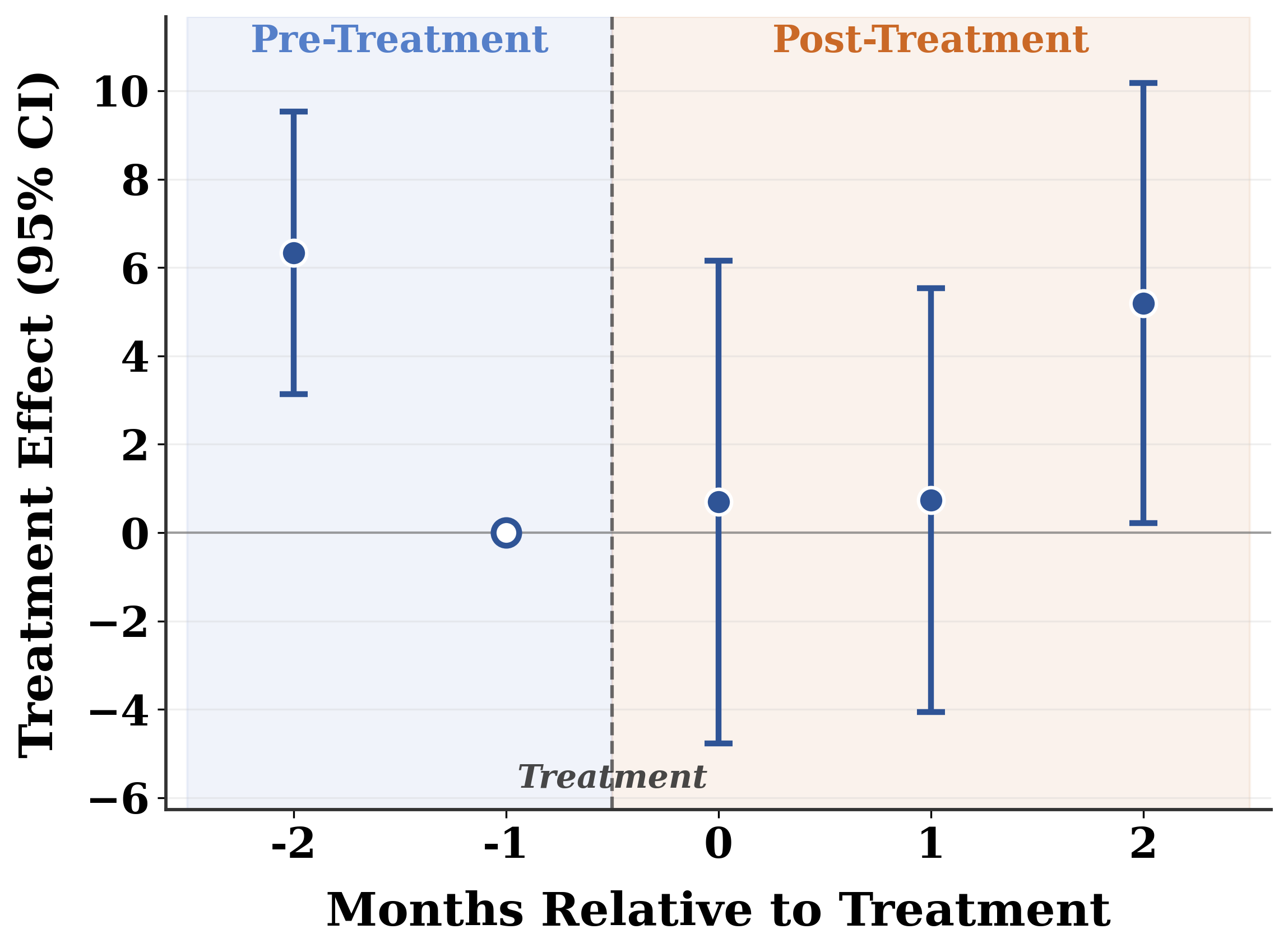}
    \caption{Stacked DiD coefficients for power demand (IV specification). Event Time 0 corresponds to AI model publication date. The pre-treatment coefficient at $t=-2$ demonstrates the training demand. Reference period is $t=-1$.}
    \label{fig:event_study_main}
    \vspace{-2em}
\end{figure}



\subsection{Price Impacts}
\label{subsec:price-effects-results}

We estimate wholesale electricity price effects by combining our demand estimates with supply curve parameters. This approach, described in Section \ref{PriceEffects}, recognizes that our IV strategy identifies demand shifts but does not directly estimate equilibrium price effects.\footnote{Our instrumental variables approach isolates exogenous variation in electricity demand driven by AI model releases. However, observed market prices reflect the intersection of supply and demand in equilibrium. To translate our demand estimates into price effects, we must incorporate information about supply-side responses, which requires additional assumptions about the shape of the supply curve.} The price impact of AI-induced demand depends on the slope of the supply curve\footnote{Recall, the supply curve describes the relationship between price and quantity supplied. Its slope reflects how much generators must be compensated to produce additional electricity—steeper slopes indicate that marginal generation costs rise quickly as output expands, typically because higher-cost units must be dispatched.}: inelastic supply, where quantity supplied responds weakly to price changes, implies large price effects, while elastic supply implies small effects. We use supply chain elasticities for PJM obtained from our supply-side instrumental variables regression (see Figure \ref{PJMPRiceElasticity} in Appendix \ref{PriceElasticityAppendix} for a map). 

Figure \ref{PJMPrice} showing price increase estimates by zone. On average AI training leads to \textbf{substantial increases in wholesale prices in PJM with the highest estimated increase surpassing 25\% in the most affected zone} (Table~\ref{tab:combined_results}, Panel C).

\begin{figure}[!htbp]
    \centering    \includegraphics[width=0.5\linewidth]{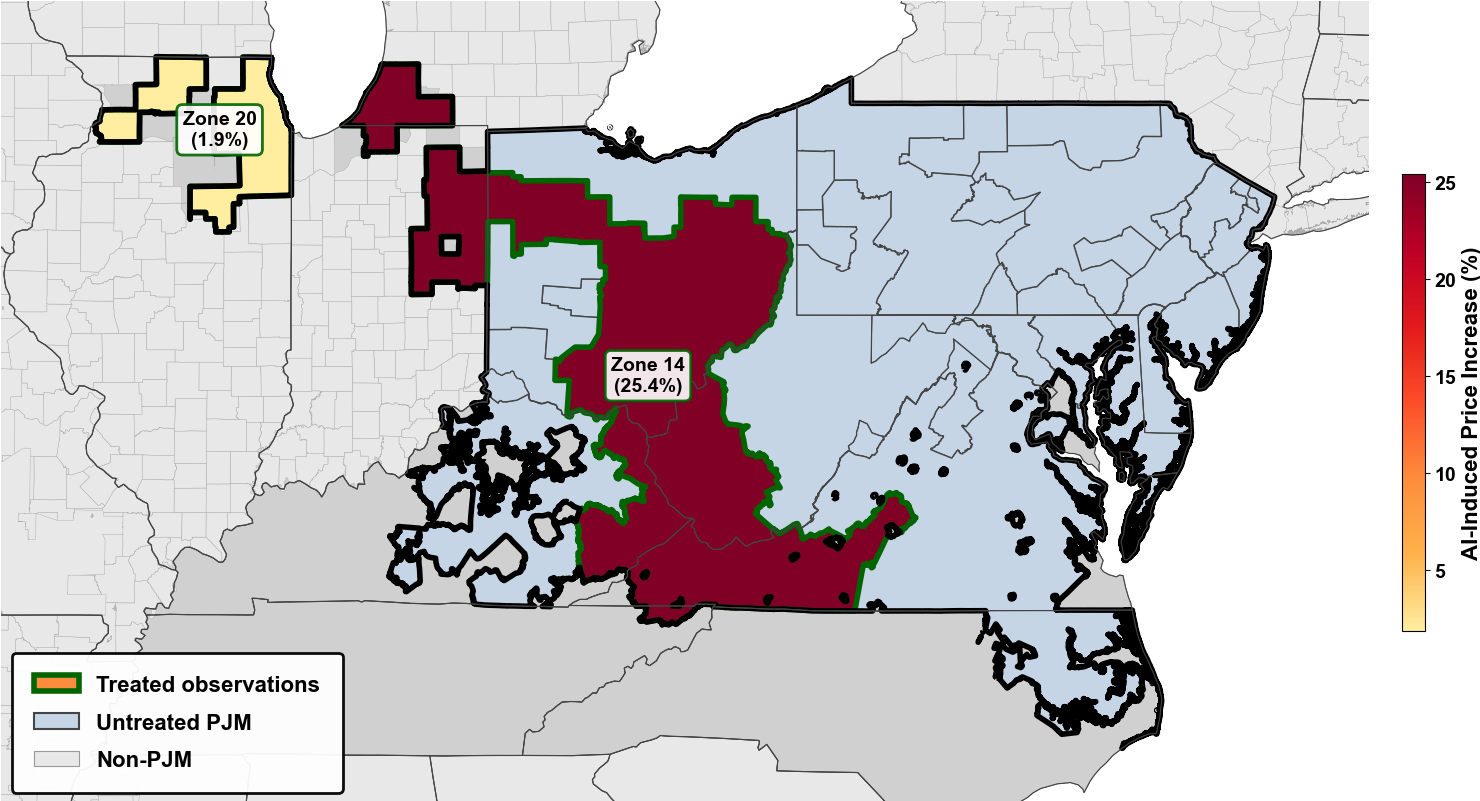}
    \caption{The map shows estimated AI-related wholesale prices increases in the PJM Interconnection in treated zones.}
    \label{PJMPrice}
    \vspace{-2em}
\end{figure}


We also estimate the expected price rise for retail customers, using estimates from the literature as documented in Appendix Subsection \ref{WholesaleRetailPassthrough}. While we are not able to provide definite evidence of the impacts on retail price pass through, it should be noted that retail electricity prices have increased overall at a rate surpassing inflation since 2020 \cite{Nicoletti_Malik_Tartar_2024,horsley2025electricbill,spp2024stateofthemarket}. Also, while increased investment in electric generation through power purchase agreements by tech firms may in the long run lead to falling retail prices, it appears that at present data center power usage is increasing wholesale prices by raising total demand, which requires dispatching higher-cost generators and thus increases the market clearing price—a price paid by all retail consumers.

 Using the most relevant wholesale-to-retail passthrough estimates for PJM of between .53 and .56 \cite{defeuilley2023pennsylvania}, \textbf{the expected consumer price rise in the most affected zone would be $\sim 13\%$}. However, using recent long-run econometric estimates of full passthrough from \cite{mackay2024wholesale}, the long-run expected rise for consumers may be closer to the full 25\%.
\section{Validity and Robustness}\label{Robustness}
We present two sets of analyses to address issues highlighted in Subsection \ref{DCASSUmptions}: formal testing to provide evidence for the parallel trends assumption, and placebo tests address potential issues related to external confounds or lack of specific knowledge about training. Additional tests are found in Appendix~\ref{App:AdditionalRobustnessChecks}.

\noindent\textbf{1) Formal Testing for Parallel Trends: } For power demand specifically, the formal joint test of pre-treatment coefficients using a Wald test of joint significance yields a chi-squared statistic of 0.99 (df = 1) with p-value = 0.3196. \textbf{We cannot reject the null hypothesis of parallel pre-trends at conventional significance levels.} This provides support for our assumption that treated and control generators followed similar demand trajectories prior to AI model releases i.e., our identification assumption is valid. Some of our samples demonstrate significant evidence against the the parallel trends test prior to propensity score matching (PSM) that we use for sample selection. However, across all matched samples for both power quality and power demand after PSM, we find \textit{no evidence of violations of the parallel trends assumption.}
This implies that the technique correctly handles the issue where it exists. 
 

\noindent\textbf{2) Placebo Tests}
We conduct two placebo exercises to demonstrate that the treatments are meaningful and not the result of random chance, we implement random date and random location placebo tests. Both yield a p-value of 1.00 for the coefficient in our fossil generation, CPQI, and THD analyses, confirming that our results do not arise from spuriously chosen locations or dates.

\section{Counterfactual Analyses}
\label{sec:counterfactuals}

We present three main counterfactual exercises to inform stakeholders and policymakers about how alternative evolutionary paths for AI compute might affect grid outcomes. These combine our econometric estimates with data on model characteristics and infrastructure configurations to explore three dimensions: model scaling, geographic redistribution, and workload composition. Each maps to a mechanism the model isolates: scaling traces the amplification of Proposition~\ref{prop:ai-amplifies}, geographic redistribution quantifies the spatial channel of Proposition~\ref{prop:spatial}, and on-site generation operationalizes the wedge-closing channel of Proposition~\ref{prop:colocation}.\\

\noindent\textbf{1) Model Scaling.} We combine data on model size (parameter counts) with our econometric estimates from our least conservative specification to extrapolate grid impacts at different scales.

\noindent\textbf{(a) Power Quality.} Fitting a trend line to power quality impacts for Llama-3.1 (405B), PaLM (540B), and Llama 4 Behemoth (2T), we find that scaling from a 2 trillion parameter model to a 4 trillion parameter model would increase power quality impact from \textbf{0.321 to 0.434}---a deterioration of roughly 0.113 units, or just over 35\% relative to the 2 trillion baseline. The relationship is exponential over log parameter counts, implying that larger models require proportionally greater parameter reductions to achieve the same percentage improvement in power quality. \textbf{(b) Power Demand.} Using an analogous extrapolation, we find that \textbf{each 1\% parameter increase leads to approximately 0.15\% higher power demand} within three months of model release. Scaling from 540 billion to one trillion parameters would \textbf{increase total power demand by 10.5\%}. For DALL-E, halving parameters would \textbf{reduce inference demand by approximately 300 GWh}---equivalent to the annual consumption of 27,000--30,000 homes.\\

\noindent\textbf{2) Training Compute.} Training compute requirements also have measurable effects: a 1\% increase in training FLOPs is associated with a \textbf{0.1\% increase in power demand}. For GPT-3.5, halving training compute would reduce energy use by enough to power 8,000--10,000 U.S.\ homes for a year.\\

\noindent\textbf{3) Colocated On-site Generation.} Data centers increasingly deploy colocated on-site generation to reduce grid dependence. Our inventory identifies 87 facilities with on-site capability and 12,401 without. We test whether on-site capacity moderates estimated effects by interacting treatment with on-site status in the utility-level strict DiD specification. Results indicate that on-site generation substantially mitigates measured grid impacts. The sign reversal for CPQI is notable and provides the most direct confirmation of the model's central mechanism (Proposition~\ref{prop:colocation}): data centers \textit{with} on-site generation are associated with improved power quality (negative coefficient), while those \textit{without} are associated with deterioration (positive coefficient). This is precisely the sign flip the model predicts when behind-the-meter capacity removes the operator's residual from the grid's balancing problem---\textbf{on-site generation absorbs demand spikes that would otherwise propagate to the grid}, closing the timing wedge for the served load and implying that aggregate grid impacts may understate total AI energy consumption.

Figure~\ref{fig:CounterfactualCombined} extends this to policy-relevant scenarios. Remarkably, universal adoption of on-site power would shift impact of data centers from a net detriment to power quality to a net improvement.\\


\noindent\textbf{4) Geographic Redistribution}


\noindent\textbf{(a) Distributed Inference (Edge Computing).} In our sample where we know AI data centers but not specific training and inference sites, we simulate a shift from cloud-based to edge-based inference, redistributing inference demand from concentrated hyperscaler facilities to smaller data centers located closer to population centers. Under this scenario, total demand added in a particular zone falls from a baseline of 162 MWh per hour over the entire treatment period to 4 MWh per hour. Figure~\ref{fig:CounterfactualCombined} illustrates this effect of shifting the composition of inference.


\noindent\textbf{(b) Concentration in Low-Population, High-Supply Regions.} Alternatively, we consider relocating both training and inference to regions with abundant electricity supply and lower baseline load, such as areas in the central United States with substantial renewable or fossil capacity. We simulate this by shifting demand from the baseline case where training and inference are assumed distributed to a case where the only data center for training and inference is the largest data center owned by a hyperscaler. This leads to an extreme concentration of added demand with nearly 100 GWh added near data centers in training locations. Figure~\ref{fig:CounterfactualCombined} shows this scenario.

This analysis has the important caveat that shifting to one data center for all training activity is extreme, with many possible configurations between full decentralization and full centralization. Centralization may also yield other benefits, such as concentration near renewable generation and reduced energy consumption from lower communication costs.


\begin{figure}
    \centering
    \includegraphics[width=.9\linewidth]{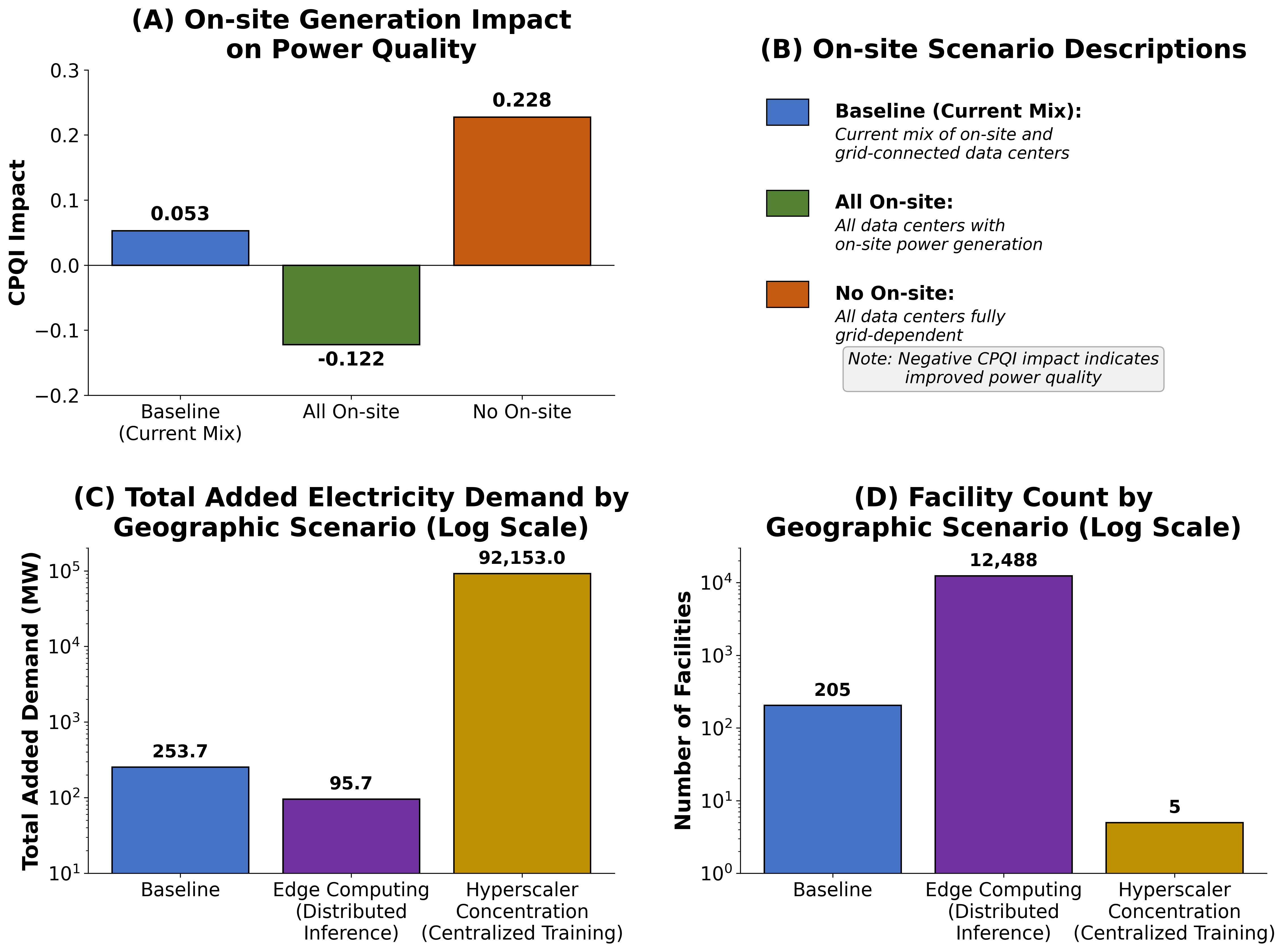}
    \caption{Effects of Geographic Concentration and On-site Generation on AI Impacts on the grid.}
    \label{fig:CounterfactualCombined}
    \vspace{-2em}
\end{figure}

\section{Conclusion}
Understanding how AI models affect U.S. power grids is central to energy reliability and security, and the question is not settled by the annual carbon ledgers hyperscalers report. We have argued and then shown that the grid consequences of AI demand turn on a timing wedge between credited renewable generation and consumed electrons, a wedge that survives full annual matching and that procurement and spatial design either widen or close. The model of Section~\ref{sec:model} makes this precise and generates five predictions; our econometric estimates, built on a novel dataset linking AI activity to local grid outcomes, bear them out. AI model releases significantly degrade local power quality and raise nearby fossil generation, wholesale prices rise by as much as 25\% in the most affected PJM zone, and impacts scale with model size---the super-proportional amplification the model attributes to spiky load. Most tellingly, data centers with on-site generation exhibit a sign reversal in power-quality effects, the exact signature the model predicts when behind-the-meter capacity removes a facility's residual from the grid. Our counterfactuals then quantify the levers the model identifies: edge inference, geographic reallocation toward high-supply regions, and colocated generation each substantially mitigate grid impacts, while certificate-only strategies do not.

The policy reading is direct. Because the externality is coupled to procurement design rather than to total consumption alone, instruments that look equivalent on an annual ledger are not equivalent for the grid: firm, co-located, storage-backed supply both closes the wedge and---by removing the generator's revenue risk---does the most to bring new clean capacity online, whereas unbundled certificates do neither. The most valuable extension of this work is therefore a direct test of the entry prediction (Proposition~\ref{prop:entry}): estimating the new renewable and natural-gas capacity additions induced by REC, PPA, and colocation strategies around data-center siting events. More broadly, the design we use combines econometric causal inference with domain-specific data to recover externalities that providers do not disclose. This design should transfer to other opaque infrastructure impacts of AI, from transmission congestion to cooling-water demand, wherever the certificate and the electron part ways.

\bibliography{Energy}

\begin{thebibliography}{71}
\expandafter\ifx\csname natexlab\endcsname\relax\def\natexlab#1{#1}\fi
\providecommand{\url}[1]{\texttt{#1}}
\providecommand{\href}[2]{#2}
\providecommand{\path}[1]{#1}
\providecommand{\DOIprefix}{doi:}
\providecommand{\ArXivprefix}{arXiv:}
\providecommand{\URLprefix}{URL: }
\providecommand{\Pubmedprefix}{pmid:}
\providecommand{\doi}[1]{\href{http://dx.doi.org/#1}{\path{#1}}}
\providecommand{\Pubmed}[1]{\href{pmid:#1}{\path{#1}}}
\providecommand{\bibinfo}[2]{#2}
\ifx\xfnm\relax \def\xfnm[#1]{\unskip,\space#1}\fi
\bibitem[{Abadie(2005)}]{abadie2005semiparametric}
\bibinfo{author}{Abadie, A.}, \bibinfo{year}{2005}.
\newblock \bibinfo{title}{Semiparametric difference-in-differences estimators}.
\newblock \bibinfo{journal}{Review of Economic Studies} \bibinfo{volume}{72},
  \bibinfo{pages}{1--19}.
\newblock \DOIprefix\doi{10.1111/0034-6527.00321}.
\bibitem[{Agency(2022)}]{EPA_CAPMD_PowerSector2022}
\bibinfo{author}{Agency, U.S.E.P.}, \bibinfo{year}{2022}.
\newblock \bibinfo{title}{Power sector emissions data (campd): Clean air
  markets program data}.
\newblock \bibinfo{howpublished}{\url{https://campd.epa.gov/}}.
\newblock \bibinfo{note}{Office of Atmospheric Protection, Clean Air Markets
  Division. Accessed on Month Day, Year}.
\bibitem[{{Anthropic}(2023)}]{anthropic_claude2_1_2023}
\bibinfo{author}{{Anthropic}}, \bibinfo{year}{2023}.
\newblock \bibinfo{title}{Claude 2.1}.
\newblock \URLprefix \url{https://www.anthropic.com/news/claude-2-1}.
\bibitem[{Austin(2011)}]{austin2011introduction}
\bibinfo{author}{Austin, P.C.}, \bibinfo{year}{2011}.
\newblock \bibinfo{title}{An introduction to propensity score methods for
  reducing the effects of confounding in observational studies}.
\newblock \bibinfo{journal}{Multivariate Behavioral Research}
  \bibinfo{volume}{46}, \bibinfo{pages}{399--424}.
\newblock \DOIprefix\doi{10.1080/00273171.2011.568786}.
\bibitem[{Berthelot et~al.(2024)Berthelot, Caron, Jay and Lef{\`e}vre}]{RN5}
\bibinfo{author}{Berthelot, A.}, \bibinfo{author}{Caron, E.},
  \bibinfo{author}{Jay, M.}, \bibinfo{author}{Lef{\`e}vre, L.},
  \bibinfo{year}{2024}.
\newblock \bibinfo{title}{Estimating the environmental impact of generative-ai
  services using an lca-based methodology}.
\newblock \bibinfo{journal}{Procedia CIRP} \bibinfo{volume}{122},
  \bibinfo{pages}{707--712}.
\bibitem[{Borusyak et~al.(2024)Borusyak, Jaravel and
  Spiess}]{borusyak2024revisiting}
\bibinfo{author}{Borusyak, K.}, \bibinfo{author}{Jaravel, X.},
  \bibinfo{author}{Spiess, J.}, \bibinfo{year}{2024}.
\newblock \bibinfo{title}{Revisiting event-study designs: robust and efficient
  estimation}.
\newblock \bibinfo{journal}{Review of Economic Studies} \bibinfo{volume}{91},
  \bibinfo{pages}{3253--3285}.
\bibitem[{Brown et~al.(2020)Brown, Tsai, Woo, Zarnikau and
  Zhu}]{brown2020texas}
\bibinfo{author}{Brown, D.P.}, \bibinfo{author}{Tsai, C.H.},
  \bibinfo{author}{Woo, C.K.}, \bibinfo{author}{Zarnikau, J.},
  \bibinfo{author}{Zhu, S.}, \bibinfo{year}{2020}.
\newblock \bibinfo{title}{Residential electricity pricing in {Texas}'s
  competitive retail market}.
\newblock \bibinfo{journal}{Energy Economics} \bibinfo{volume}{92},
  \bibinfo{pages}{104953}.
\newblock \DOIprefix\doi{10.1016/j.eneco.2020.104953}. \bibinfo{note}{estimates
  43--47\% pass-through of wholesale price forecasts to retail quotes in
  ERCOT}.
\bibitem[{Callaway and Sant’Anna(2021)}]{callaway2021difference}
\bibinfo{author}{Callaway, B.}, \bibinfo{author}{Sant’Anna, P.H.},
  \bibinfo{year}{2021}.
\newblock \bibinfo{title}{Difference-in-differences with multiple time
  periods}.
\newblock \bibinfo{journal}{Journal of econometrics} \bibinfo{volume}{225},
  \bibinfo{pages}{200--230}.
\bibitem[{Card and Krueger(1994)}]{card1994minimum}
\bibinfo{author}{Card, D.}, \bibinfo{author}{Krueger, A.B.},
  \bibinfo{year}{1994}.
\newblock \bibinfo{title}{Minimum wages and employment: A case study of the
  fast-food industry in new jersey and pennsylvania}.
\newblock \bibinfo{journal}{American Economic Review} \bibinfo{volume}{84},
  \bibinfo{pages}{772--793}.
\bibitem[{Cengiz et~al.(2019)Cengiz, Dube, Lindner and
  Zipperer}]{cengiz2019effect}
\bibinfo{author}{Cengiz, D.}, \bibinfo{author}{Dube, A.},
  \bibinfo{author}{Lindner, A.}, \bibinfo{author}{Zipperer, B.},
  \bibinfo{year}{2019}.
\newblock \bibinfo{title}{The effect of minimum wages on low-wage jobs}.
\newblock \bibinfo{journal}{The Quarterly Journal of Economics}
  \bibinfo{volume}{134}, \bibinfo{pages}{1405--1454}.
\bibitem[{Chen et~al.(2019)Chen, Zhang, Caramanis and
  Coskun}]{chen2019datacenter}
\bibinfo{author}{Chen, H.}, \bibinfo{author}{Zhang, Y.},
  \bibinfo{author}{Caramanis, M.C.}, \bibinfo{author}{Coskun, A.K.},
  \bibinfo{year}{2019}.
\newblock \bibinfo{title}{Data center participation in demand response programs
  with quality-of-service guarantees}, in: \bibinfo{booktitle}{Proceedings of
  the 10th ACM International Conference on Future Energy Systems},
  \bibinfo{publisher}{ACM}, \bibinfo{address}{New York, NY, USA}. pp.
  \bibinfo{pages}{154--164}.
\newblock \DOIprefix\doi{10.1145/3307772.3328309}.
\bibitem[{Chien et~al.(2023)Chien, Lin, Nguyen, Rao, Sharma and
  Wijayawardana}]{chien2023adapting}
\bibinfo{author}{Chien, A.A.}, \bibinfo{author}{Lin, L.},
  \bibinfo{author}{Nguyen, H.}, \bibinfo{author}{Rao, V.},
  \bibinfo{author}{Sharma, T.}, \bibinfo{author}{Wijayawardana, R.},
  \bibinfo{year}{2023}.
\newblock \bibinfo{title}{Adapting datacenter capacity for greener datacenters
  and grid}, in: \bibinfo{booktitle}{Proceedings of the 14th ACM International
  Conference on Future Energy Systems}, \bibinfo{publisher}{ACM},
  \bibinfo{address}{New York, NY, USA}.
\newblock \DOIprefix\doi{10.1145/3575813.3595197}.
\bibitem[{Crump et~al.(2009)Crump, Hotz, Imbens and Mitnik}]{crump2009dealing}
\bibinfo{author}{Crump, R.K.}, \bibinfo{author}{Hotz, V.J.},
  \bibinfo{author}{Imbens, G.W.}, \bibinfo{author}{Mitnik, O.A.},
  \bibinfo{year}{2009}.
\newblock \bibinfo{title}{Dealing with limited overlap in estimation of average
  treatment effects}.
\newblock \bibinfo{journal}{Biometrika} \bibinfo{volume}{96},
  \bibinfo{pages}{187--199}.
\newblock \DOIprefix\doi{10.1093/biomet/asn055}.
\bibitem[{De~Feuillley and Fontaine(2023)}]{defeuilley2023pennsylvania}
\bibinfo{author}{De~Feuillley, C.}, \bibinfo{author}{Fontaine, G.},
  \bibinfo{year}{2023}.
\newblock \bibinfo{title}{Pass-through in residential retail electricity
  competition: {Evidence} from {Pennsylvania}}.
\newblock \bibinfo{journal}{Utilities Policy} \bibinfo{volume}{85},
  \bibinfo{pages}{101678}.
\newblock \DOIprefix\doi{10.1016/j.jup.2023.101678}. \bibinfo{note}{reports
  52.5\% (EDC) and 56.3\% (EGS) pass-through rates for PJM Pennsylvania
  residential}.
\bibitem[{Deryugina et~al.(2020)Deryugina, MacKay and
  Reif}]{deryugina2020longrun}
\bibinfo{author}{Deryugina, T.}, \bibinfo{author}{MacKay, A.},
  \bibinfo{author}{Reif, J.}, \bibinfo{year}{2020}.
\newblock \bibinfo{title}{The Long-Run Dynamics of Electricity Demand:
  {Evidence} from Municipal Aggregation}.
\newblock \bibinfo{type}{Working Paper} \bibinfo{number}{23483}. National
  Bureau of Economic Research.
\newblock \DOIprefix\doi{10.3386/w23483}. \bibinfo{note}{uses Illinois
  municipal aggregation; finds residential elasticity of $-0.09$ (short-run) to
  $-0.35$ (long-run)}.
\bibitem[{Deshpande and Li(2019)}]{deshpande2019screened}
\bibinfo{author}{Deshpande, M.}, \bibinfo{author}{Li, Y.},
  \bibinfo{year}{2019}.
\newblock \bibinfo{title}{Who is screened out? application costs and the
  targeting of disability programs}.
\newblock \bibinfo{journal}{American Economic Journal: Economic Policy}
  \bibinfo{volume}{11}, \bibinfo{pages}{213--248}.
\bibitem[{EIA(2024)}]{USHouseholdConsumption}
\bibinfo{author}{EIA}, \bibinfo{year}{2024}.
\newblock \bibinfo{title}{How much electricity does an american home use?}
\newblock \URLprefix \url{https://www.eia.gov/tools/faqs/faq.php?id=97&t=3}.
  \bibinfo{note}{average U.S. residential electricity consumption is about
  10,791 kWh/year, approximately 1.23 kW continuous load}.
\bibitem[{EPRI(2024)}]{RN3}
\bibinfo{author}{EPRI}, \bibinfo{year}{2024}.
\newblock \bibinfo{title}{Powering Intelligence: Analyzing Artificial
  Intelligence and Data Center Energy Consumption}.
\newblock \bibinfo{type}{Report}.
\bibitem[{Fowlie et~al.(2012)Fowlie, Holland and Mansur}]{fowlie2012industrial}
\bibinfo{author}{Fowlie, M.}, \bibinfo{author}{Holland, S.P.},
  \bibinfo{author}{Mansur, E.T.}, \bibinfo{year}{2012}.
\newblock \bibinfo{title}{What do emissions markets deliver and to whom?
  evidence from southern california’s nox trading program}.
\newblock \bibinfo{journal}{American Economic Review} \bibinfo{volume}{102},
  \bibinfo{pages}{965--993}.
\bibitem[{Goggin and Gramlich(2025)}]{gridstrategies2025}
\bibinfo{author}{Goggin, M.}, \bibinfo{author}{Gramlich, R.},
  \bibinfo{year}{2025}.
\newblock \bibinfo{title}{Power Demand Forecasts Revised Up for Third Year
  Running, Led by Data Centers}.
\newblock \bibinfo{type}{Technical Report}. Grid Strategies LLC.
\newblock \URLprefix
  \url{https://gridstrategiesllc.com/wp-content/uploads/Grid-Strategies-National-Load-Growth-Report-2025.pdf}.
\bibitem[{Goodman-Bacon(2021)}]{goodman2021difference}
\bibinfo{author}{Goodman-Bacon, A.}, \bibinfo{year}{2021}.
\newblock \bibinfo{title}{Difference-in-differences with variation in treatment
  timing}.
\newblock \bibinfo{journal}{Journal of econometrics} \bibinfo{volume}{225},
  \bibinfo{pages}{254--277}.
\bibitem[{{Google Cloud}(2023)}]{googlecloud2023mlhub}
\bibinfo{author}{{Google Cloud}}, \bibinfo{year}{2023}.
\newblock \bibinfo{title}{{Google Cloud} opens its first {ML Hub}, focused on
  {AI}-optimized compute}.
\newblock
  \bibinfo{howpublished}{\url{https://cloud.google.com/blog/products/ai-machine-learning/google-cloud-launches-new-ai-ml-hub}}.
\newblock \bibinfo{note}{Accessed: 2025}.
\bibitem[{{Google Developers Blog}(2023)}]{google_palm_api_makersuite_2023}
\bibinfo{author}{{Google Developers Blog}}, \bibinfo{year}{2023}.
\newblock \bibinfo{title}{Announcing palm api and makersuite}.
\newblock \URLprefix
  \url{https://developers.googleblog.com/en/announcing-palm-api-and-makersuite/}.
\bibitem[{Guidi et~al.(2024)Guidi, Dominici, Gilmour, Butler, Bell, Delaney and
  Bargagli-Stoffi}]{guidi2024environmental}
\bibinfo{author}{Guidi, G.}, \bibinfo{author}{Dominici, F.},
  \bibinfo{author}{Gilmour, J.}, \bibinfo{author}{Butler, K.},
  \bibinfo{author}{Bell, E.}, \bibinfo{author}{Delaney, S.},
  \bibinfo{author}{Bargagli-Stoffi, F.J.}, \bibinfo{year}{2024}.
\newblock \bibinfo{title}{Environmental burden of united states data centers in
  the artificial intelligence era}.
\newblock \bibinfo{journal}{arXiv preprint arXiv:2411.09786} .
\bibitem[{Hirano et~al.(2003)Hirano, Imbens and Ridder}]{hirano2003efficient}
\bibinfo{author}{Hirano, K.}, \bibinfo{author}{Imbens, G.W.},
  \bibinfo{author}{Ridder, G.}, \bibinfo{year}{2003}.
\newblock \bibinfo{title}{Efficient estimation of average treatment effects
  using the estimated propensity score}.
\newblock \bibinfo{journal}{Econometrica} \bibinfo{volume}{71},
  \bibinfo{pages}{1161--1189}.
\newblock \DOIprefix\doi{10.1111/1468-0262.00442}.
\bibitem[{Horsley(2025)}]{horsley2025electricbill}
\bibinfo{author}{Horsley, S.}, \bibinfo{year}{2025}.
\newblock \bibinfo{title}{Is your electric bill going up? ai is partly to
  blame}.
\newblock \bibinfo{journal}{NPR} \URLprefix
  \url{https://www.npr.org/2025/11/06/nx-s1-5597971/electricity-bills-utilities-ai}.
  \bibinfo{note}{published November 6, 2025}.
\bibitem[{IEA(2024)}]{RN1}
\bibinfo{author}{IEA}, \bibinfo{year}{2024}.
\newblock \bibinfo{title}{Energy and AI World Energy Outlook Special Report}.
\newblock \bibinfo{type}{Report}.
\bibitem[{Jacobs(2025)}]{jacobs2025}
\bibinfo{author}{Jacobs, M.}, \bibinfo{year}{2025}.
\newblock \bibinfo{title}{Data centers are already increasing your energy
  bills. we have the receipts.}
\newblock \bibinfo{howpublished}{Union of Concerned Scientists Blog}.
\newblock \URLprefix
  \url{https://blog.ucsusa.org/mike-jacobs/data-centers-are-already-increasing-your-energy-bills/}.
\bibitem[{{KCCI 8 News}(2023)}]{smith2023iowa}
\bibinfo{author}{{KCCI 8 News}}, \bibinfo{year}{2023}.
\newblock \bibinfo{title}{{AI} technology behind {ChatGPT} was built in {Iowa}
  --- with a lot of water}.
\newblock
  \bibinfo{howpublished}{\url{https://www.kcci.com/article/ai-technology-behind-chatgpt-built-in-west-des-moines-iowa-microsoft/45081445}}.
\newblock \bibinfo{note}{Accessed: 2025}.
\bibitem[{Kim et~al.(2025)Kim, Shin, Chung and Rhu}]{kim2025cost}
\bibinfo{author}{Kim, J.}, \bibinfo{author}{Shin, B.}, \bibinfo{author}{Chung,
  J.}, \bibinfo{author}{Rhu, M.}, \bibinfo{year}{2025}.
\newblock \bibinfo{title}{The cost of dynamic reasoning: Demystifying ai agents
  and test-time scaling from an ai infrastructure perspective}.
\newblock \URLprefix \url{https://arxiv.org/abs/2506.04301},
  \href{http://arxiv.org/abs/2506.04301}{\tt arXiv:2506.04301}.
\bibitem[{Klingert et~al.(2018)Klingert, Niedermeier, Dupont, Giuliani, Schulze
  and de~Meer}]{klingert2018mapping}
\bibinfo{author}{Klingert, S.}, \bibinfo{author}{Niedermeier, F.},
  \bibinfo{author}{Dupont, C.}, \bibinfo{author}{Giuliani, G.},
  \bibinfo{author}{Schulze, T.}, \bibinfo{author}{de~Meer, H.},
  \bibinfo{year}{2018}.
\newblock \bibinfo{title}{Mapping data centre business types with power
  management strategies to identify demand response candidates}, in:
  \bibinfo{booktitle}{Proceedings of the 9th International Conference on Future
  Energy Systems}, \bibinfo{publisher}{ACM}, \bibinfo{address}{New York, NY,
  USA}. pp. \bibinfo{pages}{492--503}.
\newblock \DOIprefix\doi{10.1145/3208903.3213521}.
\bibitem[{Laughner et~al.(2024)Laughner, Heckman, Manzur and
  Sloop}]{Laughner2024THD}
\bibinfo{author}{Laughner, T.}, \bibinfo{author}{Heckman, S.},
  \bibinfo{author}{Manzur, A.}, \bibinfo{author}{Sloop, C.},
  \bibinfo{year}{2024}.
\newblock \bibinfo{title}{Analysis of total harmonic distortion on the u.s.
  electric grid}.
\newblock \URLprefix
  \url{https://www.whiskerlabs.com/analysis-of-total-harmonic-distortion-on-the-us-electric-grid/}.
  \bibinfo{note}{accessed January 28, 2026}.
\bibitem[{Lechowicz et~al.(2025)Lechowicz, Comden and
  Bernstein}]{optimizing-grid}
\bibinfo{author}{Lechowicz, A.}, \bibinfo{author}{Comden, J.},
  \bibinfo{author}{Bernstein, A.}, \bibinfo{year}{2025}.
\newblock \bibinfo{title}{Optimizing individualized incentives from grid
  measurements under limited knowledge of agent behavior}, in:
  \bibinfo{booktitle}{Proceedings of the 16th ACM International Conference on
  Future and Sustainable Energy Systems}, \bibinfo{publisher}{Association for
  Computing Machinery}, \bibinfo{address}{New York, NY, USA}. p.
  \bibinfo{pages}{544–563}.
\newblock \URLprefix \url{https://doi.org/10.1145/3679240.3734594},
  \DOIprefix\doi{10.1145/3679240.3734594}.
\bibitem[{Li et~al.(2023)Li, Xie, Dou and Guo}]{li2023gnn}
\bibinfo{author}{Li, Z.}, \bibinfo{author}{Xie, C.}, \bibinfo{author}{Dou, Z.},
  \bibinfo{author}{Guo, M.}, \bibinfo{year}{2023}.
\newblock \bibinfo{title}{A {GNN}-based day ahead carbon intensity forecasting
  model for cross-border power grids}, in: \bibinfo{booktitle}{Proceedings of
  the 14th ACM International Conference on Future Energy Systems},
  \bibinfo{publisher}{ACM}, \bibinfo{address}{New York, NY, USA}. pp.
  \bibinfo{pages}{199--210}.
\newblock \DOIprefix\doi{10.1145/3575813.3597346}.
\bibitem[{Lin et~al.(2024a)Lin, Wijayawardana, Rao, Nguyen, Gnibga and
  Chien}]{lin2024exploding}
\bibinfo{author}{Lin, L.}, \bibinfo{author}{Wijayawardana, R.},
  \bibinfo{author}{Rao, V.}, \bibinfo{author}{Nguyen, H.},
  \bibinfo{author}{Gnibga, E.W.}, \bibinfo{author}{Chien, A.A.},
  \bibinfo{year}{2024}a.
\newblock \bibinfo{title}{Exploding {AI} power use: An opportunity to rethink
  grid planning and management}, in: \bibinfo{booktitle}{Proceedings of the
  15th ACM International Conference on Future and Sustainable Energy Systems},
  \bibinfo{publisher}{ACM}, \bibinfo{address}{New York, NY, USA}. pp.
  \bibinfo{pages}{434--441}.
\newblock \DOIprefix\doi{10.1145/3632775.3661959}.
\bibitem[{Lin et~al.(2024b)Lin, Wijayawardana, Rao, Nguyen, GNIBGA and
  Chien}]{ai-grid-planning}
\bibinfo{author}{Lin, L.}, \bibinfo{author}{Wijayawardana, R.},
  \bibinfo{author}{Rao, V.}, \bibinfo{author}{Nguyen, H.},
  \bibinfo{author}{GNIBGA, E.W.}, \bibinfo{author}{Chien, A.A.},
  \bibinfo{year}{2024}b.
\newblock \bibinfo{title}{Exploding ai power use: an opportunity to rethink
  grid planning and management}, in: \bibinfo{booktitle}{Proceedings of the
  15th ACM International Conference on Future and Sustainable Energy Systems},
  \bibinfo{publisher}{Association for Computing Machinery},
  \bibinfo{address}{New York, NY, USA}. p. \bibinfo{pages}{434–441}.
\newblock \URLprefix \url{https://doi.org/10.1145/3632775.3661959},
  \DOIprefix\doi{10.1145/3632775.3661959}.
\bibitem[{Lindberg et~al.(2021)Lindberg, Abdennadher, Lesieutre and
  Roald}]{lindberg2021guide}
\bibinfo{author}{Lindberg, J.}, \bibinfo{author}{Abdennadher, Y.},
  \bibinfo{author}{Lesieutre, B.C.}, \bibinfo{author}{Roald, L.A.},
  \bibinfo{year}{2021}.
\newblock \bibinfo{title}{A guide to reducing carbon emissions through data
  center geographical load shifting}, in: \bibinfo{booktitle}{Proceedings of
  the 12th ACM International Conference on Future Energy Systems},
  \bibinfo{publisher}{ACM}, \bibinfo{address}{New York, NY, USA}. pp.
  \bibinfo{pages}{157--168}.
\newblock \DOIprefix\doi{10.1145/3447555.3466582}.
\bibitem[{Liu et~al.(2014)Liu, Liu, Low and Wierman}]{liu2014pricing}
\bibinfo{author}{Liu, Z.}, \bibinfo{author}{Liu, I.}, \bibinfo{author}{Low,
  S.}, \bibinfo{author}{Wierman, A.}, \bibinfo{year}{2014}.
\newblock \bibinfo{title}{Pricing data center demand response}, in:
  \bibinfo{booktitle}{Proceedings of the ACM SIGMETRICS/International
  Conference on Measurement and Modeling of Computer Systems},
  \bibinfo{publisher}{ACM}, \bibinfo{address}{New York, NY, USA}. pp.
  \bibinfo{pages}{111--123}.
\newblock \DOIprefix\doi{10.1145/2637364.2592004}.
\bibitem[{Luby et~al.(2025)Luby, Sergi, Azevedo, Davis and Bergman}]{cmu2025}
\bibinfo{author}{Luby, I.}, \bibinfo{author}{Sergi, B.},
  \bibinfo{author}{Azevedo, I.}, \bibinfo{author}{Davis, L.},
  \bibinfo{author}{Bergman, A.}, \bibinfo{year}{2025}.
\newblock \bibinfo{title}{Electricity Grid Impacts of Rising Demand from Data
  Centers and Cryptocurrency Mining Operations}.
\newblock \bibinfo{type}{Technical Report}. Carnegie Mellon University, Scott
  Institute for Energy Innovation.
\newblock \URLprefix
  \url{https://energy.cmu.edu/_files/documents/electricity-grid-impacts-of-rising-demand-from-data-centers-and-cryptocurrency-mining-operations.pdf}.
\bibitem[{MacKay and Mercadal(2024)}]{mackay2024wholesale}
\bibinfo{author}{MacKay, A.}, \bibinfo{author}{Mercadal, I.},
  \bibinfo{year}{2024}.
\newblock \bibinfo{title}{Deregulation, market power, and prices: {Evidence}
  from the electricity sector}.
\newblock \bibinfo{journal}{American Economic Journal: Microeconomics}
  \bibinfo{note}{Finds near one-for-one long-run pass-through of utility costs
  to retail rates; \$14.2/MWh margin increase 2000--2016}.
\bibitem[{Maji et~al.(2022a)Maji, Bashir, Irwin, Shenoy and
  Sitaraman}]{maji2022dacf}
\bibinfo{author}{Maji, D.}, \bibinfo{author}{Bashir, N.},
  \bibinfo{author}{Irwin, D.}, \bibinfo{author}{Shenoy, P.},
  \bibinfo{author}{Sitaraman, R.K.}, \bibinfo{year}{2022}a.
\newblock \bibinfo{title}{{DACF}: Day-ahead carbon intensity forecasting of
  power grids using machine learning}, in: \bibinfo{booktitle}{Proceedings of
  the 13th ACM International Conference on Future Energy Systems},
  \bibinfo{publisher}{ACM}, \bibinfo{address}{New York, NY, USA}. pp.
  \bibinfo{pages}{188--192}.
\newblock \DOIprefix\doi{10.1145/3538637.3538849}.
\bibitem[{Maji et~al.(2024)Maji, Bashir, Irwin, Shenoy and
  Sitaraman}]{maji2024crossroads}
\bibinfo{author}{Maji, D.}, \bibinfo{author}{Bashir, N.},
  \bibinfo{author}{Irwin, D.}, \bibinfo{author}{Shenoy, P.},
  \bibinfo{author}{Sitaraman, R.K.}, \bibinfo{year}{2024}.
\newblock \bibinfo{title}{Data centers carbon emissions at crossroads: An
  empirical study}.
\newblock \bibinfo{journal}{ACM SIGEnergy Energy Informatics Review}
  \bibinfo{volume}{4}.
\newblock \DOIprefix\doi{10.1145/3757892.3757899}.
\bibitem[{Maji et~al.(2025)Maji, Hanafy, Wu, Irwin, Shenoy and
  Sitaraman}]{datacenter-carbon}
\bibinfo{author}{Maji, D.}, \bibinfo{author}{Hanafy, W.A.},
  \bibinfo{author}{Wu, L.}, \bibinfo{author}{Irwin, D.},
  \bibinfo{author}{Shenoy, P.}, \bibinfo{author}{Sitaraman, R.K.},
  \bibinfo{year}{2025}.
\newblock \bibinfo{title}{Data centers carbon emissions at crossroads: An
  empirical study}.
\newblock \bibinfo{journal}{SIGENERGY Energy Inform. Rev.} \bibinfo{volume}{5},
  \bibinfo{pages}{48–55}.
\newblock \URLprefix \url{https://doi.org/10.1145/3757892.3757899},
  \DOIprefix\doi{10.1145/3757892.3757899}.
\bibitem[{Maji et~al.(2022b)Maji, Sitaraman and
  Shenoy}]{carbon-intensity-forecasting}
\bibinfo{author}{Maji, D.}, \bibinfo{author}{Sitaraman, R.K.},
  \bibinfo{author}{Shenoy, P.}, \bibinfo{year}{2022}b.
\newblock \bibinfo{title}{Dacf: day-ahead carbon intensity forecasting of power
  grids using machine learning}, in: \bibinfo{booktitle}{Proceedings of the
  Thirteenth ACM International Conference on Future Energy Systems},
  \bibinfo{publisher}{Association for Computing Machinery},
  \bibinfo{address}{New York, NY, USA}. p. \bibinfo{pages}{188–192}.
\newblock \URLprefix \url{https://doi.org/10.1145/3538637.3538849},
  \DOIprefix\doi{10.1145/3538637.3538849}.
\bibitem[{{Meta AI}(2023)}]{meta_code_llama_paper_2023}
\bibinfo{author}{{Meta AI}}, \bibinfo{year}{2023}.
\newblock \bibinfo{title}{Code llama: Open foundation models for code}.
\newblock \URLprefix
  \url{https://ai.meta.com/research/publications/code-llama-open-foundation-models-for-code/}.
\bibitem[{{Meta Newsroom}(2023)}]{meta_llama2_2023}
\bibinfo{author}{{Meta Newsroom}}, \bibinfo{year}{2023}.
\newblock \bibinfo{title}{Llama 2}.
\newblock \URLprefix \url{https://about.fb.com/news/2023/07/llama-2/}.
\bibitem[{Morrison et~al.(2025)Morrison, Na, Fernandez, Dettmers, Strubell and
  Dodge}]{morrison2025holistically}
\bibinfo{author}{Morrison, J.}, \bibinfo{author}{Na, C.},
  \bibinfo{author}{Fernandez, J.}, \bibinfo{author}{Dettmers, T.},
  \bibinfo{author}{Strubell, E.}, \bibinfo{author}{Dodge, J.},
  \bibinfo{year}{2025}.
\newblock \bibinfo{title}{Holistically evaluating the environmental impact of
  creating language models}, in: \bibinfo{booktitle}{The Thirteenth
  International Conference on Learning Representations (ICLR) Spotlight}.
\newblock \URLprefix \url{https://arxiv.org/abs/2503.05804},
  \href{http://arxiv.org/abs/2503.05804}{\tt arXiv:2503.05804}.
\bibitem[{Murino et~al.(2023)Murino, Monaco, Nielsen, Liu, Esposito and
  Scognamiglio}]{murino2023sustainable}
\bibinfo{author}{Murino, T.}, \bibinfo{author}{Monaco, R.},
  \bibinfo{author}{Nielsen, P.S.}, \bibinfo{author}{Liu, X.},
  \bibinfo{author}{Esposito, G.}, \bibinfo{author}{Scognamiglio, C.},
  \bibinfo{year}{2023}.
\newblock \bibinfo{title}{Sustainable energy data centres: A holistic
  conceptual framework for design and operations}.
\newblock \bibinfo{journal}{Energies} \bibinfo{volume}{16},
  \bibinfo{pages}{5764}.
\bibitem[{{National Electrical Manufacturers Association}(2016)}]{NEMA_MG1}
\bibinfo{author}{{National Electrical Manufacturers Association}},
  \bibinfo{year}{2016}.
\newblock \bibinfo{title}{Ansi/\/nema mg-1-2016: Motors and generators}.
\newblock \URLprefix
  \url{https://www.nema.org/docs/default-source/standards-document-library/mg-1-part-31-watermark.pdf}.
  \bibinfo{note}{section IV Part 31 — Relative equivalent temperature rise:
  10\,\degree C difference shortens insulation life by 50\%}.
\bibitem[{Nicoletti et~al.(2024a)Nicoletti, Malik and
  Tartar}]{Nicoletti_Malik_Tartar_2024}
\bibinfo{author}{Nicoletti, L.}, \bibinfo{author}{Malik, N.},
  \bibinfo{author}{Tartar, A.}, \bibinfo{year}{2024}a.
\newblock \bibinfo{title}{Ai needs so much power, it’s making yours worse}.
\newblock \bibinfo{journal}{Bloomberg} \URLprefix
  \url{https://www.bloomberg.com/graphics/2024-ai-power-home-appliances/}.
  \bibinfo{note}{published December 27, 2024. An exclusive Bloomberg graphics
  story.}
\bibitem[{Nicoletti et~al.(2024b)Nicoletti, Malik and Tartar}]{nicoletti2024}
\bibinfo{author}{Nicoletti, L.}, \bibinfo{author}{Malik, N.},
  \bibinfo{author}{Tartar, A.}, \bibinfo{year}{2024}b.
\newblock \bibinfo{title}{Ai power needs threaten billions in damages for us
  households}.
\newblock \bibinfo{journal}{Bloomberg} \URLprefix
  \url{https://www.bloomberg.com/graphics/2024-ai-power-home-appliances/}.
\bibitem[{{NVIDIA} and {Microsoft}(2021)}]{nvidia2021megatron}
\bibinfo{author}{{NVIDIA}}, \bibinfo{author}{{Microsoft}},
  \bibinfo{year}{2021}.
\newblock \bibinfo{title}{{Megatron-Turing NLG}: The largest and most powerful
  generative language model}.
\newblock \bibinfo{howpublished}{NVIDIA Developer Blog}.
\newblock \bibinfo{note}{Trained on NVIDIA Selene supercomputer}.
\bibitem[{{OpenAI}(2021)}]{openai_api_nowaitlist_2021}
\bibinfo{author}{{OpenAI}}, \bibinfo{year}{2021}.
\newblock \bibinfo{title}{Api no waitlist}.
\newblock \URLprefix \url{https://openai.com/index/api-no-waitlist/}.
\bibitem[{{Pacific Northwest National Laboratory}(2022)}]{pnnl2022}
\bibinfo{author}{{Pacific Northwest National Laboratory}},
  \bibinfo{year}{2022}.
\newblock \bibinfo{title}{Price Formation and Grid Operation Impacts from
  Variable Renewable Energy}.
\newblock \bibinfo{type}{Technical Report} \bibinfo{number}{PNNL-33470}. PNNL.
\newblock \URLprefix
  \url{https://www.pnnl.gov/main/publications/external/technical_reports/PNNL-33470.pdf}.
\bibitem[{Patel et~al.(2024)Patel, Choukse, Zhang, Goiri, Warrier, Mahalingam
  and Bianchini}]{RN4}
\bibinfo{author}{Patel, P.}, \bibinfo{author}{Choukse, E.},
  \bibinfo{author}{Zhang, C.}, \bibinfo{author}{Goiri, {\'I}.},
  \bibinfo{author}{Warrier, B.}, \bibinfo{author}{Mahalingam, N.},
  \bibinfo{author}{Bianchini, R.}, \bibinfo{year}{2024}.
\newblock \bibinfo{title}{Characterizing power management opportunities for
  llms in the cloud}, in: \bibinfo{booktitle}{Proceedings of the 29th ACM
  International Conference on Architectural Support for Programming Languages
  and Operating Systems, Volume 3}, pp. \bibinfo{pages}{207--222}.
\bibitem[{Patterson et~al.(2022)Patterson, Gonzalez, H{\"o}lzle, Le, Liang,
  Munguia, Rothchild, So, Texier and Dean}]{RN8}
\bibinfo{author}{Patterson, D.}, \bibinfo{author}{Gonzalez, J.},
  \bibinfo{author}{H{\"o}lzle, U.}, \bibinfo{author}{Le, Q.},
  \bibinfo{author}{Liang, C.}, \bibinfo{author}{Munguia, L.M.},
  \bibinfo{author}{Rothchild, D.}, \bibinfo{author}{So, D.R.},
  \bibinfo{author}{Texier, M.}, \bibinfo{author}{Dean, J.},
  \bibinfo{year}{2022}.
\newblock \bibinfo{title}{The carbon footprint of machine learning training
  will plateau, then shrink}.
\newblock \bibinfo{journal}{Computer} \bibinfo{volume}{55},
  \bibinfo{pages}{18--28}.
\bibitem[{Patterson et~al.(2021)Patterson, Gonzalez, Le, Liang, Munguia,
  Rothchild, So, Texier and Dean}]{RN2}
\bibinfo{author}{Patterson, D.}, \bibinfo{author}{Gonzalez, J.},
  \bibinfo{author}{Le, Q.}, \bibinfo{author}{Liang, C.},
  \bibinfo{author}{Munguia, L.M.}, \bibinfo{author}{Rothchild, D.},
  \bibinfo{author}{So, D.}, \bibinfo{author}{Texier, M.},
  \bibinfo{author}{Dean, J.}, \bibinfo{year}{2021}.
\newblock \bibinfo{title}{Carbon emissions and large neural network training}.
\newblock \bibinfo{journal}{arXiv preprint arXiv:2104.10350} .
\bibitem[{{PJM Interconnection, L.L.C.}(2025)}]{PJM_DataMiner2_2025}
\bibinfo{author}{{PJM Interconnection, L.L.C.}}, \bibinfo{year}{2025}.
\newblock \bibinfo{title}{Pjm data miner\,2: Energy market generation offers
  (supply stack)}.
\newblock
  \bibinfo{howpublished}{\url{https://dataminer2.pjm.com/feed/energy_market_offers}}.
\newblock \bibinfo{note}{Public dataset providing generator offer curves and
  related market data for the PJM Energy Market. Accessed January 2026}.
\bibitem[{Rosenbaum and Rubin(1983)}]{rosenbaum1983central}
\bibinfo{author}{Rosenbaum, P.R.}, \bibinfo{author}{Rubin, D.B.},
  \bibinfo{year}{1983}.
\newblock \bibinfo{title}{The central role of the propensity score in
  observational studies for causal effects}.
\newblock \bibinfo{journal}{Biometrika} \bibinfo{volume}{70},
  \bibinfo{pages}{41--55}.
\newblock \DOIprefix\doi{10.1093/biomet/70.1.41}.
\bibitem[{Roth et~al.(2023)Roth, Sant'Anna, Bilinski and Poe}]{roth2024did}
\bibinfo{author}{Roth, J.}, \bibinfo{author}{Sant'Anna, P.H.},
  \bibinfo{author}{Bilinski, A.}, \bibinfo{author}{Poe, J.},
  \bibinfo{year}{2023}.
\newblock \bibinfo{title}{What's trending in difference-in-differences? a
  synthesis of the recent econometrics literature}.
\newblock \bibinfo{journal}{Journal of Econometrics} \bibinfo{volume}{235},
  \bibinfo{pages}{2218--2244}.
\newblock \DOIprefix\doi{10.1016/j.jeconom.2023.03.008}.
\bibitem[{Sant'Anna and Zhao(2020)}]{santanna2020doubly}
\bibinfo{author}{Sant'Anna, P.H.C.}, \bibinfo{author}{Zhao, J.},
  \bibinfo{year}{2020}.
\newblock \bibinfo{title}{Doubly robust difference-in-differences estimators}.
\newblock \bibinfo{journal}{Journal of Econometrics} \bibinfo{volume}{219},
  \bibinfo{pages}{101--122}.
\newblock \DOIprefix\doi{10.1016/j.jeconom.2020.06.003}.
\bibitem[{Schwartz et~al.(2020)Schwartz, Dodge, Smith and
  Etzioni}]{schwartz2020green}
\bibinfo{author}{Schwartz, R.}, \bibinfo{author}{Dodge, J.},
  \bibinfo{author}{Smith, N.A.}, \bibinfo{author}{Etzioni, O.},
  \bibinfo{year}{2020}.
\newblock \bibinfo{title}{Green ai}.
\newblock \bibinfo{journal}{Communications of the ACM} \bibinfo{volume}{63},
  \bibinfo{pages}{54--63}.
\bibitem[{Shankar and Reuther(2022)}]{RN9}
\bibinfo{author}{Shankar, S.}, \bibinfo{author}{Reuther, A.},
  \bibinfo{year}{2022}.
\newblock \bibinfo{title}{Trends in energy estimates for computing in
  ai/machine learning accelerators, supercomputers, and compute-intensive
  applications}, in: \bibinfo{booktitle}{2022 IEEE High Performance Extreme
  Computing Conference (HPEC)}, \bibinfo{publisher}{IEEE}. pp.
  \bibinfo{pages}{1--8}.
\bibitem[{Shehabi et~al.(2024)Shehabi, Smith, Sartor, Brown, Herrlin, Koomey,
  Masanet, Horner, Azevedo and Lintner}]{shehabi2024}
\bibinfo{author}{Shehabi, A.}, \bibinfo{author}{Smith, S.},
  \bibinfo{author}{Sartor, D.}, \bibinfo{author}{Brown, R.},
  \bibinfo{author}{Herrlin, M.}, \bibinfo{author}{Koomey, J.},
  \bibinfo{author}{Masanet, E.}, \bibinfo{author}{Horner, N.},
  \bibinfo{author}{Azevedo, I.}, \bibinfo{author}{Lintner, W.},
  \bibinfo{year}{2024}.
\newblock \bibinfo{title}{2024 United States Data Center Energy Usage Report}.
\newblock \bibinfo{type}{Technical Report}. Lawrence Berkeley National
  Laboratory.
\newblock \URLprefix
  \url{https://eta-publications.lbl.gov/sites/default/files/2024-12/lbnl-2024-united-states-data-center-energy-usage-report_1.pdf}.
\bibitem[{{Southwest Power Pool, Inc.\ Market Monitoring
  Unit}(2025)}]{spp2024stateofthemarket}
\bibinfo{author}{{Southwest Power Pool, Inc.\ Market Monitoring Unit}},
  \bibinfo{year}{2025}.
\newblock \bibinfo{title}{2024 Annual State of the Market Report}.
\newblock \bibinfo{type}{Technical Report} \bibinfo{number}{2024 Annual State
  of the Market Report}. Southwest Power Pool, Inc.
\newblock \bibinfo{note}{Published May 28, 2025}.
\bibitem[{Strubell et~al.(2019)Strubell, Ganesh and McCallum}]{strubell19}
\bibinfo{author}{Strubell, E.}, \bibinfo{author}{Ganesh, A.},
  \bibinfo{author}{McCallum, A.}, \bibinfo{year}{2019}.
\newblock \bibinfo{title}{Energy and {Policy} {Considerations} for {Deep}
  {Learning} in {NLP}}, in: \bibinfo{booktitle}{Proceedings of the 57th
  {Annual} {Meeting} of the {Association} for {Computational} {Linguistics}},
  \bibinfo{address}{Florence, Italy}. pp. \bibinfo{pages}{3645--3650}.
\bibitem[{Sun et~al.(2021)Sun, Munro, Kalashnov, Du and
  Wager}]{sun2021treatment}
\bibinfo{author}{Sun, H.}, \bibinfo{author}{Munro, E.},
  \bibinfo{author}{Kalashnov, G.}, \bibinfo{author}{Du, S.},
  \bibinfo{author}{Wager, S.}, \bibinfo{year}{2021}.
\newblock \bibinfo{title}{Treatment allocation under uncertain costs}.
\newblock \bibinfo{journal}{arXiv preprint arXiv:2103.11066} .
\bibitem[{Thangam et~al.(2024)Thangam, Muniraju, Ramesh, Narasimhaiah, Khan,
  Booshan, Booshan, Manickam and Ganesh}]{thangam2024impact}
\bibinfo{author}{Thangam, D.}, \bibinfo{author}{Muniraju, H.},
  \bibinfo{author}{Ramesh, R.}, \bibinfo{author}{Narasimhaiah, R.},
  \bibinfo{author}{Khan, N.M.A.}, \bibinfo{author}{Booshan, S.},
  \bibinfo{author}{Booshan, B.}, \bibinfo{author}{Manickam, T.},
  \bibinfo{author}{Ganesh, R.S.}, \bibinfo{year}{2024}.
\newblock \bibinfo{title}{Impact of data centers on power consumption, climate
  change, and sustainability}, in: \bibinfo{booktitle}{Computational
  Intelligence for Green Cloud Computing and Digital Waste Management}.
  \bibinfo{publisher}{IGI Global Scientific Publishing}, pp.
  \bibinfo{pages}{60--83}.
\bibitem[{{U.S. Fire Administration}(2019)}]{usfa2019electrical}
\bibinfo{author}{{U.S. Fire Administration}}, \bibinfo{year}{2019}.
\newblock \bibinfo{title}{Residential Building Electrical Fires (2014-2016)}.
\newblock \bibinfo{type}{Technical Report} \bibinfo{number}{8}. Federal
  Emergency Management Agency.
\newblock \URLprefix
  \url{https://www.usfa.fema.gov/downloads/pdf/statistics/v19i8.pdf}.
\bibitem[{{Whisker Labs}(2024)}]{WhiskerLabs_THD_2024}
\bibinfo{author}{{Whisker Labs}}, \bibinfo{year}{2024}.
\newblock \bibinfo{title}{Analysis of Total Harmonic Distortion on the U.S.
  Electric Grid}.
\newblock \bibinfo{type}{Technical Report}. Whisker Labs.
\newblock \URLprefix
  \url{https://www.whiskerlabs.com/analysis-of-total-harmonic-distortion-on-the-us-electric-grid/}.
  \bibinfo{note}{data collected from the Ting Sensor Network (Whisker Labs
  power quality dataset)}.
\bibitem[{Wu et~al.(2022)Wu, Raghavendra, Gupta, Acun, Ardalani, Maeng, Chang,
  Aga, Huang and Bai}]{RN7}
\bibinfo{author}{Wu, C.J.}, \bibinfo{author}{Raghavendra, R.},
  \bibinfo{author}{Gupta, U.}, \bibinfo{author}{Acun, B.},
  \bibinfo{author}{Ardalani, N.}, \bibinfo{author}{Maeng, K.},
  \bibinfo{author}{Chang, G.}, \bibinfo{author}{Aga, F.},
  \bibinfo{author}{Huang, J.}, \bibinfo{author}{Bai, C.}, \bibinfo{year}{2022}.
\newblock \bibinfo{title}{Sustainable ai: Environmental implications,
  challenges and opportunities}.
\newblock \bibinfo{journal}{Proceedings of Machine Learning and Systems}
  \bibinfo{volume}{4}, \bibinfo{pages}{795--813}.

\end{thebibliography}

\newpage

\appendix
\section{Appendix}

\subsection{Proofs of Propositions}\label{app:proofs}

The proofs use one elementary identity for the positive part. For any real $x$,
\begin{equation}
    \pos{x}=x+\pos{-x},\qquad \pos{x}\ge 0,\qquad x\mapsto\pos{x}\ \text{is convex and nondecreasing.}
    \label{eq:posident}
\end{equation}
Throughout, the residual is $r=\pos{\ell-m}$ and the wedge is $W=\E[r]$ (Definition~\ref{def:wedge}). We flag explicitly where a claim is an exact consequence of the primitives and where it is a structural construction inherited from the modeling assumptions of Section~\ref{sec:model}; the propositions are analytical statements about the model and are not estimated.

\begin{proof}[Proof of Proposition~\ref{prop:wedge}]
Apply \eqref{eq:posident} with $x=\ell-m$ and take expectations:
\begin{equation*}
    W=\E\!\left[\pos{\ell-m}\right]=\E[\ell-m]+\E\!\left[\pos{m-\ell}\right]=\big(\mu_\ell-\E[m]\big)+\E\!\left[\pos{m-\ell}\right],
\end{equation*}
which is the identity \eqref{eq:wedge-identity}. Under volumetric matching $\E[m]=\mu_\ell$ the first term vanishes, leaving $W^{\mathrm{annual}}=\E[\pos{m-\ell}]\ge 0$ because the integrand is nonnegative. For strictness, suppose $\Pr(m<\ell)>0$. The constraint $\E[m-\ell]=0$ together with $m-\ell$ taking strictly negative values on a set of positive measure forces $m-\ell$ to take strictly positive values on a set of positive measure (otherwise $\E[m-\ell]<0$); hence $\Pr(m>\ell)>0$ and $\E[\pos{m-\ell}]>0$. Given $W>0$, expected residual emissions $e\,\E[r]=e\,W>0$ for $e>0$, and from \eqref{eq:reliability}, $\Phi=\phi_1\E[r]+\phi_2\Var(r)\ge\phi_1 W>0$ for $\phi_1>0$ since the variance term is nonnegative. Monotonicity in misalignment is immediate: holding $\E[m]=\mu_\ell$ fixed, a mean-preserving spread of $m-\ell$ raises $\E[\pos{m-\ell}]$ by convexity of the positive part (the second-order stochastic dominance argument used again in Proposition~\ref{prop:ai-amplifies}).
\end{proof}

\begin{proof}[Proof of Proposition~\ref{prop:entry}]
By \eqref{eq:cost-of-capital}, $\rho(\theta)=\rho_0+\lambda\,\mathrm{CV}(R_\theta)$ is strictly increasing in the coefficient of variation of revenue for $\lambda>0$. Entry built against unbundled certificates earns merchant revenue $R^{m}$ (Assumption~\ref{ass:risk}), so the relevant risk ranking is $\mathrm{CV}(R^{\BTM})<\mathrm{CV}(R^{\PPA})<\mathrm{CV}(R^{m})$, giving
\begin{equation*}
    \rho(\BTM)<\rho(\PPA)<\rho(\REC),\qquad \rho(\REC)=\rho_0+\lambda\,\mathrm{CV}(R^{m}).
\end{equation*}
Hold the per-MW expected-revenue schedule $\bar R(\cdot)$ common across the PPA and BTM regimes---both underwrite the output of a real plant selling into the same energy value---so that the regimes differ only through the financing premium. Since $\bar R'(\cdot)<0$, the schedule is strictly decreasing and its inverse $\bar R^{-1}$ is strictly decreasing. Applying \eqref{eq:entry}, $K_\theta=\bar R^{-1}\!\big(\rho(\theta)\,\kappa\big)$, and the ordering of the arguments $\rho(\BTM)\kappa<\rho(\PPA)\kappa$ reverses to
\begin{equation*}
    K^{\BTM}=\bar R^{-1}\!\big(\rho(\BTM)\kappa\big)>\bar R^{-1}\!\big(\rho(\PPA)\kappa\big)=K^{\PPA}.
\end{equation*}
For unbundled certificates the additionality channel is distinct from the risk channel: the certificate rent accrues to inframarginal, already-built capacity, so the expected revenue schedule \emph{faced by a marginal entrant built for certificates} lies (weakly) below $\rho(\REC)\,\kappa$ over the relevant range, and the zero-profit condition \eqref{eq:entry} is satisfied only at $K^{\REC}\approx 0$. Hence $K^{\BTM}>K^{\PPA}>K^{\REC}\approx 0$. The collapse of $K^{\REC}$ is a consequence of the inframarginality (weak-additionality) assumption recorded in Section~\ref{subsec:instruments}, not of the risk ordering alone.
\end{proof}

\begin{proof}[Proof of Proposition~\ref{prop:ai-amplifies}]
Fix the delivery profile $m$ and write $r=\pos{\ell_0+\Delta-m}$.

\emph{(i) Scale.} For $t\ge 1$, replacing $\Delta$ by $t\Delta\ge\Delta$ (pointwise, since $\Delta\ge 0$) gives $\ell_0+t\Delta-m\ge \ell_0+\Delta-m$ in every state, and because $x\mapsto\pos{x}$ is nondecreasing the residual rises pointwise; taking expectations, $W$ is nondecreasing in the scale of $\Delta$.

\emph{(ii) Mean-preserving spread.} For fixed $m$, the map $\ell\mapsto\pos{\ell-m}$ is convex. If $\Delta'$ is a mean-preserving spread of $\Delta$, then $\ell'=\ell_0+\Delta'$ is a mean-preserving spread of $\ell=\ell_0+\Delta$, and by the Rothschild--Stiglitz characterization the expectation of every convex function weakly increases, so $W'=\E[\pos{\ell'-m}]\ge \E[\pos{\ell-m}]=W$. For the variance term, restrict attention to the net-load region $\{\ell>m\}$, where $r=\ell-m$ moves one-for-one with $\ell$; on that region a spread of $\ell$ transmits directly to $r$, so $\Var(r)$ strictly increases whenever $\Pr(\ell>m)\in(0,1)$. The reliability cost \eqref{eq:reliability} then rises through $\phi_2\Var(r)$, and the price externality rises through the variance of net load entering the affine fossil supply.

\emph{(iii) Alignment.} Decompose the deficit contribution by availability. Writing the residual as $\pos{\ell_0+\Delta-m}$ and noting that delivery $m$ is increasing in availability $a$ under any plant-backed instrument (e.g.\ $m^{\PPA}=\min(K_{\PPA}a,\ell)$), a reduction in $\Cov(\Delta,a)$ shifts the mass of incremental load $\Delta$ toward low-availability (hence low-$m$) states, raising $\E[\pos{\ell-m}]=W$.

\emph{Super-proportionality.} Increasing ``peak intensity'' at fixed mean energy is a mean-preserving spread, which by (ii) and the convexity of the positive part raises grid impact more than a proportional increase in mean load would: the marginal cost of an additional unit of energy delivered as a spike exceeds that of a unit delivered as baseload. This is the analytical content of the claim that spiky AI load is disproportionately costly to the grid.
\end{proof}

\begin{proof}[Proof of Proposition~\ref{prop:colocation}]
Energy adequacy $K^{\BTM}\bar a\ge\mu_\ell$ ensures that total on-site generation over the cycle weakly exceeds total load. If storage $S$ is sufficient to buffer the intra-cycle mismatch---that is, to carry surplus from over-generation states to deficit states subject to the energy-balance constraint of Section~\ref{subsec:model-environment}---then behind-the-meter delivery can be scheduled so that $m^{\BTM}(\omega)\ge\ell(\omega)$ for almost every $\omega$ up to the storage limit. The operator's grid residual is the grid-facing withdrawal $\pos{\ell-m^{\BTM}}$, which is then zero almost surely, so $W^{\BTM}=\E[\pos{\ell-m^{\BTM}}]\to 0$ as storage adequacy is approached (with the round-trip-efficiency correction tightening the required $K^{\BTM}\bar a$ but not altering the limit). With the residual removed, $\E[r]\to 0$ and $\Var(r)\to 0$, so the local reliability contribution $\Phi=\phi_1\E[r]+\phi_2\Var(r)\to 0$.

The sign reversal is a modeling consequence of adding local voltage support, not a further probabilistic fact. Augment the local power-quality response with a support term, $\Phi^{\mathrm{loc}}=\phi_1\E[r]+\phi_2\Var(r)-\psi\,\E[\Delta]$, where $\psi\ge 0$ measures the voltage and ramp support that synchronous or grid-forming on-site capacity supplies by absorbing the spikes $\Delta$ behind the meter. As $\E[r],\Var(r)\to 0$, $\Phi^{\mathrm{loc}}\to-\psi\,\E[\Delta]\le 0$, so the local effect flips from degradation ($\Phi>0$) to neutral-or-improving ($\Phi^{\mathrm{loc}}\le 0$). This is the mechanism whose empirical counterpart is the reversal of the local power-quality coefficient reported in Section~\ref{subsec:power-quality-results}; the model rationalizes the sign change but, as emphasized there, the magnitude is an estimate subject to the stated caveats.
\end{proof}

\begin{proof}[Proof of Proposition~\ref{prop:spatial}]
\emph{(a) Distribution across nodes.} Split an aggregate incremental load $\Delta$ symmetrically across $N$ nodes, $\Delta_i=\Delta/N$, and let $r_i$ denote the resulting local residual and $R=\sum_i r_i$ the aggregate residual, with local reliability cost $\Phi(\cdot)$ convex and $\Phi(0)=0$. Convexity through the origin gives subhomogeneity, $\Phi(r/N)\le \Phi(r)/N$ for $N\ge 1$, hence
\begin{equation*}
    \sum_{i=1}^{N}\Phi(r_i)=N\,\Phi\!\left(\tfrac{R}{N}\right)\le \Phi(R),
\end{equation*}
so distributing the load weakly lowers total reliability cost relative to concentrating it at one node, strictly when $\Phi$ is strictly convex and $R>0$. The same convexity applied to the variance channel shows per-node $\Var(r_i)$ falls as the per-node spike magnitude scales down with $N$.

\emph{(b) Relocation to better-aligned regions.} The wedge $W=\E[\pos{\ell-m}]$ is decreasing in the comovement of delivery and load: raising $\Cov(\ell,a)$ (equivalently, siting where renewable availability better tracks the round-the-clock compute profile) increases the share of load met in-state by plant-backed delivery $m$, lowering $\E[\pos{\ell-m}]$. Independently, the price impact $\%\Delta p=\%\Delta Q^d/\varepsilon^s$ is decreasing in the local supply elasticity $\varepsilon^s$; relocating to a region with greater headroom---a flatter $P(\cdot)$ and larger $\varepsilon^s$---lowers the price externality for a given incremental net load $\%\Delta Q^d$. Both margins reduce grid impact, which is the spatial content tested through the geographic heterogeneity in Section~\ref{Sec::Results}.
\end{proof}

\subsection{Total Harmonic Distortion}
\begin{wrapfigure}{r}{0.3\linewidth}
    \centering
    \vspace{-18pt}
    \includegraphics[width=\linewidth]{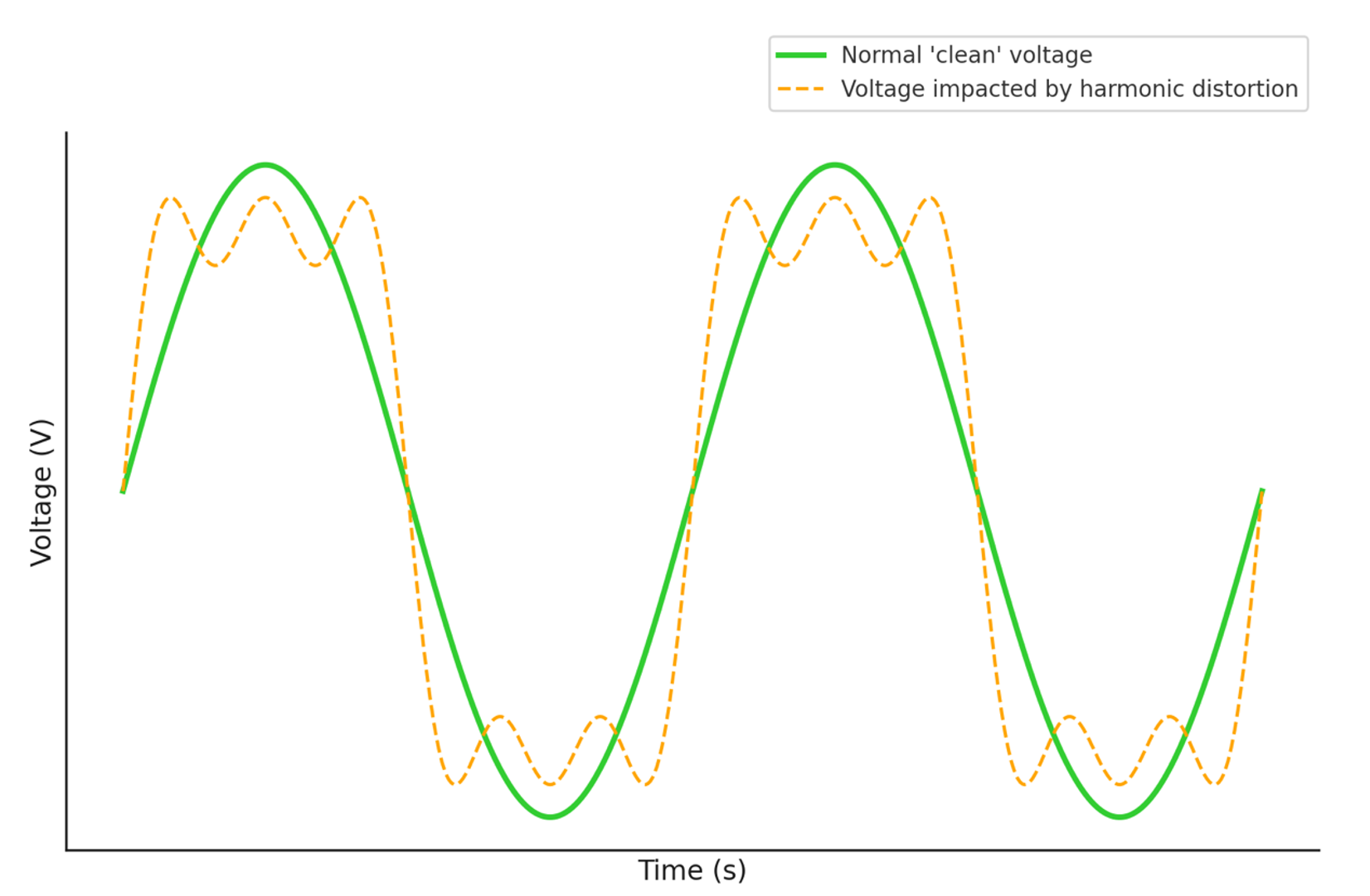}
    \caption{Total harmonic distortion representation.}
    \label{fig:THD}
    \vspace{5pt}
\end{wrapfigure}
Figure~\ref{fig:THD} illustrates the distortions in the power frequencies. The green lie shows the \emph{clear} voltage curve, which is cleanly sinusoidal, whereas with huge loads can produce harmonic distortions shown as the dotted yellow curve. This \emph{dirtier} voltage can adversely affect the operation and lifetime of household appliances and other products that use electricity.

\subsection{Expanded Dataset Description}

Below, maps can be found providing more detail on the data center dataset alongside tables on the datasets and the dataset sample selection criteria.

\begin{figure}[h]
    \centering
    \includegraphics[width=\linewidth]{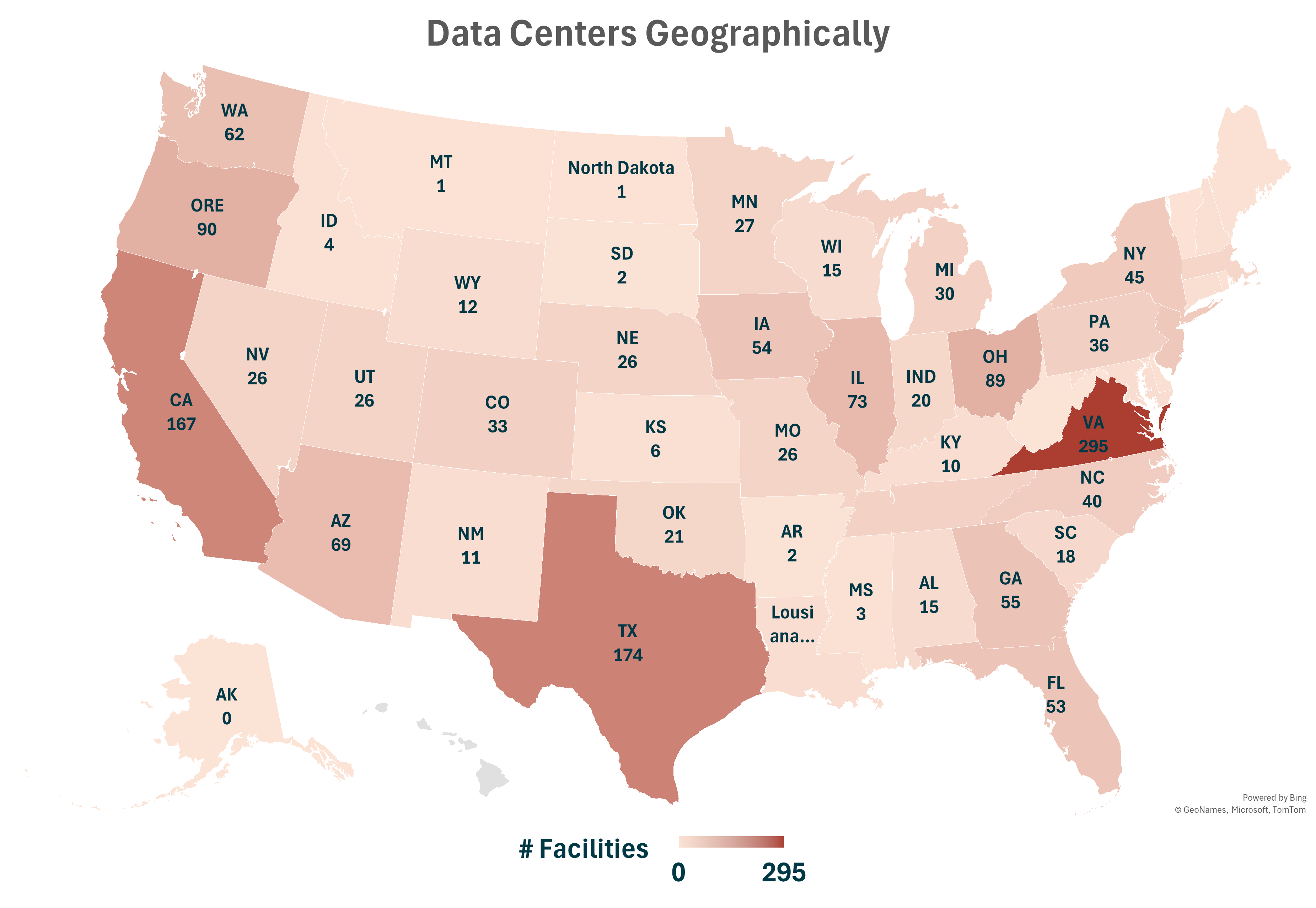}
    \caption{Data centers by state.}
    \label{fig:placeholder}
\end{figure}

\begin{figure}[h]
    \centering
    \includegraphics[width=\linewidth]{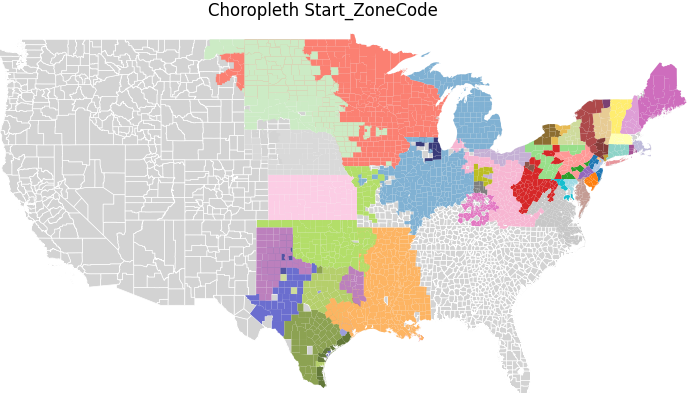}
    \caption{Map of zones for demand model.}
    \label{fig:MapofZonesRegular}
\end{figure}

\begin{figure}[h]
    \centering
    \includegraphics[width=\linewidth]{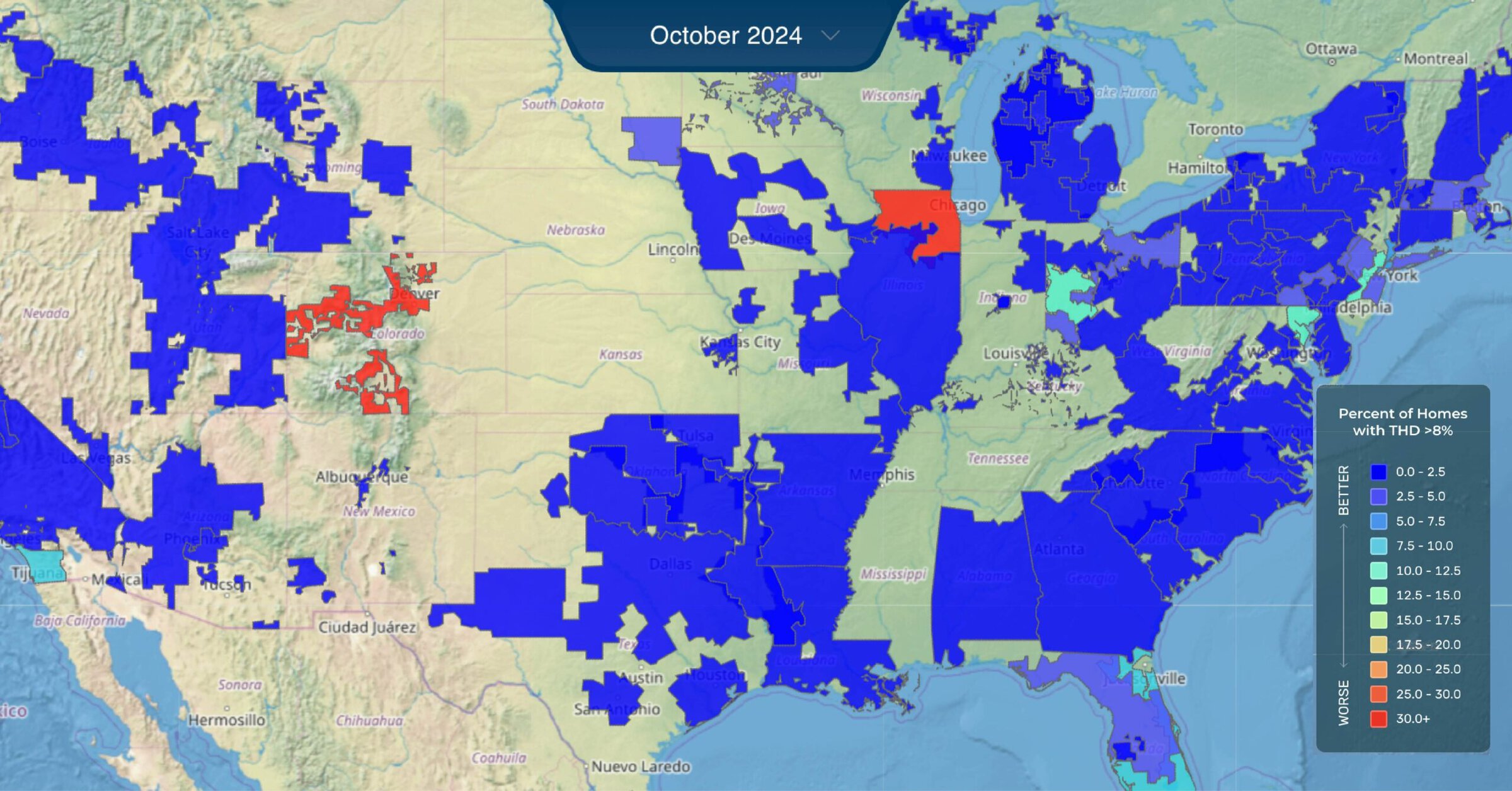}
    \caption{Map of Retail Electric Utilities for Whisker Labs Data.}
    \label{fig:placeholder}
\end{figure}

\begin{figure}
    \centering
    \includegraphics[width=0.75\linewidth]{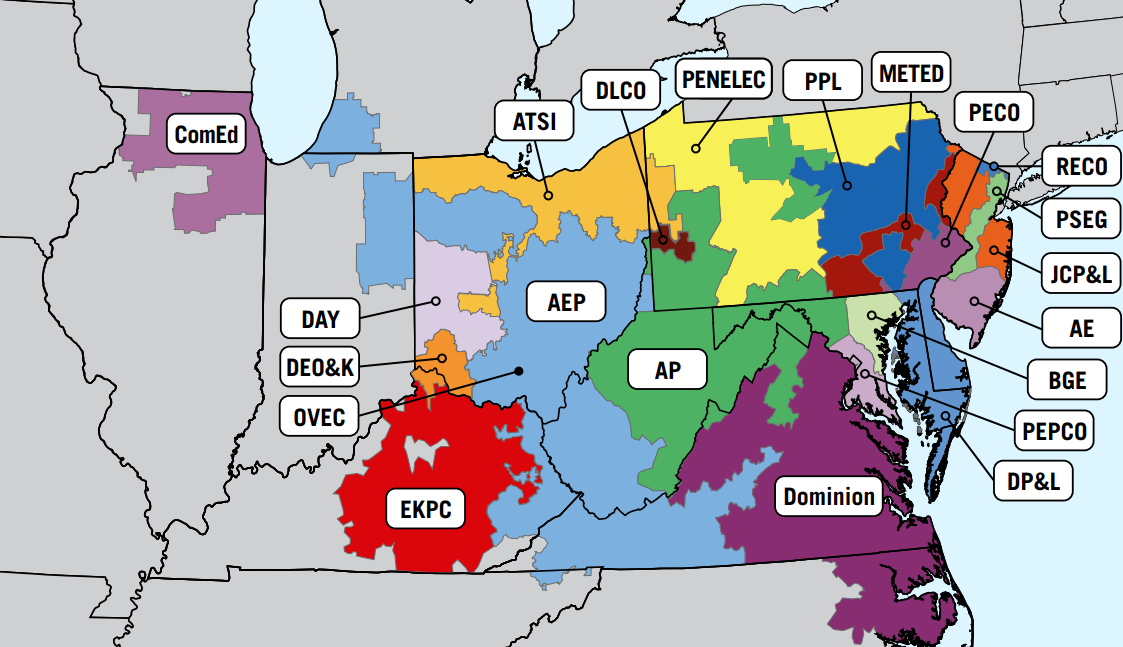}
    \caption{Map of PJM by Zones.}
    \label{fig:placeholder}
\end{figure}

\newpage
\clearpage
\begin{sidewaystable}[p]
\centering
\caption{Expanded description of data sources (for appendix).}
\label{tab:DataSourcesExpanded}

\begin{adjustbox}{max width=\textheight} 
\begin{threeparttable}
\begin{tabular}{p{0.18\textheight} p{0.55\textheight} p{0.25\textheight}}
\toprule
\textbf{Source} & \textbf{Content and Granularity} & \textbf{Use in Analysis} \\
\midrule
\multicolumn{3}{c}{\textbf{Q1: Difference-in-Differences (Power Quality)}} \\
\midrule
Aterio & Data center locations, establishment/expansion/retirement dates, ownership, and operational capacity (MW). Data are available at the facility level and updated annually. & Defines treatment regions; links AI model releases to hyperscaler-owned centers. \\
Whisker Labs & Consumer Power Quality Index (CPQI, composite measure of surges, sags, brownouts, interruptions) and Total Harmonic Distortion (THD, waveform distortion). Provided at county level, hourly frequency. & Outcome variables for DiD regressions. \\
Meteostat & Hourly temperature and precipitation, aggregated at the county level. Coverage includes all U.S. counties with weather stations. & Controls for demand fluctuations and renewable generation variability. \\
ISOs & Geographic boundary shapefiles and wholesale price data at zonal level. Boundaries digitized from ISO-provided maps; prices available hourly. & Market alignment and regional controls. \\
EIA & Retail service territory shapefiles, at utility level, updated periodically. & Used to define treatment/control boundaries and reconcile geographies. \\
\midrule
\multicolumn{3}{c}{\textbf{Q2: Instrumental Variables (Fossil Demand)}} \\
\midrule
EPA CAMPD & Hourly generator-level demand (MWh) and unit-specific heat rates (Btu/kWh). Coverage 2021--2023. & Generator demand is outcome; heat rates serve as instruments for prices. \\
S\&P Capital IQ & Daily natural gas price series, Henry Hub benchmark. National coverage. & Instrument for price and fuel-cost control. \\
Aterio & Generator proximity to data centers owned by releasing companies, matched at 20-mile radius. & Defines treatment generators in IV design. \\
Meteostat & Hourly temperature and precipitation, as above. & Controls for demand and renewable variability. \\
ISOs & Wholesale price series, zonal level, hourly frequency. & Market-level controls. \\
\midrule
\multicolumn{3}{c}{\textbf{Q3: Counterfactual Analyses (Scaling, Efficiency,  \& Tech Development)}} \\
\midrule
Whisker Labs & CPQI and THD (as above). Used to establish baseline deterioration in power quality. & Benchmark outcomes for scaling projections. \\
Epoch & AI model metadata: release dates, FLOPs, parameter counts. Public dataset with model-level detail. & Used for scaling regressions and efficiency counterfactuals. \\
GPU Experiments & Energy use measured on NVIDIA A6000 GPUs. Experiments run with batch size 4, sequence length 1024, and 200 inferences. Each bar in results represents total energy; lighter portion indicates communication overhead. Data collected in-house on 8xA6000 cluster. & Provides empirical GPU energy baselines for counterfactual scaling exercises. \\
\bottomrule
\end{tabular}
\end{threeparttable}
\end{adjustbox}
\end{sidewaystable}
\clearpage

\newpage
\clearpage

\setlength{\rotFPtop}{0pt plus 1fil}
\setlength{\rotFPbot}{0pt plus 1fil}

\begin{sidewaystable}[p] 
\centering
\setlength{\tabcolsep}{3pt}
\renewcommand{\arraystretch}{0.95}

\caption{Dataset sample selection and summary statistics}
\label{tab:SampleSelection}

\begin{threeparttable}
\footnotesize
\begin{tabularx}{\linewidth}{lXl} 
\toprule
\textbf{Dataset} & \textbf{Content / Summary Statistics} & \textbf{Sample Selection} \\
\midrule
Aterio & Data center locations, ownership, and capacity; mapped to ISO zones, EIA service territories, and counties & Hyperscaler data centers (Meta, Microsoft, Amazon, Google; Anthropic via Google) \\
\midrule
Whisker Labs & Reliability measures: CPQI ($N=2{,}683$, mean=0.52, s.d.=0.42); THD (mean=1.81, s.d.=6.77) & 72 retail electric utilities, monthly data 2022--2025 \\
\midrule
CAMPD & Generator-level demand: mean 218 MW (s.d.=158); operating times $\approx 1$ & Several thousand fossil generators, hourly data 2021--2023 \\
\midrule
ISOs & Wholesale electricity prices: mean $\$51$/MWh (s.d.=148); zonal boundaries & Deregulated markets in Eastern Interconnection and ERCOT \\
\midrule
Weather (Meteostat) & Temperature (mean=17°C, s.d.=11.4); precipitation (mean=0.12 mm, s.d.=0.91) & Matched by county/zone, hourly or daily resolution \\
\midrule
External Controls & Natural gas prices, ISO market data, geography/time harmonization & Used for both DiD and IV regressions \\
\bottomrule
\end{tabularx}

\begin{tablenotes}
\scriptsize
\item Notes: Time series are harmonized at hourly or monthly resolution depending on source. Unified panel supports both difference-in-differences (power quality) and instrumental variable (demand) regressions. Maps of sample regions provided in the appendix.
\end{tablenotes}
\end{threeparttable}

\end{sidewaystable}
\clearpage
\newpage

\subsection{Dataset Creation Diagram}

\begin{figure}[htbp]
\centering
\begin{tikzpicture}[
    scale=0.72,
    transform shape,
    node distance=0.48cm and 0.7cm,
    source/.style={
        cylinder,
        shape border rotate=90,
        draw=#1!60!black,
        fill=#1!12,
        text centered,
        minimum height=1.55em,
        minimum width=1.6cm,
        aspect=0.35,
        font=\tiny,
        align=center
    },
    source/.default=green,
    process/.style={
        rectangle,
        draw=blue!50!black,
        fill=blue!10,
        text centered,
        rounded corners=2pt,
        minimum height=1.35em,
        minimum width=1.8cm,
        font=\tiny,
        align=center
    },
    model/.style={
        rectangle,
        draw=#1!60!black,
        fill=#1!12,
        thick,
        text centered,
        rounded corners=3pt,
        minimum height=1.85em,
        minimum width=2.4cm,
        font=\scriptsize\bfseries,
        align=center
    },
    model/.default=red,
    arrow/.style={
        -{Stealth[scale=0.65]},
        semithick,
        color=gray!70!black
    },
    directarrow/.style={
        -{Stealth[scale=0.55]},
        color=blue!50!black,
        thin
    },
    grouplabel/.style={
        font=\tiny\bfseries,
        color=gray!70!black
    },
    annot/.style={
        font=\scriptsize,
        color=gray!60!black
    }
]


\node[source=green!80!black] (aterio) {Aterio};

\node[source=teal, below=0.35cm of aterio] (whisker) {Whisker Labs};

\node[source=blue!70!black, below=0.35cm of whisker] (campd) {EPA CAMPD};
\node[source=blue!70!black, below=0.26cm of campd] (spiq) {S\&P Capital IQ};
\node[source=blue!70!black, below=0.26cm of spiq] (isos) {ISOs};

\node[source=cyan!70!black, below=0.35cm of isos] (eia) {EIA};

\node[source=blue!60!green, below=0.35cm of eia] (weather) {Meteostat};

\node[source=purple, below=0.35cm of weather] (epoch) {Epoch};


\node[process, right=1.5cm of aterio, yshift=-0.3cm] (treat_def) {Treatment\\Definition};

\node[process, right=1.5cm of whisker, yshift=-0.2cm] (outcome_pq) {Power Quality\\Outcomes};

\node[process, right=1.5cm of campd, yshift=-0.3cm] (fossil_dem) {Fossil Demand\\Outcomes};

\node[process, right=1.5cm of isos, yshift=0.1cm] (instruments) {Instruments \&\\Controls};

\node[process, right=1.5cm of eia] (territory) {Service Territory\\Mapping};

\node[process, right=1.5cm of epoch] (scaling) {Scaling\\Regressions};


\node[model=orange, right=3.2cm of outcome_pq, yshift=0.8cm] (q1) {Q1: Difference-in-\\Differences};

\node[model=red, below=0.7cm of q1] (q2) {Q2: Instrumental\\Variables};

\node[model=violet, below=0.7cm of q2] (q3) {Q3: Counterfactual\\Analyses};


\draw[arrow] (aterio.east) -- (treat_def.west);

\draw[arrow] (whisker.east) -- (outcome_pq.west);

\draw[arrow] (campd.east) -- (fossil_dem.west);

\draw[arrow] (spiq.east) -- ++(0.35,0) |- (instruments.170);
\draw[arrow] (isos.east) -- (instruments.west);
\draw[arrow] (weather.east) -- ++(0.35,0) |- (instruments.190);

\draw[arrow] (eia.east) -- (territory.west);

\draw[arrow] (epoch.east) -- (scaling.west);


\draw[arrow] (treat_def.east) -- ++(0.2,0) |- (q1.170);
\draw[arrow] (outcome_pq.east) -- ++(0.3,0) |- (q1.west);
\draw[arrow] (territory.east) -- ++(0.4,0) |- (q1.210);

\draw[arrow] (treat_def.east) -- ++(0.5,0) |- (q2.130);
\draw[arrow] (fossil_dem.east) -- ++(0.2,0) |- (q2.170);
\draw[arrow] (instruments.east) -- ++(0.15,0) |- (q2.west);

\draw[arrow] (outcome_pq.east) -- ++(0.6,0) |- (q3.130);
\draw[arrow] (scaling.east) -- ++(0.2,0) |- (q3.west);

\draw[directarrow] (aterio.east) -- ++(5.3,0) |- (q2.50);
\draw[directarrow] (whisker.east) -- ++(5.5,0) |- (q3.50);

\begin{scope}[on background layer]
    \node[fit=(aterio)(epoch), 
          fill=gray!7, 
          rounded corners=3pt, 
          draw=gray!30,
          inner sep=4pt] (rawbox) {};
    \node[grouplabel, above=0.02cm of rawbox.north] {Data Sources};
    
    \node[fit=(treat_def)(scaling)(territory)(instruments), 
          fill=blue!5, 
          rounded corners=3pt, 
          draw=blue!20,
          inner sep=5pt] (constructbox) {};
    \node[grouplabel, above=0.02cm of constructbox.north] {Variable Construction};
    
    \node[fit=(q1)(q2)(q3), 
          fill=red!5, 
          rounded corners=3pt, 
          draw=red!20,
          inner sep=5pt] (modelbox) {};
    \node[grouplabel, above=0.02cm of modelbox.north] {Estimation};
\end{scope}

\node[font=\fontsize{4}{5}\selectfont, color=gray!50!black, below=0.08cm of rawbox.south, text width=2.6cm, align=center] 
    {3,862 data centers\\22M generator-hours\\2,684 utility-months};

\node[annot, right=0.08cm of q1.east] {\tiny Power Quality};
\node[annot, right=0.08cm of q2.east] {\tiny Fossil Demand};
\node[annot, right=0.08cm of q3.east] {\tiny AI Technical Evolution};

\end{tikzpicture}
\caption{Data sources and analysis pipeline. Data sources flow through variable construction into three estimation strategies addressing power quality impacts (Q1), fossil generation (Q2), and counterfactual scaling scenarios (Q3).}
\label{fig:DataDiagram}
\end{figure}

\subsection{Calculation of Comparisons}

We calculate the household average energy usage based on data provided by \cite{USHouseholdConsumption}. We simply divide our estimates by the provided numbers to get the number of households for our comparisons.

\newpage

\subsection{Specifically Verified Models}
\label{appendix:verified-models}
Below is a subset of our dataset for specifically verified models to provide an example of the data:

\begin{table}[htbp]
\centering
\caption{Documented Training Locations (All Confirmed)}
\label{tab:training_locations}
\small
\begin{tabular}{@{}lll@{}}
\toprule
Model & Location & Source \\
\midrule
GPT-4 & W. Des Moines, IA & \cite{smith2023iowa} \\
PaLM, LaMDA, MUM & Mayes County, OK & \cite{googlecloud2023mlhub} \\
Megatron-Turing NLG & Santa Clara, CA & \cite{nvidia2021megatron} \\
\bottomrule
\end{tabular}
\end{table}

\begin{table}[htbp]
\centering
\caption{Verified Model Dates}
\label{tab:verified_model_dates}
\small
\begin{tabular}{@{}llll@{}}
\toprule
Model & Type & Date & Source \\
\midrule
GPT-3 & API & 2021-11-18 & \cite{openai_api_nowaitlist_2021} \\
Claude 2.1 & API & 2023-11-21 & \cite{anthropic_claude2_1_2023} \\
PaLM API & API & 2023-03-14 & \cite{google_palm_api_makersuite_2023} \\
Llama 2 & API & 2023-07-18 & \cite{meta_llama2_2023} \\
Code Llama & Train & 2023-01--07\textsuperscript{*} & \cite{meta_code_llama_paper_2023} \\
\bottomrule
\multicolumn{4}{@{}l@{}}{\footnotesize \textsuperscript{*}Date range (month precision)} \\
\end{tabular}
\end{table}

\subsection{PJM price Elasticities}\label{PriceElasticityAppendix}

\begin{figure}[!htbp]
    \centering
    \includegraphics[width=0.5\linewidth]{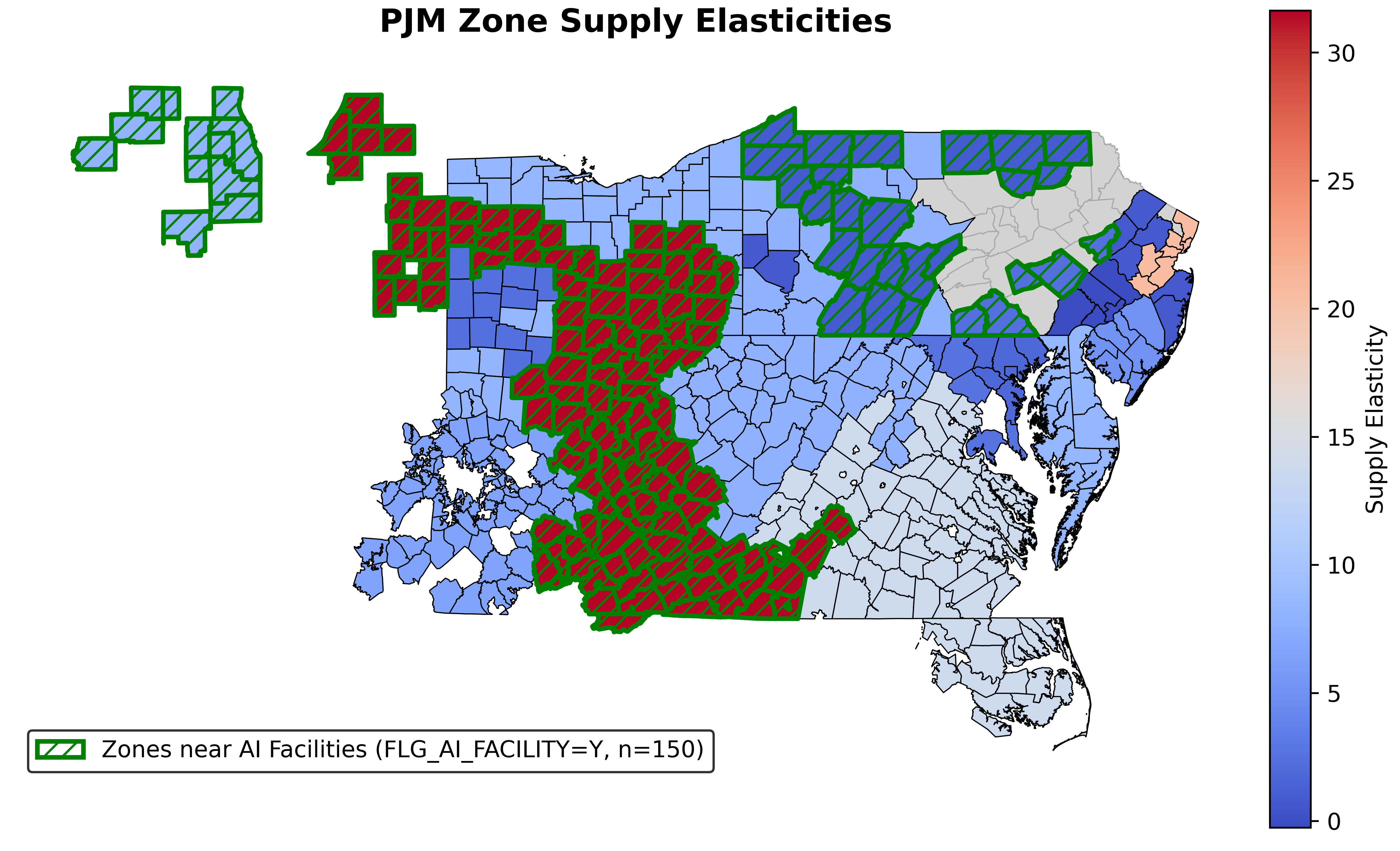}
    \caption{PJM Price Elasticity by Zone}
    \label{PJMPRiceElasticity}
\end{figure}

\subsection{Treatment Definition and Data Linkage}
\label{subsec:treatment}

Our treatment relies on linking AI model releases to geographic regions through a chain of observable connections: models to companies, companies to data center locations, and data center locations to electricity service territories. We describe each link and its limitations.

\subsubsection{Identifying AI Data Centers}

We leverage a recently released variable from Aterio that directly classifies data centers by their primary use case, including a designation for AI-focused facilities. This classification improves upon prior approaches that treated all data centers homogeneously, as AI workloads---particularly large-scale model training---exhibit distinct load profiles characterized by sustained high utilization and substantial power draws that differ markedly from traditional cloud computing or storage operations.

We link data centers to electricity service territories at two geographic resolutions:
\begin{itemize}
    \item \textbf{Utility service territory:} For power quality analysis, we match AI data centers to retail electric utility boundaries. A utility is designated as treated if at least one AI data center operates within its service territory.
    \item \textbf{Generator proximity:} For the fossil generation analysis, we match data centers to generators by picking generators within a 20-mile radius of an in-sample data center. This finer resolution exploits the greater spatial granularity of EPA generator-level data.
\end{itemize}

\subsubsection{Linking Models to Data Centers}

We construct a crosswalk from AI model releases to data center locations through company ownership. For each major model release in our sample, we identify the developing company and link to all AI-classified data centers operated by or contracted to that company. 

\paragraph{Identifying Assumption and Limitations.} Our approach assumes that training and inference activity for a given model occurs at data centers associated with the releasing company. This assumption is imperfect for several reasons:
\begin{enumerate}
    \item Training may occur at third-party cloud facilities not captured in our data center inventory.
    \item Inference workloads may be distributed across facilities differently than training workloads.
\end{enumerate}

We interpret our estimates as capturing the \textit{average} effect across a company's AI data center footprint, which represents a lower bound on the localized effect at the actual training site if activity is concentrated, or an upper bound if we misattribute activity to facilities that were not involved. We probe robustness to this measurement error through several approaches described in Section~\ref{subsec:robustness}.

For a subset of models where public information permits more precise dating, we supplement our treatment timing with verified training and inference windows:

\subsubsection{Instrumental Variables Specification}\label{IVDetails}

The learned DiD coefficients capture the relationship between AI model releases and generator output, but this relationship may be confounded by supply-side shocks. For example, if AI model releases coincide with changes in natural gas prices or generator availability that independently affect electricity production, our estimates would conflate demand-side and supply-side effects. To isolate the demand-side effects, we instrument for electricity prices using the interaction of generator-level heat rates and regional natural gas prices. This instrumentation strategy exploits cross-sectional variation in generators' fuel efficiency: conditional on natural gas price movements, generators with higher heat rates (lower efficiency) face larger marginal cost shocks, allowing us to trace out the demand curve.

\paragraph{First-Stage and Structural Estimates.} The first stage regresses electricity prices on the heat rate--gas price interaction. The structural equation then uses predicted prices to estimate price responsiveness. The estimated price elasticity from the structural equation is 0.135 (SE = 0.013, p < 0.0001), indicating economically meaningful and precisely estimated demand responsiveness. As a specification check, we estimate the first stage separately for each of 5 model-specific regressions; the mean price coefficient is $-0.331$ (SD = 0.005, range: $-0.336$ to $-0.323$). The tight clustering of estimates across specifications suggests stable demand elasticity across AI model events and supports the validity of our instrumentation strategy.

\subsection{Wholesale-to-retail Passthrough Estimates}\label{WholesaleRetailPassthrough}

\begin{table}[!htbp]
\centering
\caption{Wholesale-to-Retail Price Pass-Through in Deregulated Markets}
\label{tab:passthrough}
\scriptsize
\begin{tabular}{@{}llcc@{}}
\toprule
\textbf{Study} & \textbf{ISO} & \textbf{PT} & \textbf{Class} \\
\midrule
\multicolumn{4}{@{}l}{\textit{Panel A: Wholesale $\to$ Retail Pass-Through}} \\
\addlinespace[0.3em]
\citet{defeuilley2023pennsylvania} & PJM (PA) & 0.53 & Res. (EDC) \\
\citet{defeuilley2023pennsylvania} & PJM (PA) & 0.56 & Res. (EGS) \\
\citet{brown2020texas} & ERCOT & 0.43--0.47 & Res. \\
\citet{mackay2024wholesale} & Multi & $\approx$1.0$^b$ & All \\
\midrule
\multicolumn{4}{@{}l}{\textit{Panel B: Demand Elasticity (Residential)}} \\
\addlinespace[0.3em]
\citet{deryugina2020longrun} & PJM (IL) & $-0.09$ & SR (1 mo.) \\
\citet{deryugina2020longrun} & PJM (IL) & $-0.27$ & MR (2 yr.) \\
\citet{deryugina2020longrun} & PJM (IL) & $-0.35$ & LR (10 yr.) \\
\bottomrule
\end{tabular}

\vspace{0.3em}
\raggedright
\scriptsize
\textit{Notes:} PT = pass-through coefficient. EDC = Electric Distribution Company; EGS = Electric Generation Supplier. SR/MR/LR = short/medium/long-run. \\
$^b$Long-run; \$9.59/MWh cost $\to$ \$8.81/MWh retail.
\end{table}


\subsection{Propensity Score Matching}\label{APP:PSM}

Here, we describe the propensity score matching (PSM) and inverse probability weighting (IPW) methodology used to enhance the robustness of our difference-in-differences (DiD) estimates. These methods address potential selection bias arising from non-random placement of AI data centers.

Standard DiD estimation relies on the parallel trends assumption: absent treatment, treated and control units would have followed similar trajectories. However, AI data center locations are chosen strategically based on factors such as electricity prices, grid reliability, and climate conditions. If power plants near AI data centers systematically differ from control plants in observable characteristics correlated with electricity demand trends, the parallel trends assumption may be violated.

Propensity score methods address this concern through three mechanisms: (i) balancing covariates to create comparison groups similar on observable characteristics; (ii) enforcing common support to ensure we only compare units that could plausibly receive either treatment status; and (iii) reducing model dependence by making estimates less sensitive to functional form assumptions \citep{rosenbaum1983central, hirano2003efficient}.

The propensity score is the conditional probability of receiving treatment given observed covariates:
\begin{equation}
    e(\mathbf{X}_i) = \Pr(T_i = 1 \mid \mathbf{X}_i)
\end{equation}
where $T_i$ is a treatment indicator equal to one for observations near an AI data center in the post-treatment period, and $\mathbf{X}_i$ is a vector of pre-treatment covariates.

We estimate propensity scores using logistic regression:
\begin{equation}
    \log\left(\frac{e(\mathbf{X}_i)}{1 - e(\mathbf{X}_i)}\right) = \beta_0 + \boldsymbol{\beta}'\mathbf{X}_i
\end{equation}

The covariates included in the propensity score model are: pre-trend load growth, ambient temperature (°F), dew point (°F), precipitation (inches), plant heat rate (BTU/kWh), and natural gas price (\$/MMBtu). Temperature and dew point capture weather-driven demand variation and cooling load requirements. Precipitation provides additional weather controls. Heat rate proxies for plant efficiency and technology type. Natural gas price captures fuel cost variation affecting dispatch decisions.

We restrict the sample to observations with propensity scores in the common support region $[0.1, 0.9]$:
\begin{equation}
    \mathcal{S} = \{i : 0.1 \leq e(\mathbf{X}_i) \leq 0.9\}
\end{equation}

This trimming rule excludes observations with extreme propensity scores---units almost certain to be treated or untreated---for which comparable counterfactuals do not exist \citep{crump2009dealing}. Observations with $e(\mathbf{X}_i) < 0.1$ have no comparable treated units, while those with $e(\mathbf{X}_i) > 0.9$ have no comparable control units. The trimmed sample ensures that treatment effect estimates rely on regions of the covariate space where both treatment statuses are observed.

IPW creates a pseudo-population where treatment assignment is independent of observed covariates by weighting observations inversely to their probability of receiving their actual treatment status \citep{hirano2003efficient}. We employ stabilized weights to reduce variance:
\begin{equation}
    w_i^{\text{stab}} = \begin{cases}
        \displaystyle\frac{\Pr(T=1)}{e(\mathbf{X}_i)} & \text{if } T_i = 1 \\[10pt]
        \displaystyle\frac{1 - \Pr(T=1)}{1 - e(\mathbf{X}_i)} & \text{if } T_i = 0
    \end{cases}
\end{equation}
where $\Pr(T=1)$ is the unconditional (marginal) treatment probability in the sample.

To prevent extreme weights from inducing instability, we cap weights at the interval $[0.01, 100]$. Within each treatment group, weights are then normalized to sum to the group size:
\begin{equation}
    \tilde{w}_i = w_i^{\text{capped}} \times \frac{n_g}{\sum_{j \in g} w_j^{\text{capped}}}
\end{equation}
where $g \in \{\text{treated}, \text{control}\}$ and $n_g$ is the number of observations in group $g$.

Our baseline DiD specification takes the form:
\begin{equation}
    Y_{it} = \alpha + \beta_1 \text{Post}_t + \beta_2 \text{Treat}_i + \delta (\text{Post}_t \times \text{Treat}_i) + \mathbf{X}_{it}'\boldsymbol{\gamma} + \varepsilon_{it}
\end{equation}
where $Y_{it}$ is the outcome of interest (electricity demand), $\text{Post}_t$ indicates the post-treatment period, $\text{Treat}_i$ indicates treatment group membership, and $\delta$ is the treatment effect of interest.

With IPW, we estimate a weighted least squares version:
\begin{equation}
    \hat{\boldsymbol{\theta}}^{\text{IPW}} = \arg\min_{\boldsymbol{\theta}} \sum_{i,t} \tilde{w}_{it} \left(Y_{it} - \mathbf{Z}_{it}'\boldsymbol{\theta}\right)^2
\end{equation}
where $\mathbf{Z}_{it}$ includes all regressors and $\tilde{w}_{it}$ are the normalized IPW weights.

Our primary specification combines IPW with instrumental variables to address both selection bias (via IPW) and price endogeneity (via IV). The instruments for electricity price are natural gas price and zone-average heat rate. The resulting IV-IPW-DiD estimator is implemented via weighted two-stage least squares, following the doubly robust approach of \citet{santanna2020doubly}.

We assess covariate balance using the standardized mean difference (SMD):
\begin{equation}
    \text{SMD}_k = \frac{\bar{X}_{k,\text{treated}} - \bar{X}_{k,\text{control}}}{\sqrt{(s^2_{k,\text{treated}} + s^2_{k,\text{control}})/2}}
\end{equation}
where $\bar{X}_{k,g}$ and $s^2_{k,g}$ denote the sample mean and variance of covariate $k$ in group $g$.

Following standard practice, we consider $|\text{SMD}| < 0.1$ as indicating good balance, $0.1 \leq |\text{SMD}| < 0.25$ as moderate imbalance warranting caution, and $|\text{SMD}| \geq 0.25$ as substantial imbalance \citep{austin2011introduction}. After weighting, we compute weighted SMDs using IPW weights to verify that the reweighting procedure successfully balances covariates across treatment groups.

We implement several robustness checks to validate our PSM-IPW approach.

First, we compare treatment effect estimates from standard (unweighted) DiD against IPW-DiD. Similar estimates across specifications suggest results are robust to selection on observables.

Second, we test sensitivity to trimming thresholds by varying the propensity score bounds (e.g., $[0.05, 0.95]$, $[0.15, 0.85]$, $[0.20, 0.80]$). Stable estimates across thresholds indicate robustness to the choice of common support region.

Third, we conduct a pre-trends test by estimating a specification that includes a placebo pre-treatment indicator. An insignificant coefficient on this term ($p > 0.05$) supports the parallel trends assumption.

Fourth, we replicate the analysis using only confirmed AI training facility locations from verified sources, reducing potential measurement error in treatment assignment.

Fifth, we compare covariate balance statistics before and after weighting. IPW should substantially reduce SMDs across all covariates.

The IPW-DiD estimator provides a causal estimate of the average treatment effect under the following identifying assumptions: (i) conditional independence---treatment assignment is independent of potential outcomes conditional on $\mathbf{X}$; (ii) parallel trends---absent treatment, treated and control units would follow similar outcome trajectories; (iii) common support---there exists positive probability of treatment for all covariate values; (iv) stable unit treatment value assumption (SUTVA)---no spillovers between units; and (v) correct propensity score specification---the logistic regression model is correctly specified \citep{abadie2005semiparametric, callaway2021difference}.

An important limitation is that propensity score methods only control for selection on \emph{observable} characteristics. If unobserved factors influence both data center location decisions and electricity demand trends, our estimates may still be biased. We partially address this concern through instrumental variables estimation, but acknowledge that unobserved confounding cannot be definitively ruled out in observational settings.

\subsection{Additional Robustness Checks}\label{App:AdditionalRobustnessChecks}


\subsubsection{Robustness Checks}
We test result robustness by varying treated population definitions, estimating specifications with alternative geographic boundaries to confirm consistency. This demonstrates findings reflect genuine causal impacts rather than arbitrary boundary choices. While we lack data on which models run at specific centers, our DiD methodology accounts for this by using publication dates as exogenous treatment, utilizing minimal geographic and temporal treatment windows. We plan robustness checks restricting treatment to larger data centers most likely involved in AI training. To verify treatment effects aren't noise-related, we include appendix robustness checks with randomized treatment dates as additional evidence that effects relate to model releases.

\subsubsection{Treatment Intensity}

Rather than binary treatment indicators, we estimate a continuous treatment intensity specification using total electricity demand (MWh) as the treatment variable. The significant coefficient (p < 0.001; THD: $+0.00018$, p < 0.001) indicate dose-response relationships: larger electricity loads are associated with larger power quality changes.

\subsubsection{Confirmed Locations and Verified Dates}

We re-estimate our main specifications using stricter treatment definitions. The confirmed-location analysis restricts treatment to facilities with precisely geocoded training locations (mean effect: 74.19 MWh). The verified-date analysis uses API release timing rather than publication dates from the Epoch database (mean effect: 63.20 MWh). Both analyses corroborate our main findings with expected attenuation from the more conservative treatment definitions.

\subsubsection{Callaway-Sant'Anna Estimator}

As a robustness check and to provide transparent decomposition of treatment effect heterogeneity, we also implement the \citet{callaway2021difference} estimator. Like the stacked approach, CS avoids using already-treated units as controls; however, rather than achieving this through sample construction, it does so by estimating separate ATT(g,t) parameters for each cohort g and time period t, then aggregating these to form summary measures. This approach explicitly avoids problematic comparisons and provides a framework for testing parallel trends through examination of pre-treatment ATT(g,t) estimates. 
\subsubsection{Time-Series Anomaly Detection}

As an internal consistency check, we apply time-series anomaly detection methods to the outcome variables in treated units. If AI model releases causally affect grid outcomes, we should observe detected anomalies clustering around treatment dates. Specifically:
\begin{enumerate}
    \item For each treated unit, we fit a baseline time-series seasonal ARIMA model to the pre-treatment period.
    \item We generate one-step-ahead forecasts and prediction intervals for the post-treatment period.
    \item We flag observations where realized values fall outside the 20\% prediction interval as anomalies.
    \item We test whether anomaly frequency is elevated in the post-treatment window relative to the pre-treatment window and relative to control units.
\end{enumerate}
This approach provides model-free evidence of structural breaks coinciding with treatment timing.

\subsection{Discussions}

\subsubsection{Limitations}

While informative, this approach inherits the limitations of all Difference-in-Differences experimental designs. It relies on the parallel trends assumption, which may be violated if treated and control regions were already diverging or if generators anticipated model releases.

\subsubsection{Why our Estimates are Higher than Previous Estimates}
Our estimates tend to find larger impacts of AI than previous estimates from the computer science literature \cite{strubell19, schwartz2020green}. We attribute this discrepancy to our choice to utilize empirical data on the realization of outcome variables following AI activity. Compared to traditional approaches like PUE, we assert that our approach is better able to consider the full costs of training and inference including both indirect and direct costs and may be more suited to precise and accurate estimation of AI energy usage.





\end{document}